\let\original@algocf@latexcaption\algocf@latexcaption
\long\def\algocf@latexcaption#1[#2]{%
  \@ifundefined{NR@gettitle}{%
    \def\@currentlabelname{#2}%
  }{%
    \NR@gettitle{#2}%
  }%
  \original@algocf@latexcaption{#1}[{#2}]%
}
\newcommand{\ug}{U_>}
\newcommand{\ul}{U_<}
\newcommand{\suff}{\textnormal{suffix}}
\newcommand{\prefix}{\textit{prefix}}
\newcommand{\safe}{\textit{safe}}
\newcommand{\pathto}[2]{#1\stackrel{\pi}{\rightsquigarrow}#2}
\newcommand{\pth}[3]{#1\stackrel{#2}{\rightsquigarrow}#3}
\newcommand{\pathtorho}[2]{#1\stackrel{\rho}{\rightsquigarrow}#2}
\renewcommand{\above}{\textnormal{above}}
\newcommand{\below}{\textnormal{below}}
\renewcommand{\refeq}[1]{(\ref{#1})}
\newcommand{\tmu}[1]{\tilde \mu_{#1}}
\newcommand{\sigr}{\sigma^{\textnormal{right}}}
\newcommand{\sigl}{\sigma^{\textnormal{left}}}
\newcommand{\setmuskip}[2]{#1=#2\relax}
\newenvironment{proofof}[1]{\begin{proof}[\textnormal{\textbf{Proof of #1}}]}{\end{proof}} 
\definecolor{darkgreen}{rgb}{0,0.6,0}
\newcommand{\kibitz}[2]{\ifnum\Comments=1{\color{#1}{#2}}\fi}
\newcommand{\omer}[1]{\kibitz{blue}{[Omer:#1]}}
\newcommand{\RemoveAlgoNumber}{\renewcommand{\fnum@algocf}{\AlCapSty{\AlCapFnt\algorithmcfname}}}
\newcommand{\RevertAlgoNumber}{\algocf@resetfnum}
\newcommand{\mP}{\mathcal{P}}
\newcommand{\mI}{\mathcal{I}}
\newcommand{\mH}{\mathcal{H}}
\newcommand{\mS}{\mathcal{S}}
\newcommand{\mU}{\mathcal{U}}
\newcommand{\ind}{\mathds{1}}
\newcommand{\defeq}{\stackrel{\text{def}}{=}}
\newcommand{\tupbracket}[1]{\left\langle {#1} \right\rangle}
\newtheorem{claim}{Claim}
\newtheorem{assumption}{Assumption}
\newtheorem{definition}{Definition}
\newtheorem{theorem}{Theorem}
\newtheorem{proposition}{Proposition}
\newtheorem{lemma}{Lemma}
\newtheorem{corollary}{Corollary}
\newtheorem{observation}{Observation}
\theoremstyle{definition}
\newtheorem{example}{Example}
\newcommand\bl[1]{\boldsymbol{ #1 } }
\newcommand\abs[1]{\left| #1  \right|}
\DeclareMathOperator*{\argmin}{arg\,min} 
\DeclareMathOperator*{\argmax}{arg\,max} 
\DeclareMathOperator{\E}{\mathbb{E}}
\newcounter{algorithmpolicy}
\newenvironment{algorithmpolicy}[1][htb]{%
  \let\c@algorithm\c@algorithmpolicy
    \renewcommand{\ALG@name}{Policy}
   \begin{algorithm}[#1]%
  }{\end{algorithm}}
\newcommand{\nonl}{\renewcommand{\nl}{\let\nl\oldnl}}
\newcommand{\ise}{\textsc{Ibs-P}}
\newcommand{\ALG}{\small\textnormal{\textsf{ALG}}}
\newcommand{\OPT}{\small\textnormal{\textsf{OPT}}}
\newcommand{\OGP}{\small\textnormal{\textsf{OGP}}}
\newcommand{\SEGB}{\small\textnormal{\textsf{SEGB}}}
\title{Learning under Invariable Bayesian Safety}
\author{
\and Gal Bahar%
\thanks{%
    {Technion - Israeli Institute of Technology (\url{bahar@campus.technion.ac.il})}}
\and Omer Ben{-}Porat%
\thanks{%
    {Technion - Israeli Institute of Technology (\url{omerbp@campus.technion.ac.il}). Contact author.}}
\and Kevin Leyton{-}Brown%
\thanks{%
    {University of British Columbia, Canada (\url{kevinlb@cs.ubc.ca})}}
\and Moshe Tennenholtz%
\thanks{%
    {Technion - Israeli Institute of Technology (\url{moshet@ie.technion.ac.il})}}
}
\begin{document}

\maketitle

\begin{abstract}
A recent body of work addresses safety constraints in explore-and-exploit systems. Such constraints arise where, for example, exploration is carried out by individuals whose welfare should be balanced with overall welfare. In this paper, we adopt a model inspired by recent work on a bandit-like setting for recommendations. We contribute to this line of literature by introducing a safety constraint that should be respected in every round and determines that the expected value in each round is above a given threshold. Due to our modeling, the safe explore-and-exploit policy deserves careful planning, or otherwise, it will lead to sub-optimal welfare. We devise an asymptotically optimal algorithm for the setting and analyze its instance-dependent convergence rate.
\end{abstract}

\section{Introduction}\label{sec:intro}
Explore-and-exploit tradeoffs are central to Machine Learning (ML) and Artificial Intelligence (AI). They are the core problems in well-studied decision-making problems like Markov Decision Processes \cite{bellman1957markovian} and Multi-armed Bandits (MABs) \cite{bubeck2012regret}. Such dilemmas refer to environments in which we can \textit{exploit} the knowledge already acquired and get the outcome that we expect, or \textit{explore} other alternatives and reveal potentially better outcomes. Motivated by the extensive use of algorithmic decision making, exploration under system constraints is becoming increasingly popular. Some of these constraints arise due to fairness considerations \cite{MatthewKearnsMorgensternRothNIPS2016,liu2017calibrated},  the need for privacy \cite{tossou2016algorithms,tossou2017achieving}, mitigating  strategic behavior \cite{Bahar2016,Mansour2015,Mansour2016Slivkins,MansourSW18} and safety (see \citet{garcia2015comprehensive} for a recent survey and taxonomy). In this work, we contribute to the literature on learning under safety concerns.

Since safety in this context is ambiguous, it is no surprise that many definitions and approaches exist. In one line of work, the optimization criterion is transformed to address other factors such as variance \cite{howard1972risk}, the worst outcome \cite{borkar2002q, heger1994consideration}, or the probability of visiting unwanted states \cite{geibel2005risk}. In another line of work, the exploration process itself is constrained to address safety considerations \cite{kazerouni2017conservative,wu2016conservative}.  Further, recent work \cite{amani2019linear,usmanova2019safe} suggests that safety should be applied in \textit{almost} every action algorithms make, despite that the safe set of actions is initially unknown. Of course, since the safe set of actions is learned throughout the process,  safety constraints can only be guaranteed to be satisfied with high probability.


In this paper, we contribute to the latter strand of research. We consider a MAB model under several assumptions, and require what is perhaps the strictest safety constraint that nevertheless allows learning: that the expected reward in every round exceeds some safety threshold, given the Bayesian information available to the algorithm. This safety constraint complements prior work in two important ways. First, safety should hold \textit{invariably}, i.e., in every single round. This is crucial in, e.g., scenarios where a MAB algorithm experiments with one agent per round. Second, the safety constraint relies on the \textit{full Bayesian information} the algorithm has, in sharp contrast to work that leverages information asymmetry (see related work below). 
Our safety constraint can be used to strengthen existing safety requirements like Bayesian Incentive Compatibility \cite{Bahar2016,Mansour2015,Mansour2016Slivkins,MansourSW18}, or to relax legal responsibilities that allow no learning whatsoever (like fiduciary duty). Our investigation broadens the safety spectrum in MABs, and demonstrates that while learning in the invariable Bayesian safety setting is non-trivial, it can still occur.

\subsection{Our Contribution}
We consider a MAB setting with a Bayesian prior and static rewards; that is, the reward of each arm is initially unknown, but is realized only once.\footnote{While this assumption is limiting, it allows us to focus on the novel viewpoint of invariable safety we propose in this paper. Indeed, a similar approach is taken by much other recent work proposing incentive compatibility constraints in MABs \cite{Bahar2016,cohen2019optimal,Mansour2015}. As we will see, even given this assumption, the problem is technically challenging.} Some of the arms may be risky, i.e., hide negative rewards, but can also be highly rewarding. Additionally, we assume that there is a \textit{safe arm}, with a known deterministic reward. In every round, the decision maker (henceforth DM) picks a distribution over the arms, which we term a \textit{portfolio}. After committing to a portfolio, DM samples an arm according to the weights of the portfolio and plays it. The utility of DM is the sum of rewards, which she wishes to maximize. Without any further constraints, this is straightforward: try every arm once, and then repeatedly choose the best arm. (Due to our static rewards assumption, one round of exploration is enough to reveal all the rewards.) However, we limit DM to playing only \textit{safe portfolios}; a portfolio is  safe if its expected reward is weakly better than that of the safe arm, 
where this expectation is taken over the Bayesian information and the weights it gives to arms.

We contribute to the existing literature both conceptually and technically. Conceptually, safety is applied in \textit{every} round, and not only with high probability like previous work (see, e.g., \citet{amani2019linear,usmanova2019safe}). Our approach to safety is highly desired in, e.g., situations where DM interacts with agents. We ask that DM provides \textit{every} agent a portfolio that is weakly better than the safe arm in expectation, independently of the other agents. This is in sharp contrast to traditional MABs, in which algorithms intentionally devote some rounds to exploration. Our safety constraint spreads exploration over rounds to distribute its cost, and provides a strong guarantee for agents to use the system (as it is weakly better than the safe arm). Our main conceptual take-away is that systems and decision-makers that are required to ensure safety in its (arguably) most stringent form can still enjoy learning.

On the technical side, the main algorithmic question is to maximize the expected utility subject to safety. Since rewards are realized only once (and remain fixed throughout the execution), planning is crucial: arms with high a priori value extend the set of safe portfolios and enable exploration of a priori inferior arms. 
Our main technical contribution is an asymptotically optimal algorithm, under a stochastic dominance assumption on the rewards. Our approach is based on a constrained Goal Markov Decision Process (GMDP) \cite{barto1995learning}. We devise a neat optimal policy for this GMDP, and harness it to craft an asymptotically optimal algorithm. Finally, we analyze the instance-dependent convergence rate.

\subsection{Related Work}\label{subsec:related}
As argued above, the notion of safety has many diverse definitions in the ML and AI literature. For example, \citet{Moldovan:2012} introduce an algorithm that
allows safe exploration in Markov Decision Processes in order to avoid fatal absorbing states. Our work takes a different tack, as it constrains the exploration process rather than modifying the objective criterion.

Several lines of work have applied safety constraints to Multi-Arm Bandit problems. One considers a MAB problem with a global budget where the objective is to maximize the total reward before the learner runs out of resources (see, e.g., \cite{agrawal2016linear,badanidiyuru2013bandits,badanidiyuru2014resourceful,wu2015algorithms}). Another considers stage-wise safety,  ensuring that regret performance stays above a threshold determined by a baseline strategy at every round \cite{kazerouni2017conservative,wu2016conservative}. Notably, in these lines of work, the constraint applies to the cumulative resource consumption/reward across the entire run of the algorithm. In contrast, \citet{amani2019linear} consider the application of a reward constraint at each round. In their work, the set of safe decisions is left uncertain, given the inherent uncertainty in the learning system. They therefore have a two-fold objective: to minimize regret while learning the safe decision set.

Our work is motivated by the idea of using a bandit-like setting for recommendations \cite{Fiduciary,BaharST19,cohen2019optimal,KremerMP13}. Under this perception, each round serves a different agent and, therefore, the safety constraint should be applied in each round. This strengthens the approach of \citet{amani2019linear}, as we require the constraint to {\em always} hold (and not only with high probability). As in \citet{Fiduciary,BaharST19,cohen2019optimal,KremerMP13}, we assume each arm is associated with a fixed value, which is initially unknown, selected from a known distribution. Given this, our problem becomes the careful planning of safe exploration rather than the more standard explore and exploit subject to constraints, with probabilistic guarantees. 
Our setting conceptually departs from incentive compatible MABs \cite{BaharST19,cohen2019optimal,KremerMP13}. In that line of research, optimal solutions rely on \textit{information asymmetry}: each agent has only a priori knowledge while the algorithm witnesses the rewards of all preceding agents; thus, the algorithm can induce agents to explore a priori inferior arms. While this approach might be suitable for some scenarios, it can be highly undesired in cases where, e.g., the system can be held liable to its actions, or cares for long-term engagement. Indeed, this manipulation is what we try
to remedy in the present paper.

Another interesting view of safety is to assume the underlying system is safety-critical and to present active
learning frameworks that use Gaussian Processes as non-parametric models to learn the safe decision set (e.g., \cite{sui2015safe,sui2018stagewise}). So, the safe decision set is fixed but is initially unknown, and is learned in the process. Of course, we never know the safe decisions for certain in such settings. This is unlike our work, where we do know the safety constraint at each point. Safety depends on our knowledge, which is perfect in the Bayesian sense. 

\section{Problem Statement}\label{sec:problem statement}

In this section, we formally define the Invariable Bayesian Safety problem ($\ise$ for shorthand). We consider a set $A$ of $K$ arms, $A=\{a_1,\dots a_K\}$. The reward of arm $a_i$ is a random variable $X(a_i)$, and $(X(a_i))_{i=1}^K$ are mutually independent. The rewards are static, i.e., they are realized only once, but are initially unknown. We denote by $\mu({a_i})$ the expected value of $X(a_i)$, namely $\mu(a_i)\defeq\E\left[X(a_i)\right]$. DM knows the Bayesian priors, i.e., distribution $X(a_i)$ for every $a_i\in A$. There are $T$ rounds, where we address rounds as interactions with agents; namely, DM makes a decision for agent $t$ in round $t$. We augment the set of arms with a \textit{safe arm}, whose reward is always 0,\footnote{The selection of zero as the threshold is arbitrary; we can work with any real scalar similarly.} and which we denote by $a_0$ for simplicity. In each round, DM can pick the safe arm, or an action from $A$. We also let $A^+ \defeq A\cup\{a_0\}$, and let $\Delta(A^+)$ be the set of distributions over the elements of $A^+$.

We denote by $\mI_t$ the information DM has at time $t$---the prior distributions and the reward realizations acquired until round $t$. Namely, $\mI_1$ encodes the prior information solely, $\mI_2$ encodes both the prior information and the reward of the arm selected at round 1, and so on. If the reward $X(a_i)$ was realized before round $t$ and is hence known to DM, we use $x(a_i)$ to denote its value.

In every round, DM plays a \textit{portfolio} of arms. A portfolio is an element from $\Delta(A^+)$---it is a distribution over the arms and the safe arm. Put differently, a portfolio is the extension of actions to  randomized actions. Whenever DM picks a portfolio $\bl p \in \Delta(A^+)$, Nature (i.e., a third-party) flips coins according to $\bl p$ to realize one arm from $A^+$. The reward at time $t$, which we denote by $r^t$, is defined as the expected value over the coin flips and the randomness of the rewards. Formally, $r^t=\sum_{a_i \in A^+} \bl p^t(a_i)\E\left[X(a_i)\mid \mI_t\right]$, where $\bl p^t$ is the portfolio DM selects at time $t$. 

In this paper, we limit DM to use \textit{Bayesian-safe portfolios}. A portfolio $\bl p^t$ is Bayesian-safe (or simply \textit{safe}) at time $t$ if, given the information DM has at time $t$, its expected reward is greater or equal to zero (which is the reward of the safe arm). Formally,
\begin{definition}[Bayesian-Safe Portfolio]\label{def:bayesian safety}
A portfolio $\bl p$ is safe w.r.t. $\mI$ if {\setmuskip{\thickmuskip}{0mu}$\sum_{a_i\in A}\bl p (a_i)\E\left[X(a_i)\mid \mI\right]\geq 0.$} 
\end{definition}
By restricting DM to play safe portfolios, we assure that every agent (i.e., every round) will receive at least the reward of the safe arm in expectation, independently of the other agents.\footnote{Notice that DM can explore arms with a negative expected value. However, such arms must be balanced with other arms (with positive rewards). Forbidding DM to explore arms with negative expected value trivializes the problem, as it prevents DM from exploring all arms with negative expectation. Our Bayesian safety is a compromise: it allows learning about a priori ``bad'' arms, but only when they are mixed with ``good'' arms.}
We denote the \textit{utility} achieved by an algorithm $\ALG$ by $\mU_T(\ALG)=\E\left[\sum_{t=1}^T r^t\right]$. DM wished to maximize her utility--- the sum of rewards, subject to selecting safe portfolios in every round. 

To conclude, we represent an instance of the $\ise$ problem by the tuple $\tupbracket{K, A, (X(a_i))_i, (\mu(a_i))_i}$. Notice that the horizon $T$ is not part of the description, as we often discuss a particular instance with varying $T$. When the instance is known from the context, we denote the highest possible utility of any algorithm by $\OPT_T$, where the subscript emphasizes the dependency on the number of rounds $T$. As it will become apparent later on, it is convenient to distinguish arms with positive expected rewards and arms with negative expected rewards.\footnote{We assume for simplicity that there are no arms with an expected reward of 0. Our results hold in this case as well with minor modifications.} To that end, we let 
$\above(A)\defeq\{a_i\in A:\mu(a_i) > 0 \}$. Analogously, $\below(A)\defeq\{a_i\in A:\mu(a_i)< 0 \}$. The terms above and below refer to the safe arm---an expected value of 0, which is our benchmark for safety.

Before we go on, we illustrate our setting and notation with an example.

\begin{example}\label{example with normal}
Let $K=4$, and let{\setmuskip{\thickmuskip}{0mu}
$X(a_1)\sim N(2,1)$, $X(a_2)\sim N(1,1)$, $X(a_3)\sim N(-1,1)$, and $X(a_4)\sim N(-2,1)$, where $N(\mu,\sigma^2)$ denotes the normal distribution with a mean $\mu$ and a variance $\sigma^2$. 
}%
Notice that $\above(A)=\{a_1,a_2\}$ and  $\below(A)=\{a_3,a_4\}$. In the first round, DM can play, e.g., a portfolio that comprises $a_1$ w.p. 1. She can also play the portfolio that mixes $a_1$ w.p. $\frac{1}{2}$ with $a_2$ w.p. $\frac{1}{2}$, and infinitely many other safe portfolios.

To demonstrate the technical difficulty of maximizing utility, assume that $X(a_1),X(a_2) <0$ while $X(a_3),X(a_4)>0$ (this information is \textit{not} available to DM). Indeed, this happens with positive non-negligable probability under the distributional assumptions of this example. Consider the case where DM plays the portfolio $a_1$ w.p. 1 in the first round, and $a_2$ w.p. 1 in the second round. Notice that these are safe: Given her knowledge in the first round, playing $a_1$ w.p. 1 is safe, and regardless of $x(a_1)$, playing $a_2$ w.p. 1 in the second round is also safe. After observing that $a_1$ and $a_2$ have negative rewards (since we momentarily assume $X(a_1),X(a_2) <0$), her only safe portfolio for the subsequent rounds is the safe arm $a_0$ w.p. 1.
However, there are much better ways to act in the first round. Arms with positive expected rewards are essential for exploring arms with negative expected rewards. Consider $\bl p^1$ such that $\bl p^1(a_1)=\frac{1}{3},\bl p^1(a_3)=\frac{2}{3}$ and $\bl p^1(a_2)=p^1(a_4)=0$. Notice that $\bl p^1$ is safe w.r.t. $\mI_1$, since $\bl p^1(a_1)\mu(a_1)+\bl p^1(a_3)\mu(a_3)=0$. If DM plays $\bl p^1$ in the first round, she would discover the positive reward of arm $a_3$ already in the first round w.p. $\frac{2}{3}$. It is thus immediate to see that playing $\bl p^1$ in the first round results in higher utility than playing $a_1$ w.p. 1.  This example illustrates that DM should use $a_1$ and $a_2$ as her ``budget'', and leverage their a priori positive rewards to explore the a priori inferior arms $a_3$ and $a_4$. 
\end{example}

\paragraph{Stochastic Dominance}
Our main technical contribution assumes that some of the rewards are stochastically ordered. We say that a random variable $X$ stochastically dominates (or, has first-order stochastic dominance over) a random variable $Y$ if for every $x\in (-\infty ,\infty ),$ $\Pr(X\geq x)\geq \Pr(Y\geq x)$. Notice that this dominance immediately implies that $\E\left[X\right] \geq \E\left[Y\right]$.
\begin{assumption}\label{assumption:dominance}
The reward of the arms in $\below(A)$ are stochastically ordered.
\end{assumption}
Notice that this assumption applies only to a priori inferior arms, i.e., the arms in $\below(A)$, and not to all arms. There are many natural cases for this assumption: Unit-variance (or any fixed variance) Gaussian like in Example~\ref{example with normal}, Bernoulli, log-normal and truncated normal, etc. When a formal statement relies on Assumption \ref{assumption:dominance}, we mention it explicitly. 



\section{Our Approach: A Goal Markov Decision Process for the Infinite case}\label{sec:infinite}
In this section, we assume that $T$ is infinity, that is, we care about $\lim_{T\rightarrow \infty} \mU_T$. We develop our main technical contribution, which we outline in Algorithm~\ref{alg:alg of pi}. Algorithm~\ref{alg:alg of pi} is asymptotically optimal, and can be computed in linearithmic time (sorting $\below(A)$ according to expected values). Achieving asymptotic optimality is non-trivial and requires careful planning of the portfolios we use in every round. Mixing the wrong arms even once can be detrimental, as it may prevent us from exploring more arms or arms with better rewards. Algorithm \ref{alg:alg of pi} reveals an interesting insight: as long as Assumption~\ref{assumption:dominance} holds, we can follow a neat rule for picking portfolios efficiently in an optimal manner.\footnote{The proof of Theorem~\ref{thm:optimal policy} suggests an immediate asymptotically optimal algorithm when Assumption~\ref{assumption:dominance} does not hold; however, it runs in exponential time in $K$. See the proof sketch for more details.} In the next section, we leverage our results to derive convergence bounds for the finite case as well.

When there are infinite rounds, the following interesting property occurs.
\begin{observation}\label{obs:eventually will explore}
Assume that throughout the course of execution we discovered that $X(a_i)>0$ for some arm $a_i\in A$. Then, we can use $a_i$ to explore all other arms in finite time.
\end{observation}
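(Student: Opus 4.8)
The plan is to show that once some arm $a_i$ is known to have $X(a_i) = x(a_i) > 0$, we can safely explore any remaining unexplored arm $a_j$ (with unknown reward) in a bounded number of rounds, and then iterate over all such arms. The key observation is that a known positive reward $x(a_i) > 0$ gives us a reusable ``budget'': since the reward is static, playing $a_i$ in any future round is always safe on its own, and more importantly, it can be mixed with a risky arm to dilute that arm's negative prior expectation down below the safety threshold.

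First I would make precise the single-arm exploration step. Fix an unexplored arm $a_j$ with prior mean $\mu(a_j)$ (which may be negative). Consider the portfolio $\bl p$ that plays $a_j$ with weight $q$ and $a_i$ with weight $1-q$. Its Bayesian-safe condition reads $q\,\mu(a_j) + (1-q)\,x(a_i) \geq 0$, which holds for all sufficiently small $q > 0$, concretely any $q \le \frac{x(a_i)}{x(a_i) - \mu(a_j)} \in (0,1]$ when $\mu(a_j) < 0$ (and any $q \in [0,1]$ when $\mu(a_j) \ge 0$). So there is a fixed $q^\star > 0$, depending only on the instance, such that this portfolio is safe. Playing $\bl p$ for a single round reveals $x(a_j)$ with probability exactly $q^\star$ (the event that Nature's coin lands on $a_j$), and conditioned on not revealing it, the information state $\mI_t$ is unchanged, so $\bl p$ remains safe in the next round and we may replay it. Hence the number of rounds until $a_j$ is revealed is stochastically dominated by a geometric random variable with success probability $q^\star$, which is finite almost surely and has finite expectation $1/q^\star$.

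Next I would iterate: enumerate the unexplored arms $a_{j_1}, \dots, a_{j_m}$ and apply the single-arm step to each in turn, always using $a_i$ (whose positive reward persists forever, since rewards are static) as the budget arm. Each step terminates in finite time almost surely, and a finite sum of almost-surely-finite random variables is almost surely finite; moreover the expected total number of rounds is at most $\sum_{\ell} 1/q^\star_{j_\ell}$, a finite instance-dependent constant. This establishes that all arms get explored in finite time. (One small point to address: the argument uses only that $T = \infty$, i.e., we never run out of rounds; it does not require Assumption~\ref{assumption:dominance}.)

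The main obstacle, such as it is, is not mathematical depth but a modeling subtlety: one must confirm that ``playing the same safe portfolio repeatedly until Nature reveals the risky arm'' is itself a legitimate strategy under the invariable-safety constraint — i.e., that safety is checked round-by-round against the current $\mI_t$, and that on rounds where nothing new is revealed $\mI_t$ genuinely does not change, so safety is preserved. Once that is granted, the geometric-tail bound and the finite union over arms finish the argument. I would also remark that this observation is exactly what makes the infinite-horizon GMDP formulation clean: a single discovered positive arm collapses the remaining problem, since everything else can then be explored for free in the limit.
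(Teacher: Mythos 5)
Your proposal is correct and follows essentially the same route as the paper: mix the realized positive arm $a_i$ with an unexplored arm $a_j$ at the maximal safe weight $\frac{x(a_i)}{x(a_i)-\mu(a_j)}$, observe that the information state is unchanged on a failed trial so the portfolio stays safe, conclude via a geometric number of Bernoulli trials, and iterate over the remaining arms. The paper's proof is exactly this argument (using the overloaded portfolio $\bl p_{i,j}$ with $x(a_i)$ in place of $\mu(a_i)$), so no further comparison is needed.
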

To illustrate, recall Example \ref{example with normal}. Assume that we are in the third round, have already explored $a_1$ and $a_2$, and discovered $X(a_1)=x(a_1)>0$ and that $X(a_2)<0$. Further, we did not explore $a_3$ and $a_4$ yet (one of which can still hide the highest payoff among all arms). Consider the portfolio 
\begin{align*}
\bl p(a) =
\begin{cases}
\frac{-\mu(a_3)}{x(a_1)-\mu(a_3)} & \textnormal{if } a=a_1\\
\frac{x(a_1)}{x(a_1)-\mu(a_3)} & \textnormal{if } a=a_3\\
0 & \textnormal{otherwise}
\end{cases}
\end{align*}
It can be verified that $\sum_{a_i \in A^+} \bl p(a_i)\E\left[X(a_i)\mid \mI_3\right]=0$, and thus $\bl p$ is safe. If we pick this portfolio, Nature flips coins to pick either $a_1$ and $a_3$, where the latter is picked with positive probability. This Bernoulli trial might end up with selecting $a_1$, but we can repeat it until Nature picks $a_3$. The number of rounds required to explore $a_3$ follows the Geometric distribution with probability of success of $\nicefrac{x(a_1)}{x(a_1)-\mu(a_3)}$ in each Bernoulli trial. Indeed, by executing Bernoulli trials until the first success, we guarantee exploring $a_3$. Zooming out of Example~\ref{example with normal} to the general case, Observation~\ref{obs:eventually will explore} suggests that once a positive reward is realized, we can execute sufficiently many Bernoulli trials to explore all the arms in finite time.

Observation \ref{obs:eventually will explore} calls for a modelling that abstracts the setting once a positive reward is realized. More particularly, we can focus on the case where all unexplored arms are revealed instantly after we realized a positive reward. To that end, we propose a constrained Goal Markov Decision Process (GMDP), which we present in the next subsection. Our goal is to find the optimal policy\footnote{We keep the term algorithm for solutions to $\ise$, and use the term policy for solutions of the GMDP.} for this GMDP, and then translate it to an asymptotically optimal algorithm for the corresponding $\ise$ instance.

\subsection{An Auxiliary Goal Markov Decision Process}\label{subsec:aux GMDP}
We construct the GMDP as follows:
\begin{itemize}
    \item Every state is characterized by the set of unobserved arms $s \subseteq A$, and we denote the set of all states by $\mS=2^A$. The initial state is $s_0=A$. 
    \item In every state $s$, we can pick any portfolio from the set of safe portfolios w.r.t. the prior information, i.e., from
    $
    \safe(s)=\left\{\bl p \in \Delta(s) : \sum_{a\in s}\bl p(a)\mu(a) \geq 0 \right\},
    $
    where $\Delta(s)$ is the set of all distributions over the elements of $s$.
    \item Given a portfolio $\bl p$, an arm index is sampled. If the realized arm is $a$, the GMDP transitions to state $s\setminus \{ a\}$. If $\safe(s)$ is empty, then we say that $s$ is \textit{terminal}. In particular, we can reach a terminal state if we ran out of arms with a positive expected value, i.e., $\above(s)=\emptyset$, or if we have explored all arms, i.e., $s=\emptyset$.
    \item Rewards are obtained in terminal states solely. The reward of a terminal state $s$ is
\[
R(s) \defeq
\begin{cases}
\max_{a\in A} X(a) & \textnormal{if $X(a') >0$ for $a'\in A\setminus s$} \\
0 & \textnormal{otherwise}
\end{cases}.
\]
\end{itemize}
Notice that the reward of a terminal state $s$ depends on the rewards of the arms not in $s$. If we reached $s$, then we have explored $A\setminus s$. Following our intuition of Bernoulli trials, if at least one of $(X(a))_{a\in A\setminus s}$ is positive, the reward we get is $\max_{a\in A} X(a)$.    Notice that the GMDP allows for the realization of each arm only once: in a state $s$, if $a$ is realized we transition to $s\setminus\{a\}$. This ensures that the profiles in $\safe(s)$ are also safe for an $\ise$ algorithm that uses a policy for the GDMP, provided that it did not explore the arms in $s$.

Let $W(\pi,A)$ be the reward of a policy $\pi$ starting at the initial state $s_0=A$. Moreover, let $W^*(A)$ denote the highest possible reward of any policy, i.e., $W^*(A)=\sup_{\pi}W(\pi,A)$. Due to the construction, 
\begin{observation}\label{obs: U leq W*}
For any algorithm $\ALG$, it holds that $\lim_{T \rightarrow \infty }\mU_T(\ALG) \leq W^*(A)$.
\end{observation}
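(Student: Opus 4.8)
The plan is to turn an arbitrary algorithm $\ALG$ for $\ise$ into a policy $\pi$ for the GMDP whose value $W(\pi,A)$ is at least $\lim_{T\to\infty}\mU_T(\ALG)$; since $W^*(A)=\sup_\pi W(\pi,A)$, the Observation follows. I would run $\ALG$ and $\pi$ on a common probability space, sharing the realized reward vector $(X(a_i))_i$ and the randomness Nature uses to resolve portfolios, so that $\pi$ can shadow $\ALG$. The point of the construction is that the GMDP was built precisely so that its state --- the set of not-yet-observed arms --- mirrors $\ALG$'s exploration, its feasible set $\safe(s)$ contains everything $\ALG$ can safely do before it uncovers a positively-rewarded arm, and its terminal reward $\max_a X(a)$ records the per-round value that Observation~\ref{obs:eventually will explore} makes available once such an arm surfaces.

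Concretely, I define $\pi$ as follows. As long as $\ALG$ has observed no arm with a positive realized reward, whenever a round of $\ALG$ realizes a fresh arm $a$ in its current state $s$, let $\pi$ play the renormalization to $s$ of $\ALG$'s portfolio from that round (so that, under the coupling, $\pi$ also moves to $s\setminus\{a\}$); rounds in which $\ALG$ redraws an already-observed arm or the safe arm are skipped by $\pi$. The step I expect to be the crux is checking that this renormalized portfolio lies in $\safe(s)$. Here the ``no positive arm yet'' hypothesis is essential: $\ALG$'s portfolio $\bl p^t$ satisfies $\sum_{a_i\in A}\bl p^t(a_i)\E[X(a_i)\mid\mI_t]\ge 0$ by Definition~\ref{def:bayesian safety}, and every already-observed arm contributes a nonpositive term $\bl p^t(a_i)x(a_i)$, so the mass $\bl p^t$ places on the arms of $s$ already has nonnegative $\mu$-weighted sum and keeps it after renormalization (the normalizing constant is positive precisely because a fresh arm was realized). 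From the moment $\ALG$ realizes a positive arm $a'$ --- which by then $\pi$ has removed from its state too --- I let $\pi$ stop imitating and play arbitrary safe portfolios until it hits a terminal state $s^*$; since $a'\notin s^*$ and $X(a')>0$, that state pays $R(s^*)=\max_a X(a)$, and it is reached within $|s^*|\le K$ further steps because each step shrinks the state.

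To compare the two, I would condition on whether $\ALG$ ever realizes a positive arm. On the event that it never does, every $r^t$ is a convex combination of $0$, the nonpositive values of observed arms, and the prior means of unobserved arms, so the reward $\ALG$ collects from unobserved arms over all rounds is in expectation at most $\sum_{a_i\in\above(A)}\mu(a_i)$ --- using that $\ALG$ realizes each arm at most once, so the expected total mass it ever puts on $a_i$ while $a_i$ is unobserved is at most $1$ --- while the reward from observed arms is nonpositive; hence the part of $\mU_T(\ALG)$ due to this event stays bounded by a constant, which the comparison absorbs, and on this event $\pi$ likewise only reaches a terminal state. On the complementary event, the same at-most-once argument bounds the lifetime reward from unobserved arms by a constant, while from the first positive realization onward the reward from observed arms is at most $\max_a X(a)$ per round; so the per-round value $\ALG$ can sustain is at most $\max_a X(a)=R(s^*)$, which is exactly what the coupled policy $\pi$ secures. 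Combining the two events yields $\lim_{T\to\infty}\mU_T(\ALG)\le W(\pi,A)\le W^*(A)$.

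The main obstacle, as flagged, is the safety-preservation step: one must argue carefully that $\sum_{a_i\in s}\bl p^t(a_i)\mu(a_i)\ge 0$ genuinely follows from $\ALG$'s constraint, which is why the case split at the first positive realization is forced and why the GMDP reward is written with the indicator ``$X(a')>0$ for $a'\in A\setminus s$''. The remaining points --- that $\pi$ is a bona fide policy even though it advances at a different pace than $\ALG$ (handled by stepping it only on fresh realizations), that the renormalization is well defined, and that the pre-discovery reward telescopes --- are routine bookkeeping.
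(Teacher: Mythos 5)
Your proposal is correct and follows essentially the same route as the paper: the paper's (very terse) proof likewise constructs a GMDP policy $\pi$ that simulates $\ALG$, skipping rounds in which the realized arm was already explored, and deferring the remaining details to \cite[Theorem 3]{Fiduciary}. Your additional bookkeeping --- the coupling, the check that the renormalized portfolio stays in $\safe(s)$ because observed arms contribute nonpositive terms before the first positive realization, and the at-most-unit expected lifetime mass on each unobserved arm --- is exactly the content the paper omits, and it is sound.
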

Furthermore, given any policy $\pi$ for the GMDP, we can construct an algorithm $\ALG(\pi)$ for the corresponding $\ise$ instance. At the beginning, $\ALG(\pi)$ picks portfolios according to $\pi$ and updates the state. Then, once $\pi$ reaches a terminal state,  $\ALG(\pi)$ either executes Bernoulli trials until full exploration as we discussed before (in case a positive reward was realized), or plays the safe arm forever (we further elaborate when introducing Algorithm~\ref{alg:alg of pi}). It is straightforward to see that
\begin{observation}\label{obs: U get W}
for any policy $\pi$, there exists an algorithm $\ALG(\pi)$ such that $\lim_{T \rightarrow \infty }\mU_T(\ALG(\pi)) = W(\pi,A)$.
\end{observation}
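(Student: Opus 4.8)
The plan is to exhibit the algorithm $\ALG(\pi)$ explicitly and then argue, by coupling its execution with a run of $\pi$ in the GMDP, that it reaches the same terminal state with the same probability and thereafter realizes exactly the terminal reward. Given a policy $\pi$, define $\ALG(\pi)$ to maintain a ``virtual'' state $s\subseteq A$, initialized to $s_0=A$ and always equal to the set of arms whose reward has not yet been realized. While $s$ is non-terminal, in each round $\ALG(\pi)$ plays $\bl p=\pi(s)\in\safe(s)$; when Nature realizes an arm $a$, it updates $s\leftarrow s\setminus\{a\}$. Once a terminal state $s$ is reached there are two cases: if some already-realized arm $a'\in A\setminus s$ has $x(a')>0$, then $\ALG(\pi)$ runs the Bernoulli-trial scheme of Observation~\ref{obs:eventually will explore} with $a'$ until every remaining arm is explored, and thereafter plays $\argmax_{a\in A}x(a)$ forever; otherwise $\ALG(\pi)$ plays the safe arm $a_0$ forever.

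First I would check that every portfolio played by $\ALG(\pi)$ is safe in the sense of Definition~\ref{def:bayesian safety}. During the $\pi$-following phase every arm in $s$ is unexplored, so $\E[X(a)\mid\mI_t]=\mu(a)$ for all $a\in s$, and $\pi(s)\in\safe(s)$ gives $\sum_{a\in s}\pi(s)(a)\mu(a)\ge 0$; hence $\pi(s)$ is safe w.r.t. $\mI_t$. In the first terminal branch, the Bernoulli-trial portfolios are safe by the construction in Observation~\ref{obs:eventually will explore} (each mixes the positive arm $a'$ with a single unexplored arm so the conditional expectation is exactly $0$), and the eventual pure portfolio on $\argmax_a x(a)$ is safe because its realized value is at least $x(a')>0$. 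The second branch plays $a_0$, which is trivially safe.

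Next I would set up the coupling. Both processes have states that are subsets of $A$, start at $A$, allow in state $s$ any action in $\safe(s)$, and transition from $s$ to $s\setminus\{a\}$ when $a\sim\bl p$ is drawn from the chosen $\bl p\in\Delta(s)$; crucially, in both processes each arm can be removed only once, so the $\ise$ constraint ``$\bl p$ ranges over $\safe(s)$'' is exactly what the GMDP permits. Driving both with the same randomness, the sequence of states visited by $\ALG(\pi)$ before it first hits a terminal state coincides with the sequence of states visited by $\pi$; in particular they reach the same random terminal state $S$, and the realized rewards of the explored arms $A\setminus S$ have the same joint law, so $R(S)$ has the same distribution under both.

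Finally I would account for the reward. Conditioned on reaching terminal state $s$, the $\pi$-following phase lasts at most $K$ rounds (each round removes one arm), and, in the first branch, the Bernoulli-trial phase removes each of the at most $K$ remaining arms after a geometrically distributed, hence almost surely finite, number of rounds; so after an almost surely finite number $N$ of rounds $\ALG(\pi)$ enters a stationary regime whose per-round reward is exactly $R(s)$ --- namely $\max_{a\in A}x(a)$ in the first branch and $0$ in the second. Since the first $N$ rounds contribute a bounded amount that is negligible in the limit, the limiting value of $\mU_T(\ALG(\pi))$ equals $\E[R(S)]=W(\pi,A)$. The one point that needs genuine care is the coupling step: one must verify that, precisely because the GMDP and the $\ise$ instance realize each arm exactly once, the conditional expectations of the arms still in $s$ never drift from their priors, so $\pi$'s portfolios remain safe when replayed inside $\ALG(\pi)$; granting this, the remaining steps are routine.
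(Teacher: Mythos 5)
Your construction is exactly the paper's: $\ALG(\pi)$ is the modification of Algorithm~\ref{alg:alg of pi} that follows $\pi$ instead of $\OGP$ and then, at a terminal state, either runs the Bernoulli trials of Observation~\ref{obs:eventually will explore} to secure $\max_a X(a)$ or plays $a_0$ forever; the paper states this in two lines and you supply the (correct) safety check, coupling, and finite-expected-time details. The proposal is correct and takes essentially the same route.
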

As an immediate corollary, if $\pi$ is an optimal policy, namely, $W(\pi,A)=W^*(A)$, then $\ALG(\pi)$ is asymptotically optimal, i.e., $\lim_{T\rightarrow \infty}\mU_T(\ALG(\pi)) \geq \lim_{T\rightarrow \infty}\OPT_T$. We can thus focus on finding an optimal GMDP policy, and later translate it to an asymptotically optimal algorithm for the $\ise$ problem.

\subsection{Optimal GMDP Policy}\label{subsec:optimal GMDP policy}
\begin{figure}[tb]
\begin{minipage}{.48\linewidth}
\begin{algorithmpolicy}[H]
\caption{Optimal GMDP Policy ($\OGP$)\label{policy:pi star}}
\begin{algorithmic}[1]
\REQUIRE a state $s\subseteq A$.
\ENSURE a portfolio from $(\bl p_{i,j})_{i,j}$ or $\emptyset$.
\IF{$s$ is terminal \label{policy:if terminal}} {
\STATE \textbf{return} $\emptyset$. \label{policy:return empty}
}
\ELSE\label{policy:non terminal}{
\STATE pick any arbitrary $a_i \in \above(s)$.\label{policy:pick arbitrary}
\IF{$\below(s)=\emptyset$\label{policy:if no below}} {
		\RETURN $\bl p_{i,i}$. \label{policy:return double above}
}
\ELSE\label{policy:if has below}{
\STATE pick $a_{j^*}\in \argmax_{a_j \in \below(s)} \mu(a_j)$.\label{policy:pick below}
		\RETURN $\bl p_{i,j^*}$.\label{policy:return mix}
}
\ENDIF
}
\ENDIF
\end{algorithmic}
\end{algorithmpolicy}
\end{minipage}
~~~
\begin{minipage}{.48\linewidth}

\begin{figure}[H]
    \includegraphics[scale=1.]{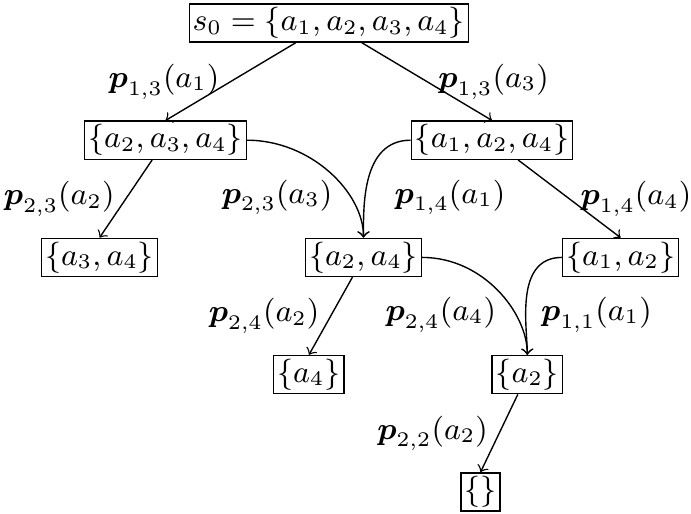}
    \small
    \caption{Illustration of $\OGP$ for Example~\ref{example with normal}.\label{fig:tree small}}
\end{figure}

\end{minipage}
\end{figure}
Next, we devise an optimal policy for this GMDP. To that end, we highlight the following family of portfolios that mix at most two arms. Consider a pair of arms, $a_i\in \above(A)$, and $a_j\in \below(A)$. The portfolio
\begin{align}\label{eq:blp from body}
\bl p_{i,j}(a) \defeq 
\begin{cases}
\frac{-\mu(a_j)}{\mu(a_i)-\mu(a_j)} & \textnormal{if } a=a_i\\
\frac{\mu(a_i)}{\mu(a_i)-\mu(a_j)} & \textnormal{if } a=a_j\\
0 & \textnormal{otherwise}
\end{cases}.
\end{align}
mixes $a_i$ and $a_j$ while maximizing the probability of exploring $a_j$ (the a priori inferior arm). The reader can verify that $\bl p_{i,j}$ is indeed safe, yielding an expected value of precisely zero. For completeness, for every $a_i \in \above(A)$, we also define $\bl p_{i,i}$ as a deterministic selection of $a_i$, e.g., $\bl p_{i,i}(a)=1$ if $a=a_i$, and zero otherwise.

We are ready to present the optimal GMDP policy (hereinafter $\OGP$ for shorthand), which we formalize via Policy~\ref{policy:pi star}. Given a state $s$, $\OGP(s)$ operates as follows. If $s$ is terminal, it returns the empty set (Lines~\ref{policy:if terminal}--\ref{policy:return empty}). Otherwise, if $s$ is non-terminal, we enter the ``else'' clause in Line~\ref{policy:non terminal}, and pick any arbitrary arm $a_i$ from $\above(s)$. Then, we have two cases. If $\below(s)$ is empty, $\OGP$ returns $\bl p_{i,i}$ (Line~\ref{policy:return double above}). Else, it picks the best arm from $\below(s)$ in terms of expected reward, which we denote by $a_{j^*}$ (Line~\ref{policy:pick below}), and returns $\bl p_{i,j^*}$ (Line~\ref{policy:return mix}). Overall, $\OGP$ returns $\emptyset$ if $s$ is terminal or a portfolio from $(\bl p_{i,j})_{i,j}$ if $s$ is non-terminal.
\begin{theorem}\label{thm:optimal policy}
Fix any arbitrary $\ise$ instance satisfying Assumption~\ref{assumption:dominance}. For $\OGP$ prescribed in Policy~\ref{policy:pi star}, it holds that $W(\OGP,A)=W^*(A)$.
\end{theorem}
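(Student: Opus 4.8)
The plan is to use Observations~\ref{obs: U leq W*} and~\ref{obs: U get W} to reduce the statement to ``$\OGP$ is an optimal GMDP policy,'' and then prove optimality through a decomposition of the objective plus a two‑part inductive argument, one part relying on Assumption~\ref{assumption:dominance} and one part that is distribution‑free. Since a GMDP policy observes only the set of unobserved arms — the transition out of a state reveals which arm was realized but not its reward — the optimal value function $V$ satisfies the Bellman equations $V(s)=\E[R(s)]$ for terminal $s$ and $V(s)=\max_{\bl p\in\safe(s)}\sum_{a\in s}\bl p(a)\,V(s\setminus\{a\})$ otherwise, with $W^*(A)=V(A)$. By the standard verification argument it then suffices to show that for every non‑terminal $s$ the portfolio $\OGP(s)$ attains this maximum. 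Because $\bl p\mapsto\sum_a\bl p(a)V(s\setminus\{a\})$ is linear and $\safe(s)$ is a polytope whose vertices are exactly the pure above‑arm portfolios $\bl p_{i,i}$ and the balanced mixtures $\bl p_{i,j}$ of~(\ref{eq:blp from body}) (a pure below‑arm violates the safety constraint, and the cutting hyperplane $\sum\bl p(a)\mu(a)=0$ meets the edges of $\Delta(s)$ precisely in the $\bl p_{i,j}$), I only need to compare $\OGP(s)$ against these finitely many portfolios.

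The engine of the proof is a decomposition of $W(\pi,A)$ valid for every policy $\pi$. A non‑terminal state has nonempty $\above(\cdot)$, so the process halts only once every above‑arm has been explored; hence the random explored set $E$ always contains $\above(A)$, and — since the policy never sees a reward — $E$ is independent of the reward realizations. Expressing $R$ through $E$ and splitting on whether some above‑arm is realized positive gives, with $D_\pi\defeq E\cap\below(A)$ the set of explored below‑arms,
\[
W(\pi,A)=C_0+\rho\cdot\E\!\left[\Big(\max_{a\in\below(A)}X(a)\Big)\cdot\ind\big[\exists\,b\in D_\pi:\ X(b)>0\big]\right],
\]
where $C_0$ and $\rho\in[0,1]$ do not depend on $\pi$. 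So everything reduces to maximizing the displayed expectation over $\pi$, and since $D_\pi$ is independent of the below‑arm rewards this factors into two sub‑claims.

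\emph{Sub‑claim (i), using Assumption~\ref{assumption:dominance}.} Write $M\defeq\max_{a\in\below(A)}X(a)$. For a \emph{fixed} set $D$ of $k$ below‑arms, the expectation above is maximized when $D$ consists of the $k$ stochastically largest below‑arms, because (by independence across arms)
\[
\E\!\left[M\cdot\ind[\exists b\in D:X(b)>0]\right]=\E[M^+]-\E\!\left[\Big(\max_{b\in\below(A)\setminus D}X(b)\Big)^{+}\right]\cdot\prod_{b\in D}\Pr[X(b)\le 0],
\]
and both subtracted factors are minimized by the ``prefix'' choice: $\Pr[X(b)\le 0]$ is smaller for stochastically larger $b$, and $\big(\max_{b\in\below(A)\setminus D}X(b)\big)^{+}$ is stochastically smaller when $\below(A)\setminus D$ contains only stochastically small arms. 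Let $h(k)$ denote this maximum value; then $h$ is non‑decreasing, and since $\OGP$ explores below‑arms in decreasing order of $\mu$ — which, by Assumption~\ref{assumption:dominance} and distinctness of means, coincides with the stochastic order — $D_\OGP$ is always such a prefix, so $\OGP$ attains $h(|D_\OGP|)$ conditionally on $|D_\OGP|$. Thus it remains to prove \emph{Sub‑claim (ii), distribution‑free}: $\OGP$ stochastically maximizes $|D_\pi|$, i.e.\ $\Pr[|D_\OGP|\ge k]\ge\Pr[|D_\pi|\ge k]$ for every $k$ and every $\pi$; together with $h$ non‑decreasing this yields $W(\pi,A)\le C_0+\rho\,\E[h(|D_\pi|)]\le C_0+\rho\,\E[h(|D_\OGP|)]=W(\OGP,A)\le W^*(A)$. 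I would prove (ii) by induction on $|s|$: by the vertex reduction one compares $\bl p_{i,j}$ and $\bl p_{i,i}$, and a direct computation with the weights in~(\ref{eq:blp from body}) shows that, for the event ``at least one more below‑arm is explored from $s$,'' mixing with the $\mu$‑largest below‑arm $a_{j^*}\in\below(s)$ simultaneously maximizes the one‑step success probability and best conserves the above‑arm budget, that the resulting probability equals $1-\prod_{a\in\above(s)}\frac{-\mu(a_{j^*})}{\mu(a)-\mu(a_{j^*})}$ regardless of which above‑arm is used, and that this equals the optimum $\sup_\pi\Pr[\ge 1\text{ below‑arm explored from }s]$; general $k$ follows by recursing on the state reached after the first below‑arm is explored, carrying along the distribution of the residual above‑budget.

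The hard part is Sub‑claim (ii): controlling the distribution of the above‑arm budget that remains after each below‑arm is explored and showing that the ``cheapest‑below‑arm‑first'' rule of Policy~\ref{policy:pi star} is simultaneously optimal for all thresholds $k$ — and, as a byproduct, that $W(\OGP,A)$ is unchanged by how the arbitrary choice of above‑arm is resolved. By contrast, Sub‑claim (i) is a one‑line identity plus two monotonicity observations, and the reduction to two‑point portfolios is the standard fact that a linear function on a polytope is maximized at a vertex. Assumption~\ref{assumption:dominance} enters only through Sub‑claim (i); without it, choosing which set of below‑arms to target becomes a genuine combinatorial optimization over orderings, which is the exponential‑time algorithm alluded to in the footnote to Theorem~\ref{thm:optimal policy}.
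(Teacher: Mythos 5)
Your high-level architecture is sound and genuinely different from the paper's, and the easy half is correct: the reduction to the vertices $\bl p_{i,i},\bl p_{i,j}$ of $\safe(s)$ matches Proposition~\ref{prop:optimal p valid}; the decomposition $W(\pi,A)=C_0+\rho\,\E\bigl[M\cdot\ind[\exists b\in D_\pi:X(b)>0]\bigr]$ with $M=\max_{a\in\below(A)}X(a)$ is valid (every terminal state has all of $\above(A)$ explored, and $D_\pi$ is a function of Nature's coin flips alone, hence independent of the rewards); and your Sub-claim~(i) --- the identity $\E\bigl[M\,\ind[\exists b\in D]\bigr]=\E[M^+]-\E\bigl[(\max_{b\notin D}X(b))^+\bigr]\prod_{b\in D}\Pr[X(b)\le 0]$ plus the two monotonicity observations --- is correct and neatly isolates the only place Assumption~\ref{assumption:dominance} is used, paralleling the paper's remark that its Equivalence Lemma is assumption-free.

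The genuine gap is Sub-claim~(ii), which you rightly call the hard part but then dispatch in a single sentence. The claim that $|D_{\OGP}|$ stochastically dominates $|D_\pi|$ at \emph{every} threshold $k$, for every competing policy, is a strict strengthening of the paper's Equivalence Lemma (Lemma~\ref{lemma:equivalence}), which establishes only the single endpoint $k=\abs{\below(A)}$ --- and even that one exact identity costs the paper a two-dimensional induction with four separate base cases (Propositions~\ref{prop:case of one strong}--\ref{prop:case of two}) occupying most of its appendix. Your proposed recursion ``on the state reached after the first below-arm is explored, carrying along the distribution of the residual above-budget'' is precisely where the difficulty lives: different policies induce different joint distributions over (which below-arm is explored first, which above-arms survive to that point), the continuation depends on the means of the specific surviving above-arms, and a competitor could in principle trade a lower success probability now for a more valuable residual budget later. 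To close the gap you need (a) an invariance statement showing $\Pr[|D_\pi|\ge k]$ is unaffected by how the above-arms are chosen and interleaved --- the analogue, for every $k$, of what Propositions~\ref{prop:coef c} and~\ref{prop:coef d} provide for the top threshold --- and (b) a proof that always targeting the $\mu$-largest remaining below-arm is optimal for all $k$ simultaneously; your explicit computation only covers $k=1$. Until Sub-claim~(ii) is established, the argument proves nothing beyond the $k=1$ and $k=\abs{\below(A)}$ endpoints. If you do carry it out, the result would be an attractive alternative to the paper's route (which instead factorizes $W$ over prefix events and proves monotonicity of the conditional rewards in Proposition~\ref{prop: monotonicity in thm}), since yours cleanly separates the combinatorial content from the distributional content.
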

\begin{proof}[\textnormal{\textbf{Proof Overview of Theorem \ref{thm:optimal policy}}}]
Proving Theorem \ref{thm:optimal policy} is the main technical achievement of this paper. We defer it to {\ifnum\Includeappendix=0{the appendix}\else{Section~\ref{sec:thm1 outline}}\fi} and  outline it below.

First, using canonical arguments, we show that there exists a \textit{stationary} optimal policy (i.e., a policy that picks a portfolio in the current state independently of the states that led to it). We call a stationary policy that picks portfolios from $(\bl p_{i,j})_{i,j}$ a $\mP$-valid policy. With a bit more theory work, we show that there exists an optimal policy that is $\mP$-valid. We can find such an optimal policy inefficiently using dynamic programming: in every state $s$, assume we know the optimal solution for $s' \subset s$. Since $s$ leads to states of the form $s/{a}$ for $a\in s$, we can compute the expected reward for each portfolio $(\bl p_{i,j})_{i,j}$, and pick the best one. There are $2^K$ states and $O(K^2)$ portfolios from $(\bl p_{i,j})_{i,j}$ we can play in every state, each taking $O(K)$ computations to assess; therefore, this by itself guarantees finding an optimal policy in time $O(2^K K^3)$.

To substantially reduce the computation, we need to understand the crux of the dynamic programming. A crucial ingredient of the procedure is the probability of reaching the empty state, representing the case in which we explored all arms. For every state $s\in \mS$ and a policy $\pi$, we denote by $Q(\pi,s)$ the probability starting at $s\subseteq A$, following the policy $\pi$ and reaching the empty state (exploring all arms). 
We then prove a rather surprising feature of $Q$: $Q$ is policy independent. Namely, we show that for every two $\mP$-valid policies $\pi,\rho$ and every state $s\in \mS$, it holds that $Q(\pi,s)=Q(\rho,s)$. Interestingly, the arguments we prove until this point (including the existence of an optimal $\mP$-valid policy and the above $Q$ equivalence) do not rely on Assumption \ref{assumption:dominance}.

Equipped with these tools, we make the final arguments. We use the recursive structure of $Q$ and the reward function $W$ to prove the optimality of $\OGP$ inductively over the sizes of $\above(s)$ and $\below(s)$. This step makes use of further insights about the dynamic programming, like monotonicity in the number of available arms.
\end{proof}
We exemplify $\OGP$ for the induced GMDP of Example~\ref{example with normal} in Figure \ref{fig:tree small}. Every node is a state---the root is $s_0=A$. In $s_0$, we play the portfolio $\bl p_{1,3}$, which mixes $a_1\in \above(A)$ and $a_3 \in \below(A)$ (we could have equivalently picked $\bl p_{2,3}$, since Line~\ref{policy:pick arbitrary} provides us freedom in picking $a_i\in\above(A)$). The split follows from Nature's coin flips: left if the realized action is $a_1$ (w.p. $\bl p_{1,3}(a_1)$), and right if the action is $a_3$ (w.p. $\bl p_{1,3}(a_3)$). The leaves $\{a_3,a_4\}$, $\{a_4\}$, and $\{\}$ are the terminal states. 

Notice that Observation \ref{obs: U get W} and Theorem \ref{thm:optimal policy} together suggest an optimal algorithm for $\ise$ with an infinite horizon. We denote this algorithm by $\SEGB$ for brevity (following the notation of Observation \ref{obs: U get W}, $\SEGB=\ALG(\OGP)$). $\SEGB$ picks portfolios according to $\OGP$ until a positive reward is realized (Lines~\ref{algpi:while}--\ref{algpi:update s}). Then, if $\SEGB$ realized a positive reward (Line~\ref{algpi:pos then geo}), it explores the remaining arms in $\above(A)$ (Line~\ref{algpi:explore remaining}), executes finitely many Bernoulli trials to explore all arms (Line~\ref{algpi:geo}), and exploits the best arm (Line~\ref{algpi:exploit best}). Otherwise, it plays the safe arm $a_0$ forever. As an immediate corollary of Observations \ref{obs: U leq W*}--\ref{obs: U get W} and Theorem \ref{thm:optimal policy}:
\begin{corollary}
Fix any arbitrary $\ise$ instance satisfying Assumption~\ref{assumption:dominance}. It holds that $\lim\limits_{T \rightarrow \infty }\mU_T(\SEGB) = \lim\limits_{T \rightarrow \infty }\OPT_T$.
\end{corollary}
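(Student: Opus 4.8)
The plan is a short sandwich argument combining the two directions of the GMDP reduction (Observations~\ref{obs: U leq W*} and~\ref{obs: U get W}) with the optimality of $\OGP$ from Theorem~\ref{thm:optimal policy}. Since $\SEGB=\ALG(\OGP)$ by definition, essentially all of the work has already been carried out in Theorem~\ref{thm:optimal policy}, and the corollary reduces to a limit computation.

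First I would record that the relevant limits exist. For any valid algorithm $\ALG$ the per-round reward satisfies $r^t=\sum_{a_i\in A^+}\bl p^t(a_i)\E[X(a_i)\mid\mI_t]\ge 0$ (Definition~\ref{def:bayesian safety}, using $\E[X(a_0)\mid\mI_t]=0$), so $T\mapsto\mU_T(\ALG)$ is non-decreasing; appending rounds that play $a_0$ to an optimal $T$-round strategy shows $T\mapsto\OPT_T$ is non-decreasing as well. Observation~\ref{obs: U leq W*} then gives $\mU_T(\ALG)\le\lim_{T'\to\infty}\mU_{T'}(\ALG)\le W^*(A)$ for every $\ALG$ and every $T$, hence $\OPT_T\le W^*(A)$ for every $T$; so both $\lim_{T\to\infty}\mU_T(\SEGB)$ and $\lim_{T\to\infty}\OPT_T$ exist, are finite, and the latter is at most $W^*(A)$.

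Now I would invoke Observation~\ref{obs: U get W} with the policy $\pi=\OGP$ to get $\lim_{T\to\infty}\mU_T(\SEGB)=W(\OGP,A)$ --- the only thing worth double-checking there being that $\SEGB$'s post-terminal continuation (Bernoulli trials to finish exploration, then exploit the best realized arm forever; or play $a_0$ forever) is safe every round and realizes the GMDP terminal reward $R(\cdot)$ in the limit, which hinges on the Bernoulli-trial phase ending almost surely in finitely many rounds (it is geometric, since each trial succeeds with fixed positive probability). Then by Theorem~\ref{thm:optimal policy}, $W(\OGP,A)=W^*(A)$, so $\lim_{T\to\infty}\mU_T(\SEGB)=W^*(A)$; and since $\SEGB$ is a valid algorithm, $\mU_T(\SEGB)\le\OPT_T$ for all $T$, whence $W^*(A)=\lim_{T\to\infty}\mU_T(\SEGB)\le\lim_{T\to\infty}\OPT_T\le W^*(A)$, forcing equality throughout and in particular $\lim_{T\to\infty}\mU_T(\SEGB)=\lim_{T\to\infty}\OPT_T$. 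The only genuine obstacle anywhere in this chain is Theorem~\ref{thm:optimal policy} itself, which is already established; the rest is routine manipulation of monotone, bounded sequences together with the translation supplied by Observation~\ref{obs: U get W}.
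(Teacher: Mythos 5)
Your proof is correct and follows exactly the paper's intended route: the paper derives the corollary as an immediate consequence of Observations~\ref{obs: U leq W*}--\ref{obs: U get W} together with Theorem~\ref{thm:optimal policy}, via the same sandwich $W^*(A)=W(\OGP,A)=\lim_T\mU_T(\SEGB)\le\lim_T\OPT_T\le W^*(A)$. The extra care you take about monotonicity of the limits and the almost-sure termination of the Bernoulli-trial phase is a harmless (and welcome) elaboration of details the paper leaves implicit.
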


\begin{algorithm}[t]
\renewcommand{\algorithmiccomment}[1]{\texttt{\kibitz{blue}{\##1}}}

\caption{Safe Exploration via GMDP and Bernoulli Trials ($\SEGB$) \label{alg:alg of pi}}
\begin{algorithmic}[1]
\STATE $s\gets A$
\WHILE[$s$ is not a terminal state] {$\OGP(s)\neq \emptyset$\label{algpi:while}}{
\STATE play $\OGP(s)$, and denote the realized action by $a_k$.\label{algpi:play with ogp}
\IF[a positive reward was realized]{$x_{a_k}>0$} {
		\STATE \textbf{break}. 
}
\ENDIF
\STATE $s\gets s\setminus \{a_k\}$.\label{algpi:update s}
}
\ENDWHILE
\IF{$x_{a_k}>0$ for some $a_k\in A$\label{algpi:pos then geo}} {
        \STATE explore all the remaining arms in $\above(A)$.\label{algpi:explore remaining}
        \STATE $a_{k^*} \gets \argmax_{a_i} x(a_i)$. \COMMENT{best among all the explored arms} 
		\STATE execute Bernoulli trials mixing $a_{k^*}$ with every other unexplored arm until all are revealed. \label{algpi:geo}
		\STATE play the best arm forever. \label{algpi:exploit best}
}
\ENDIF
\STATE \textbf{else}, play the safe arm $a_0$ forever. \label{algpi:exploit a0}
\end{algorithmic}
\end{algorithm}

\section{Instance-Dependent Convergence Rate}\label{sec:finite}
In this section, we derive convergence rates for $\SEGB$. First, we note that in certain cases, $\SEGB$ can be slightly modified to be optimal even for finite $T$. To illustrate, consider $(X(a_i))_i$ that are supported on $\{-1,+1\}$. Revealing a positive reward, which is 1 due to the $\{-1,+1\}$ support, suggests that we need not explore any further (since the other rewards cannot outperform $+1$); thus, the Bernoulli trials in Line~\ref{algpi:geo} are redundant. Formalizing this intuition,
\begin{proposition}\label{prop:bernoulli opt}
\omer{for the camera-ready and arxiv, use the commented one}
Fix any arbitrary $\ise$ instance such that $(X(a_i))_i \in \{x^-,x^+\}$ w.p. 1, i.e., the rewards take only two possible values almost surely. Let $\SEGB'$ be a modified version of $\SEGB$ that exploits once a reward of $x^+$ is realized. Then, there exists $T_0$ such that whenever $T\geq T_0$, $\mU_T(\SEGB') =\OPT_T$.
\end{proposition}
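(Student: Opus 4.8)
The plan is to show that the modified algorithm $\SEGB'$ both (i) achieves the GMDP-optimal value $W^*(A)$ in the limit and (ii) actually attains this value within finitely many rounds, and that (iii) $W^*(A)$ upper-bounds $\OPT_T$ for every $T$ in this two-valued setting, so that all three quantities coincide for $T$ large enough. The key simplification, already flagged in the proposition's statement, is that when $X(a_i)\in\{x^-,x^+\}$ a.s. the only "good" outcome is $x^+$, and the instant a single realization equals $x^+$ there is nothing left to learn: no arm can beat $x^+$. Hence the role of the Bernoulli trials in Line~\ref{algpi:geo} of $\SEGB$ is vacuous here, and $\SEGB'$ legitimately jumps straight to exploitation.

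First I would set up the finite-time optimality benchmark. Note $\OPT_T \le W^*(A)$ for every $T$: indeed the GMDP abstraction only relaxes the $\ise$ problem (Observation~\ref{obs: U leq W*} gives the limit statement, but the same coupling argument bounds every finite-$T$ utility by $W^*(A)$, since any $\ise$ algorithm's expected per-round reward is at most its terminal reward in the GMDP once it stops being able to improve, and in the two-valued case the terminal reward is exactly $\max_a X(a)$ whenever some explored arm gave $x^+$). I would make this coupling precise: run any $\ise$ algorithm, track the set of unexplored arms as a GMDP trajectory; once a positive (i.e. $x^+$) reward appears, every subsequent safe portfolio yields expected reward at most $x^+ = \max_a X(a)$, and if no $x^+$ ever appears the best attainable is $0$ (the safe arm). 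Summing, $\mU_T(\ALG)\le W^*(A)$ and in fact $\mU_T(\ALG)\le W^*(A)$ with the per-round rewards bounded, which is the inequality we need.

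Next I would analyze $\SEGB'$ directly. While following $\OGP$, $\SEGB'$ behaves exactly as $\SEGB$ until the first $x^+$-realization (or until a terminal state with no $x^+$, in which case it plays $a_0$ forever and matches the $0$-reward branch of $W^*(A)$ — here $\mU_T = 0$ for all $T$, trivially optimal). When an $x^+$ is realized at some (random) round $\tau$, $\SEGB'$ exploits immediately: from round $\tau+1$ onward it plays the arm with value $x^+$ every round, earning $x^+ = \max_a X(a) = R(\text{terminal state})$ per round. The only rounds on which $\SEGB'$ earns strictly less than the optimal per-round value are the finitely many pre-$\tau$ exploration rounds — and along the $\OGP$ trajectory the expected reward of $\bl p_{i,j}$ is exactly $0$, while the GMDP-optimal policy also collects $0$ reward on all non-terminal states (rewards accrue only at terminal states). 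So the expected \emph{total} reward of $\SEGB'$ over the first $\tau$ rounds equals that of $\OGP$ over its non-terminal prefix (both zero), and thereafter $\SEGB'$ collects $\max_a X(a)$ per round, matching $R$. Therefore $\E[\sum_{t=1}^T r^t(\SEGB')] = \E[(T-\tau)^+ \cdot R \cdot \ind\{x^+ \text{ realized}\}]$, and as $T\to\infty$ this $\to W^*(A)$ by Theorem~\ref{thm:optimal policy} and Observation~\ref{obs: U get W}. The point is that $\tau$ is bounded: $\OGP$ reaches a terminal state within at most $K$ steps, so $\tau \le K$ deterministically. Hence for every realization, $\SEGB'$ is in its exploitation phase (or permanent-$a_0$ phase) by round $K+1$.

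Finally, to extract a single $T_0$: since $\tau \le K$ always, for $T \ge K$ we have $\mU_T(\SEGB') = \mU_K(\SEGB') + (T-K)\cdot \E[R \cdot \ind\{x^+\}]$ where the increment per extra round is exactly the optimal per-round asymptotic rate, and one checks $\OPT_T - \OPT_{T-1} \le \E[R\cdot\ind\{x^+\}]$ as well (an extra round can add at most the best exploitation value). Combined with $\mU_T(\SEGB')\le \OPT_T \le W^*(A)$ and $\lim_T(\mU_T(\SEGB') - W^*(A)\text{-normalized})$... more cleanly: the gap $\OPT_T - \mU_T(\SEGB')$ is nonincreasing in $T$ for $T\ge K$ (each additional round contributes the same optimal amount to $\SEGB'$ and at most that amount to $\OPT$), is nonnegative, and tends to $0$; a nonincreasing nonnegative integer-indexed sequence tending to $0$ is eventually $0$, giving $T_0$.

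\textbf{Main obstacle.} The delicate part is the finite-time upper bound $\OPT_T \le W^*(A)$ together with the "eventually-zero gap" argument: I need that each additional round past the exploration horizon adds \emph{exactly} the same marginal value to $\SEGB'$ as the maximum marginal value it could add to $\OPT_T$, so that the gap stops shrinking only by becoming zero rather than by converging to a positive constant. This requires arguing carefully that $\OPT_T$'s optimal per-round "tail rate" is $\E[\max_a X(a)\cdot\ind\{\exists \text{ explorable } x^+\}] = W^*(A)$-consistent, i.e. that no $\ise$ algorithm can do asymptotically better than $\SEGB$ — which is exactly the content of the Corollary following Theorem~\ref{thm:optimal policy} — and then upgrading "asymptotic equality" to "finite-time equality" via the monotone-gap observation, using crucially that $\tau\le K$ is bounded (so there is no accumulating loss) and that in the two-valued case exploitation is genuinely optimal from round $\tau{+}1$ on (no residual exploration is ever worthwhile, since $x^+$ is the maximum possible value).
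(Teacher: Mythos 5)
Your high-level plan (exploration lasts at most $K$ rounds, exploiting $x^+$ is immediately optimal once it is seen, so asymptotic optimality should upgrade to exact finite-$T$ optimality) matches the intuition behind the paper's proof, but the closing argument is logically broken and a key difficulty is skipped. The final step asserts that a nonnegative, nonincreasing sequence tending to $0$ is eventually $0$ --- this is false ($g_T = 2^{-T}$ is a counterexample); you would need the gap to be eventually \emph{constant}, whereas you only argue each extra round adds ``at most'' the asymptotic rate to $\OPT_T$. Moreover, you have not shown that the gap of cumulative utilities $\OPT_T - \mU_T(\SEGB')$ tends to $0$ at all: the Corollary to Theorem~\ref{thm:optimal policy} is an equality of limits that does not control the difference of totals (two sequences can have the same per-round limiting rate while their difference stays at a positive constant), and the auxiliary claim $\OPT_T - \OPT_{T-1} \le \E\left[R\cdot\ind\{x^+\}\right]$ is unproved and essentially as hard as the proposition itself.

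The paper instead argues in two concrete steps. Step 1: for $T > T_0 := K + K(H+1)/\omega$, where $\omega>0$ is the minimum suboptimality gap of $W(\cdot,A)$ over the finitely many strictly suboptimal $\mP$-valid policies, any optimal algorithm's exploration phase must have the $\OGP$ structure (a crude $\pm K(H+1)$ accounting of exploration rounds against the per-round loss $\omega$ forces this). Step 2 is the difficulty your proposal misses entirely: $\OGP$ leaves the choice of $a_i \in \above(s)$ arbitrary (Line~\ref{policy:pick arbitrary}), and two such choices could a priori reach exploitation at different random times, giving different finite-$T$ utilities even though both terminate within $K$ steps and both are asymptotically optimal. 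The paper resolves this by computing that the one-step probability of discovering a reward of $H$ under $\bl p_{i,j}$ equals $\frac{1}{H+1}$ independently of which $a_i\in\above(s)$ is mixed in, and by additionally requiring $\SEGB'$ to play the singleton portfolios $\bl p_{i,i}$ in stochastic order once $\below(s)=\emptyset$. Without these ingredients (or some substitute for them), your argument does not close. A smaller inaccuracy: not every exploration round has expected reward $0$ --- rounds playing $\bl p_{i,i}$ after $\below(s)$ is exhausted earn $\mu(a_i)>0$ --- so the bookkeeping of the pre-exploitation phase also needs care.
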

\omer{for this to occur, $\SEGB'$ must use a policy that orders $\above(A)$ according to the stochastic order. And, it should stop using $\OGP$ once it realizes a positive arm.}
However, $\SEGB$ is sub-optimal in the general finite case. This sub-optimality comes from various factors. First, the Bernoulli trials in Line~\ref{algpi:geo} might be worthless in expectation. To illustrate, assume that there are a few rounds left, and one unexplored arm from $\below(A)$. In such a case, the one-time cost of exploring this arm can be significantly lower than the value it brings to the remaining rounds. Second, recall that when we get to Line~\ref{algpi:explore remaining}, some of the arms in $\above(A)$ might still be explored. Deciding whether to mix such arms in Bernoulli trials or exploit them constitutes another barrier. We leave this technical challenge for future work. Instead, our goal is to derive instance-dependent bounds for Algorithm~\ref{alg:alg of pi}. 

Given an $\ise$ instance, let $\delta_1 = \min_{a_i\in \below(A): X(a_i)>0} X(a_i)$. Namely, $\delta_1$ is the lowest positive reward among the arms of $\below(A)$.\footnote{We refer to the \textit{realized} values. Our analysis has the same spirit as  expressing the regret in terms of the actual gaps in MABs \cite{agrawal2012analysis}.} In addition, let $\delta_2 = \max_{a_i\in \above(A): X(a_i)>0} X(a_i)$ denote the highest positive reward among the arms of $\above(A)$. Recall that in Line~\ref{algpi:geo} of Algorithm~\ref{alg:alg of pi}, we execute Bernoulli trials with the best seen arm. The lower bound on the reward of that arm is $\delta=\max\{\delta_1,\delta_2\}$.\footnote{If $\delta_1$ and $\delta_2$ are not defined, both $\SEGB$ and $ \OPT_T$ use the safe arm $a_0$ after at most $K$ rounds.}  Furthermore, let $\gamma = \max_{a_i\in A:\mu(a_i)<0}\abs{\mu(a_i)}$. Notice that $\gamma$ quantifies the highest absolute value of an arm in $\below(A)$. The success probability of each Bernoulli trial (which leads to exploring one additional arm) is lower bounded by 
$ \frac{\delta}{\delta+\gamma}$. Consequently, after $K(1+\frac{\gamma}{\delta})$ Bernoulli trials in expectation, we would explore all arms. We formalize this intuition via Proposition~\ref{prop:i-d bounds} below.
\begin{proposition}\label{prop:i-d bounds}
Fix any arbitrary $\ise$ instance satisfying Assumption~\ref{assumption:dominance}, and let $\delta>0$ and $\gamma$ be the quantities defined above. There exists $T_0$ such that whenever $T>T_0$, it holds that
$
\mU_T(\SEGB) \geq \left(  1-\frac{K(1+\frac{\gamma}{\delta})}{T}\right) \OPT_T.
$
\end{proposition}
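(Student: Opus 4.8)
The plan is to compare $\SEGB$'s utility against $\OPT_T$ by conditioning on the random history up to the moment $\SEGB$ first realizes a positive reward, and then bounding the "wasted" rounds spent on Bernoulli trials. First I would set up notation: let $\tau_0$ be the (random) round at which $\SEGB$ exits the \textbf{while} loop of Line~\ref{algpi:while}, either because it realized a positive reward or because it reached a terminal state. Since there are only $K$ arms and each iteration of the loop explores a fresh arm, $\tau_0 \le K$ deterministically. If no positive reward is ever realized, then by the construction of the GMDP's terminal reward function, $\OPT_T$ also cannot do better than $0$ from round $\tau_0$ onward (a terminal GMDP state with $\above(s)=\emptyset$ and all explored arms negative forces the safe arm), so on this event $\mU_T(\SEGB)$ and $\OPT_T$ agree up to a bounded additive constant that is absorbed into the $K/T$ slack. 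The interesting case is the event $E$ that a positive reward $x(a_k) > 0$ is realized at some round $\tau_0 \le K$.

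On the event $E$, condition on the entire realization of rewards and on $\tau_0$. The key quantity is $N$, the number of additional rounds $\SEGB$ spends in Line~\ref{algpi:explore remaining} exploring the remaining arms of $\above(A)$ plus the Bernoulli trials of Line~\ref{algpi:geo} before all arms are revealed. Exploring the remaining $\above(A)$ arms takes at most $K$ rounds. For the Bernoulli trials: by the argument in the paragraph preceding the proposition, each trial mixes the best-seen arm (whose realized reward is at least $\delta = \max\{\delta_1,\delta_2\} > 0$) with an unexplored arm from $\below(A)$ whose expected value has absolute value at most $\gamma$; the portfolio $\bl p_{i,j}$ from \eqref{eq:blp from body} then assigns the unexplored arm probability at least $\frac{\delta}{\delta+\gamma}$. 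Hence the number of trials needed to reveal each of the (at most $K$) unexplored arms is stochastically dominated by a sum of geometric random variables, giving $\E[N \mid E, \text{rewards}] \le K\bigl(1 + \frac{\gamma}{\delta}\bigr)$ — actually I would phrase the bound as: the expected number of rounds until full exploration is at most $K(1+\frac{\gamma}{\delta})$, subsuming the trivial $K$ rounds for exploring $\above(A)$ into the same order of magnitude (or carry both terms; the paper's statement only needs the stated bound, so I would verify the constant carefully and, if needed, note the bound $\le K + K\frac{\gamma}{\delta} \le K(1+\frac{\gamma}{\delta})$ holds since $\delta \le \gamma$ or can be tightened).

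Next I would argue that after full exploration, $\SEGB$ plays $\argmax_a X(a)$ forever, which is exactly the per-round reward of the unconstrained optimum; and before round $\tau_0$, every portfolio $\SEGB$ plays is safe with expected reward exactly $0$ (the $\bl p_{i,j}$'s and the final safe-arm fallback all yield $0$), while $\OPT_T$ also earns a bounded amount in those first $\tau_0 \le K$ rounds. Putting it together: letting $v^* = \max_a X(a)$ be the best realized reward, on event $E$ we have $\mU_T(\SEGB) \ge \E\bigl[(T - \tau_0 - N) \, v^* \mid E\bigr]$ and $\OPT_T \le T \, v^* \cdot \pr(E) + (\text{small})$; more carefully I would work round-by-round in expectation so that $\OPT_T \le \E[T v^*] + O(K)$ and $\mU_T(\SEGB) \ge \E[(T - K - N)v^* \,\ind_E]$, and then divide. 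Since $\E[N \mid E] \le K(1+\frac{\gamma}{\delta})$ and $\tau_0 \le K$, the multiplicative loss is at most $\frac{K(1+\gamma/\delta) + O(K)}{T}$, and choosing $T_0$ large enough to absorb the lower-order $O(K)$ terms into the stated $\frac{K(1+\gamma/\delta)}{T}$ factor (or, cleanly, noting that $\OPT_T \to \lim \OPT_T$ and both sides scale linearly so the additive constants become negligible) yields the claim.

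The main obstacle I anticipate is making the comparison between $\SEGB$ and $\OPT_T$ rigorous at the level of expectations rather than pathwise: $\OPT_T$ is the optimum over all safety-constrained algorithms, so I cannot assume it knows the rewards in advance, and I must argue that no safe algorithm can extract more than roughly $T$ times the best realized reward in expectation — this is where Observation~\ref{obs: U leq W*} and the GMDP upper bound $W^*(A)$ come in, but transferring the asymptotic bound $W^*(A)$ to a clean finite-$T$ statement of the form $\OPT_T \le T \cdot \E[\max_a X(a) \mid \text{a positive reward is reachable}] + O(K)$ requires care, since $\OPT_T$ itself must spend rounds exploring and pays the same kind of exploration cost. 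The cleanest route is probably to prove $\OPT_T \le \E[(T - \mathrm{const}) v^* \, \ind_{E'}] + \mathrm{const}$ where $E'$ is the event that \emph{some} algorithm could realize a positive reward, show $E' = E$ up to null events (both reduce to "there exists a reachable safe exploration sequence uncovering a positive arm," which is a property of the instance's realized rewards, not the algorithm), and then the same $v^*$ appears on both sides and cancels in the ratio — reducing everything to bounding $\E[\tau_0 + N]$, which is the geometric-sum computation already sketched.
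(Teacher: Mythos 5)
Your proposal is correct and follows essentially the same route as the paper: bound the expected number of exploration rounds by $K\left(1+\frac{\gamma}{\delta}\right)$ via the geometric/Bernoulli-trial argument, use safety to credit those rounds with expected reward at least $0$, and charge the remaining rounds at the asymptotically optimal per-round value (the paper handles the $\OPT_T$ upper bound you worry about with a one-line appeal to Observation~\ref{obs: U leq W*} and the corollary of Theorem~\ref{thm:optimal policy}, without the finite-$T$ care you describe). The only accounting point — which you already flag — is that one should not add $\tau_0\le K$, the $\above(A)$ rounds, and $K$ Bernoulli targets separately; the paper instead splits the $K$ arms into $K_1$ explored at cost one round each and $K-K_1$ explored at expected cost $1+\frac{\gamma}{\delta}$ each, so the total is $K_1+(K-K_1)\left(1+\frac{\gamma}{\delta}\right)\le K\left(1+\frac{\gamma}{\delta}\right)$.
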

For example, assume that $(X(a_i))_i$ are arbitrarily distributed in the discrete set $\{-H,\dots,0,\dots H\}$. In such a case, $\frac{\gamma}{\delta} \leq H$; thus, $\SEGB$ is optimal up to a multiplicative factor of $\frac{K(H+1)}{T}$ (or an additive factor of $\frac{KH(H+1)}{T}$).

\section{Conclusion and Open Problems}\label{sec:discussion}
We presented a model that is inspired by recent work on safe reinforcement learning. In our model, safety constraints apply to \textit{every} round. Moreover, reaching optimality relies heavily on careful planning, which is one more crucial element of safety-critical systems.

We see considerable scope for future work. On the technical side, one possible follow-up is to relax Assumption~\ref{assumption:dominance}. Once we relax it, $\OGP$ ceases to be optimal due to incorrect planning. We conjecture that a different Gittins index-like policy is optimal; see~{\ifnum\Includeappendix=0{the appendix}\else{Proposition \ref{prop:index with ugeq one} in the appendix}\fi} for more details. A second possible follow-up is to consider stochastic bandits with Bayesian priors. Our Definition~\ref{def:bayesian safety} is readily extendable to this case as well. 

Conceptually, our definition of safety concerns the expected value, yet in many settings other factors should be taken into account, e.g., variance. In such a case, our constraint can be generalized to $\sum_{a_i\in A}\bl p (a_i)\E\left[f(X(a_i))\mid \mI\right]\geq 0$, for some function $f$. This more general formulation can express risk-aversion or risk-seeking behavior, as well as more complicated quantities that depend on the reward distribution. We note that our results from Section~\ref{sec:infinite} hold for any general $f$, as long as it agrees with the stochastic order. Finally, our notion of invariable safety could be applied to other explore--exploit models, e.g., to MDPs.

{\ifnum\Putacknowledgement=1{
\section*{Acknowledgements}
The work of G. Bahar, O. Ben-Porat and M. Tennenholtz is funded by the European Research Council (ERC) under the European Union's Horizon 2020 research and innovation programme (grant agreement n$\degree$  740435). The work of K. Leyton-Brown is funded by the NSERC Discovery Grants program, DND/NSERC Discovery Grant Supplement, Facebook Research and Canada CIFAR AI Chair Amii. Part of this work was done while K. Leyton-Brown was a visiting researcher at Technion - Israeli Institute of Science and was partially funded by the European Union's Horizon 2020 research and innovation programme (grant agreement n$\degree$  740435).
}\fi}


{\ifnum\Includeappendix=1{ 
\appendix

\section{Proof Outline for Theorem \ref{thm:optimal policy}}\label{sec:thm1 outline}
In this section, we outline the proof of Theorem~\ref{thm:optimal policy}. We begin with several notations and definitions we use extensively in the proof.
\subsection{Preliminaries}
We denote the set of all states by $\mS=2^A$. A \textit{policy} is a mapping from previous states and actions to a randomized action. Formally, let $\mH$ be the set of histories, $\mH=\cup_{k=0}^K\left(\mS \times \Delta(A)\right)^k$ be a tuple of pairs of states and randomized action taken. A policy $\pi$ is a function $\pi:\mH \times \mS \rightarrow \Delta(A)$. We say that a policy is \textit{safe} if for every $h\in \mH,s\in \mS$, $\pi(h,s)\in \safe(s)$. From here on, we consider safe policies solely. Given a policy $\pi$ and a pair $(h,s)$, we let $W(\pi,h,s)$ denote the expected reward of $\pi$ when starting from $s$ after witnessing $h$. Namely,
\begin{align}\label{eq:W elaborated}
W(\pi,h,s) = 
\begin{cases}
R(s) & \textnormal{if }\safe(s)=\emptyset\\
\sum_{a\in s}\pi(h,s)(a)W(\pi,h\oplus(s,a),s\setminus \{a\}) & \textnormal{otherwise}
\end{cases}.
\end{align}
For every state $s$, let $W^*(s)=\sup_{\pi'}W(\pi ',\emptyset,s)$. \footnote{As we show in Proposition \ref{prop:optimal p valid}, there exists a policy that attains this supremum.}While policies may depend on histories, it is often suffice to consider stationary policies.
\begin{definition}[Stationary]
A safe policy $\pi$ is \textit{stationary} if for every two histories $h,h' \in \mH$ and a state $s \in \mS$, $\pi(h,s)=\pi(h',s)$.
\end{definition}
As we show later in Proposition \ref{prop:optimal p valid}, there exists an optimal stationary policy; hence, from here on we address stationary policies solely. When discussing stationary policies, we thus neglect the dependency on $h$, writing $\pi(s)$. For stationary policies, the definition of $W$ is much more intuitive: Given a stationary policy $\pi$ and a state $s$,
\begin{align}\label{eq:w with terminal}
W(\pi,s)=\sum_{\substack{s_t\in \mS:\\ s_t \textnormal{ is terminal}}}\Pr(\pathto{s}{s_t})R(s_t),
\end{align}
where $\pathto{s}{s_t}$ indicates the event that, starting from $s$ and following the actions of $\pi$, the GMDP terminates at $s_t$. 

An additional useful notation is the following. For every state $s\in \mS$, we denote by $Q(\pi,s)$ the probability starting at $s\subseteq A$, following the policy $\pi$ and exploring all arms. Formally,
\[
Q(\pi,s)= \Pr(\pathto{s}{\emptyset}),
\]
Note that $Q$ is defined recursively: Namely, if $\pi(s)=\bl p_{i,j}$ for a non-terminal state $s$, then
\[
Q(\pi,s)=\bl p_{i,j}(a_i)Q(\pi,s\setminus\{a_i\})+\bl p_{i,j}(a_j)Q(\pi,s\setminus\{a_j\}).
\]
It will sometimes be convenient to denote $Q(\pi,\above(s),\below(s))$ for $Q(\pi,s)$, thereby explicitly stating the two distinguished sets of arms.

\subsection{Binary Structure}\label{subsec:bin}
A structural property of the above GMDP is that in every terminal state $s_t$, $\above(s_t)=\emptyset$, or otherwise we could explore more arms; thus, intuitively, the arms in $\above(A)$ provide us ``power'' to explore the arms of $\below(A)$. Following this logic, in every state we should aim to explore arms from $\below(A)$ and not those of $\above(A)$, subject to satisfying the safety constraint.

Recall the definition of $\bl p_{i,j}$ and $\bl p_{i,i}$ from Equation~\ref{eq:blp from body} in Subsection~\ref{subsec:optimal GMDP policy}. Next, we define $\mP,\mP'$ such that
\[
\mP\defeq \{\bl p_{i,j}\mid a_i\in \above(A), a_j\in \below(A) \}, \qquad \mP'\defeq\{\bl p_{i,i}\mid a_i\in \above(A) \}.
\]
Notice that $\mP\cup \mP'$ includes $O(K^2)$ actions, while $\safe(s)$ for a state $s$ is generally a convex polytope  with infinitely many actions. Further, in every non-terminate state $s$, ${\safe(s) \cap (\mP \cup \mP') \neq \emptyset}$.
\begin{definition}[$\mP$-valid]
A safe policy $\pi$ is $\mP$-valid if for every non-terminal state $s\in \mS$,
\begin{itemize}
\item if $\below(s)\neq \emptyset$, then $\pi(s)\in \mP$;
\item else, if $\below(s)= \emptyset$, then $\pi(s)\in \mP'$.
\end{itemize} 
\end{definition}
Observe that $\mP$ is a strict subset of all the safe actions in the state $A$, which incorporate mixes of at most two arms. However, the set of safe actions $\safe(s)$ for $s\subseteq A$ may include distributions mixing several elements of $A$. Due to the convexity of $W(\pi,s)$ in $\pi(s)$ (see the elaborated representation of $W$ in Equation \refeq{eq:W elaborated}), the GMDP exhibits a nice structural property, as captured by the following Proposition \ref{prop:optimal p valid}. 
\begin{proposition}\label{prop:optimal p valid}
There exists an optimal policy which is $\mP$-valid.
\end{proposition}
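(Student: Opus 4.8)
The plan is to prove the proposition in three moves, following the outline: reduce to a stationary optimal policy, replace each action by a vertex of the feasible polytope, and argue that a \emph{mixing} vertex can be chosen whenever $\below(s)\neq\emptyset$. For the first move, note that every GMDP transition deletes one arm from the current state, so every trajectory from $A$ terminates within $K$ steps and this is a finite-horizon MDP over the finite state space $\mS=2^A$, albeit with a continuum of actions $\safe(s)$ at each state. The transitions depend only on which arm Nature samples and not on the realized rewards, so we may integrate out the reward realizations and treat the terminal payoff at a terminal $s$ as the deterministic number $\bar R(s)\defeq\E[R(s)]$; observe that $\bar R(s)$ is nondecreasing in the explored set $A\setminus s$, since on the event that some explored arm is positive the quantity $\max_{a\in A}X(a)$ is positive and enlarging $A\setminus s$ only enlarges that event. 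Backward induction on $|s|$ then gives $W^*(s)=\bar R(s)$ for terminal $s$ and $W^*(s)=\max_{\bl p\in\safe(s)}\sum_{a\in s}\bl p(a)\,W^*(s\setminus\{a\})$ otherwise, the maximum being attained because, for non-terminal $s$, $\safe(s)$ is a nonempty compact polytope and the objective is linear in $\bl p$; letting $\pi^*(s)$ be a maximizer, a routine induction on $|s|$ through \eqref{eq:W elaborated} shows $W(\pi,h,s)\leq W^*(s)$ for every safe policy $\pi$ and history $h$, with equality for $\pi^*$, so an optimal stationary policy exists and it suffices to search among stationary policies.

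For the second move, fix $s$: the objective $\bl p\mapsto\sum_{a\in s}\bl p(a)W^*(s\setminus\{a\})$ is linear, hence maximized at a vertex of $\safe(s)=\{\bl p\in\Delta(s):\sum_{a\in s}\bl p(a)\mu(a)\geq0\}$, so $\pi^*(s)$ may be taken to be such a vertex. A vertex makes at least $|s|-1$ of the constraints $\{\bl p(a)\geq0\}_{a\in s}$ and $\{\sum_a\bl p(a)\mu(a)\geq0\}$ tight within the hyperplane $\sum_a\bl p(a)=1$: if the mean constraint is slack, $|s|-1$ coordinates vanish and $\bl p=\bl p_{i,i}$ for some $a_i$ with $\mu(a_i)\geq0$, i.e.\ $a_i\in\above(s)$ (no arm has mean $0$); if it is tight, $|s|-2$ coordinates vanish and the surviving pair $a_i,a_j$ solves $\bl p(a_i)+\bl p(a_j)=1$ and $\bl p(a_i)\mu(a_i)+\bl p(a_j)\mu(a_j)=0$, whose unique solution is $\bl p_{i,j}$ of \eqref{eq:blp from body}, feasible precisely when $\mu(a_i)>0>\mu(a_j)$. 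Conversely each such point is a genuine vertex, so the vertices of $\safe(s)$ are exactly $\{\bl p_{i,i}:a_i\in\above(s)\}\cup\{\bl p_{i,j}:a_i\in\above(s),\,a_j\in\below(s)\}$; hence $\pi^*$ can be taken to act within $\mP\cup\mP'$ everywhere, and when $\below(s)=\emptyset$ this already forces $\pi^*(s)\in\mP'$.

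For the third move, it remains to handle states with $\below(s)\neq\emptyset$, at which the vertex maximizer might be some $\bl p_{i,i}\in\mP'$. I would establish the monotonicity lemma: for every $s$, $a_i\in\above(s)$, $a_j\in\below(s)$, $W^*(s\setminus\{a_j\})\geq W^*(s\setminus\{a_i\})$, i.e.\ deleting an a priori inferior arm is weakly preferable to deleting a superior one, since one retains more exploration budget. Given this, if $\bl p_{k,k}$ is a vertex maximizer at $s$, then for any $a_j\in\below(s)$ the mixing vertex $\bl p_{k,j}$ has value $\bl p_{k,j}(a_k)W^*(s\setminus\{a_k\})+\bl p_{k,j}(a_j)W^*(s\setminus\{a_j\})\geq W^*(s\setminus\{a_k\})$, so the maximum over all vertices is also attained inside $\mP$; redefining $\pi^*$ on such states leaves the value unchanged and makes it $\mP$-valid, which proves the proposition. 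The lemma itself I would prove by induction on $|s|$: with $s'=s\setminus\{a_i\}$, $s''=s\setminus\{a_j\}$ and $\pi'$ an optimal stationary policy for $s'$ all of whose actions are vertices, case on $\pi'(s')$ (namely $s'$ terminal, or $\pi'(s')$ equal to some $\bl p_{k,k}$, to some $\bl p_{k,l}$ with $l\neq j$, or to some $\bl p_{k,j}$), and in each case produce a safe portfolio at $s''$ whose successor values dominate those of $\pi'$ at $s'$ term by term, invoking the induction hypothesis on the strictly smaller states $s\setminus\{a_k\}$, $s\setminus\{a_l\}$, or $s'$ itself, and monotonicity of $\bar R$ when a terminal state is reached.

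The crux is this third move. Moves one and two are the ``canonical arguments'' (finite-horizon dynamic programming plus linearity of the per-state objective); the monotonicity lemma is where the genuine work lies, precisely because swapping a below arm for an above arm changes the set of feasible portfolios, so the two trajectories cannot be coupled arm-for-arm --- the inductive case analysis that reduces every configuration to the hypothesis on a strictly smaller state (or to monotonicity of $\bar R$) is the delicate part. A lesser point requiring care is pinning down the vertices of $\safe(s)$ in the degenerate cases $\above(s)=\emptyset$ ($\safe(s)=\emptyset$, terminal) and $\below(s)=\emptyset$ ($\safe(s)=\Delta(s)$, all vertices pure).
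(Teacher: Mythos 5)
Your proof is correct, and its first two moves are in substance the paper's own argument: the paper likewise reduces to a per-state linear program over $\safe(s)$ and observes that every safe portfolio decomposes as a convex combination of the portfolios in $\mP\cup\mP'$, so by linearity the optimum is attained at one of them; your vertex enumeration of the polytope is a cleaner way of saying the same thing. Where you genuinely diverge is your third move. The paper's proof stops at ``the optimal action lies in $\mP\cup\mP'$'' and never rules out that the maximizer at a state with $\below(s)\neq\emptyset$ is a singleton $\bl p_{i,i}\in\mP'$, which the definition of $\mP$-validity forbids; closing that hole requires exactly your monotonicity lemma $W^*(s\setminus\{a_j\})\geq W^*(s\setminus\{a_i\})$ for $a_i\in\above(s)$, $a_j\in\below(s)$, so your argument is in this respect more complete than the one in the paper. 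Your inductive sketch of that lemma does go through: the terminal case follows from monotonicity of the expected terminal reward in the explored set, and the cases $\bl p_{k,k}$ and $\bl p_{k,l}$ ($l\neq j$) follow by replaying the same portfolio at $s\setminus\{a_j\}$ and applying the induction hypothesis to $s\setminus\{a_k\}$ and $s\setminus\{a_l\}$. The one case needing care is $\pi'(s\setminus\{a_i\})=\bl p_{k,j}$, where the optimal policy wants to explore the very arm $a_j$ that is already gone from $s\setminus\{a_j\}$; there the right move is to play $\bl p_{i,i}$ at $s\setminus\{a_j\}$, bound both successor values of $\bl p_{k,j}$ by $W^*(s\setminus\{a_i,a_j\})$ via the induction hypothesis applied to $s\setminus\{a_i\}$ itself (with above arm $a_k$ and below arm $a_j$), and conclude by convexity --- consistent with your mention of invoking the hypothesis ``on $s'$ itself,'' but this is the step to write out explicitly when you finalize the lemma.
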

The proof of Proposition \ref{prop:optimal p valid} appears in Section~\ref{sec:aux}. 
Due to Proposition \ref{prop:optimal p valid}, we shall focus on $\mP$-valid policies. Such policies are easy to visualized using trees, as we exemplify next.\footnote{In Figure~\ref{example with normal} we illustrated the optimal policy using a graph that is not a tree. However, a tree structure  serves better the presentation of our technical statements.}
\begin{example}\label{example with four}
We reconsider Example~\ref{example with normal}, but neglect the actual distributions (as we only care about the expected values). Let $A=\{a_1,a_2,a_3,a_4\}$, with $\above(A)=\{a_1,a_2\}$ and $\below(A)=\{a_3,a_4\}$. Consider the tree description in Figure \ref{fig:tree example}. The root of the tree is the set of all arms. At the root, the policy picks $\bl p_{1,3}$. The outgoing left edge represents the case the realized action is $a_1$, which happens w.p. $\bl p_{1,3}(a_1)$. In such a case, the new state is $\{a_2,a_3,a_4\}$. With the remaining probability, $\bl p_{1,3}(a_3)$, the new state will be $\{a_1,a_2,a_4\}$. Leaves of the tree are terminal states, where no further exploration could be done. For instance, in the leftmost leaf, $\{a_3,a_4\}$, the only arms explored are $\{a_1,a_2\}$. The two highlighted nodes represent the same state. Since the presented policy is $\mP$-valid, it is stationary; hence, the policy acts exactly the same in these two nodes and their sub-trees.
\end{example}
Notice that the tree in Figure \ref{fig:tree example} represents only the \textit{on-path} states, i.e., states that are reachable with positive probability, while policies are functions from the entire space of states; thus, two different policies can be described using the same tree. Nevertheless, the tree structure is convenient and will be used extensively in our analysis. When we define a policy using a tree, we shall also describe its behavior at \textit{off-path} states.

The policy exemplified in Figure \ref{fig:tree example} has an additional combinatorial property: In every state $s$, it takes an action according to some order of the arms. This property is manifested in the following Definitions \ref{def:right ordered} and \ref{def:left ordered}.
\begin{definition}[Right-ordered policy]\label{def:right ordered}
A $\mP$-valid policy $\pi$ is right-ordered if there exist a bijection $\sigr_\pi: \below(A)\rightarrow [\abs{\below(A)}]$ such that in every state $s$ with $\below(s) \neq \emptyset$, $\pi(s)=\bl p_{i, {j^*}}$ where $a_i \in \above(s)$ and $a_{j^*} = \argmin_{a_j \in \below(s)} \sigr_\pi(a_j)$.
\end{definition}
\begin{definition}[Left-ordered policy]\label{def:left ordered}
A $\mP$-valid policy $\pi$ is left-ordered if there exist a bijection $\sigl_\pi: \above(A)\rightarrow [\abs{\above(A)}]$ such that in every state $s$ with $\below(s) \neq \emptyset$, $\pi(s)=\bl p_{{i^*}, j}$ where $a_j \in \below(s)$ and $a_{i^*} = \argmin_{a_i \in \above(s)} \sigl_\pi(a_i)$.
\end{definition}
In addition, we say that a policy is \textit{ordered} if it is right-ordered and left-ordered. To illustrate, observe the example in Figure \ref{fig:tree example}. The tree depicts an ordered policy, with $\sigl=(a_1,a_2)$ and $\sigr=(a_3,a_4)$. Notice that ordered policies are well-defined for off-path states.
\begin{figure}[t]
\centering
\includegraphics[scale=1.]{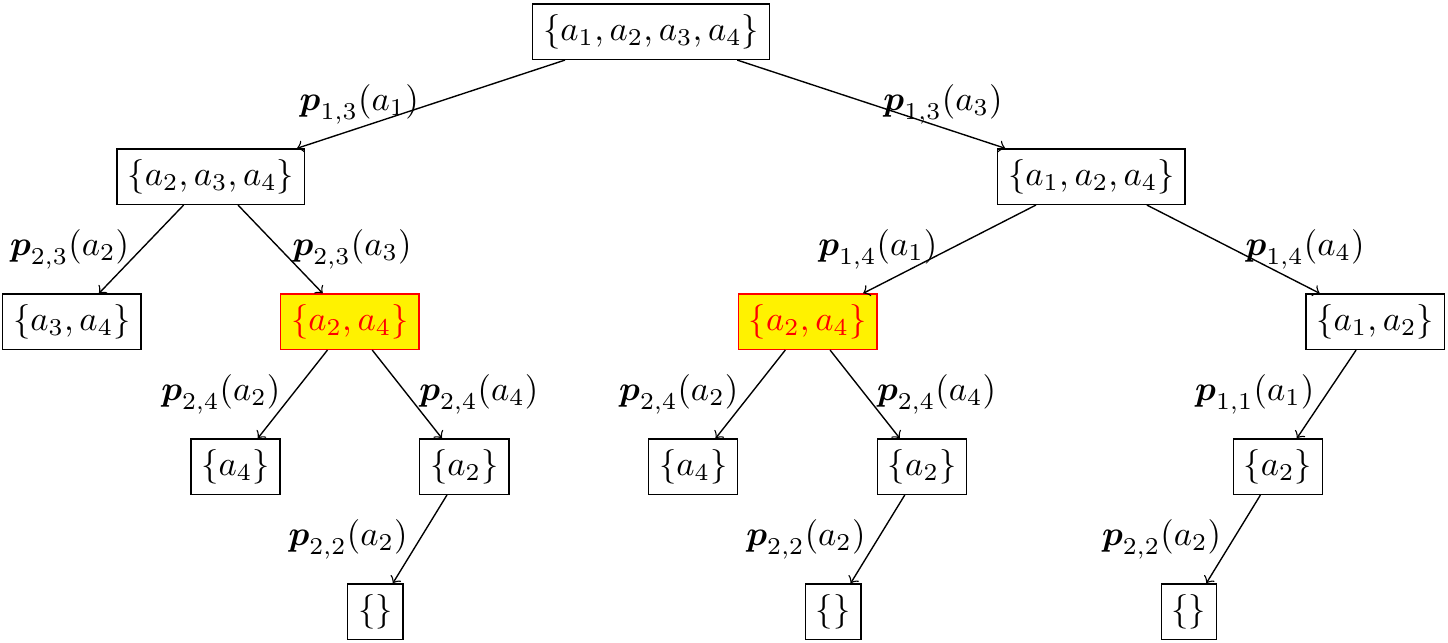}
\label{fig:tree example}
\caption{The policy described in Example \ref{example with four}.  
Every node represents a state (the mapping is onto, but not one-to-one). Outgoing left edges imply coin flips resulted in an arm from $\above(v)$, and outgoing right edges imply an arm from $\below(v)$. Leaves correspond to terminal states, where no action could be taken. 
}
\end{figure}
\subsection{Stochastic Dominance and Non-triviality}\label{subsec:stopchastic}
In this subsection, we demonstrate why the problem is still challenging even under Assumption \ref{assumption:dominance}. Recall that Proposition \ref{prop:optimal p valid} ensures that an optimal $\mP$-valid policy exists. One natural candidate for the optimal policy is ordered policy $\pi$ with with any order $\sigl_\pi$ and $\sigr_\pi$ that follow the stochastic order on $\below(A)$. Indeed, as we show formally in Theorem~\ref{thm:holy grail}, this intuition is appropriate. However, we explain shortly, the optimally of this policy cannot be shown without further work.

Consider a state $s\in\mS$, such that $\above(s),\below(s) \geq 2$. Let $a_i = \argmin_{a_{i'}\in\above(s)}\sigl_\pi(a_{i'})$, and  $a_j = \argmin_{a_{j'}\in\below(s)}\sigr_\pi(a_{j'})$. In addition, let $a_{\tilde j}\in \below(s), a_{\tilde j} \neq a_j$. The action $\bl p_{i,j}$, which mixes the minimal elements according to the stochastic order, is weakly superior to  $\bl p_{i,{\tilde{j}}}$ if 
{
\thinmuskip=.2mu
\medmuskip=0mu plus .2mu minus .2mu
\thickmuskip=1mu plus 1mu
\begin{align}\label{eq:why hard}
&\bl p_{i,j}(j)W^*(s\setminus \{a_j\})-\bl p_{i,{\tilde j}}({\tilde j})W^*(s\setminus \{a_{\tilde j}\})+\left(\bl p_{i,j}(i)-\bl p_{i,{\tilde j}}(i)\right)W^*(s\setminus\{a_i\}) \geq 0.
\end{align}}%
By our selection of $a_j,a_{\tilde j}$, we know that $\bl p_{i,j}(i)-\bl p_{i,{\tilde j}}(i) \leq 0$; hence, the third term is non-positive. Moreover, as we show in Claim \ref{claim:ass is not for W} in Section~\ref{sec:aux}, 
stochastic dominance does not imply that $W^*(s\setminus \{a_j\}) \geq W^*(s\setminus \{a_{\tilde j}\})$; thus, it is not even clear that the expression
{
\thinmuskip=.2mu
\medmuskip=0mu plus .2mu minus .2mu
\thickmuskip=1mu plus 1mu
\begin{align*}
&\bl p_{i,j}(j)W^*(s\setminus \{a_j\})-\bl p_{i,{\tilde j}}({\tilde j})W^*(s\setminus \{a_{\tilde j}\})
\end{align*}}%
which accounts for the first two terms in Inequality (\ref{eq:why hard}), is non-negative. Therefore, we cannot claim for Inequality  (\ref{eq:why hard}) without revealing the structure of $W^*$, even when Assumption \ref{assumption:dominance} holds. 
\subsection{Proof Overview}\label{subsec:results}
We are ready to prove Theorem~\ref{thm:optimal policy}. The main tool in our analysis is Lemma \ref{lemma:equivalence}. Lemma \ref{lemma:equivalence} reveals a rather surprising feature of $Q$: $Q$ is policy independent. 
\begin{lemma}[Equivalence lemma]\label{lemma:equivalence}
For every two $\mP$-valid policies $\pi,\rho$ and every state $s\in \mS$, it holds that $Q(\pi,s)=Q(\rho,s)$.
\end{lemma}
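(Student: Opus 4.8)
The plan is to prove the Equivalence Lemma by induction on $|s|$, exploiting the recursive structure of $Q$ together with the fact that every $\mathcal{P}$-valid policy in a non-terminal state with $\below(s)\neq\emptyset$ plays a portfolio $\bl p_{i,j}$ whose two outcomes delete \emph{exactly one element of $\above(s)$ or exactly one element of $\below(s)$}. The key bookkeeping observation is that $Q(\pi,s)$ should depend only on the multiset of expected values $\{\mu(a):a\in\above(s)\}$ and $\{\mu(a):a\in\below(s)\}$ — in fact, as the notation $Q(\pi,\above(s),\below(s))$ hints, it should depend only on these two sets, not on the policy. So the real content is: \textbf{any two $\mathcal{P}$-valid policies, run from the same $s$, reach $\emptyset$ with the same probability.}

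\textbf{Base cases.} If $s$ is terminal then $Q(\pi,s)=\ind[s=\emptyset]$, which is policy-independent. In particular if $\above(s)=\emptyset$ (and $s\neq\emptyset$) both sides are $0$; if $s=\emptyset$ both sides are $1$. If $\below(s)=\emptyset$ but $\above(s)\neq\emptyset$, the policy plays some $\bl p_{i,i}$, deterministically deleting an element of $\above(s)$, and we recurse on a strictly smaller state — handled by the induction.

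\textbf{Inductive step.} Assume $s$ is non-terminal with $\above(s)\neq\emptyset$ and $\below(s)\neq\emptyset$, and that the claim holds for all strictly smaller states. Let $\pi$ play $\bl p_{i,j}$ and $\rho$ play $\bl p_{i',j'}$, where $a_i,a_{i'}\in\above(s)$ and $a_j,a_{j'}\in\below(s)$. Using the recursion
\[
Q(\pi,s)=\bl p_{i,j}(a_i)\,Q(\pi,s\setminus\{a_i\})+\bl p_{i,j}(a_j)\,Q(\pi,s\setminus\{a_j\}),
\]
and the inductive hypothesis (each $Q(\pi,\cdot)$ on a smaller state equals the common value, which I will just write as $Q$ of that state), it suffices to show
\[
\bl p_{i,j}(a_i)\,Q(s\setminus\{a_i\})+\bl p_{i,j}(a_j)\,Q(s\setminus\{a_j\})
=\bl p_{i',j'}(a_{i'})\,Q(s\setminus\{a_{i'}\})+\bl p_{i',j'}(a_{j'})\,Q(s\setminus\{a_{j'}\}).
\]
I would establish this by showing the left-hand side is unchanged under the two ``elementary moves'' $i\leftrightarrow i'$ (changing which above-arm is mixed) and $j\leftrightarrow j'$ (changing which below-arm is mixed), since any $(i,j)$ can be transformed to any $(i',j')$ by such moves. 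Each move reduces to a two-step identity on states of size $|s|-2$: for instance, changing the below-arm from $a_j$ to $a_{\tilde j}$ requires
\[
\bl p_{i,j}(a_i)\,Q(s\setminus\{a_i\})+\bl p_{i,j}(a_j)\,Q(s\setminus\{a_j\})
=\bl p_{i,\tilde j}(a_i)\,Q(s\setminus\{a_i\})+\bl p_{i,\tilde j}(a_{\tilde j})\,Q(s\setminus\{a_{\tilde j}\}),
\]
and here I would expand $Q(s\setminus\{a_i\})$, $Q(s\setminus\{a_j\})$, $Q(s\setminus\{a_{\tilde j}\})$ one more level (all have an above-arm $a_i$ or a below-arm available, by non-triviality of $s$) using \emph{some} $\mathcal{P}$-valid completion — legitimate by the induction hypothesis, which says the choice of completion doesn't matter — arriving at common grandchildren of the form $s\setminus\{a_i,a_j\}$, $s\setminus\{a_i,a_{\tilde j}\}$, and then checking the scalar identity by plugging in the explicit weights $\bl p_{i,j}(a_i)=\frac{-\mu(a_j)}{\mu(a_i)-\mu(a_j)}$ etc. The algebra collapses because in $\bl p_{i,j}(a_i)\cdot\bl p_{i,j'}(a_j)$-type products the $\mu(a_i)-\mu(a_j)$ denominators telescope; this is the ``commutativity of exploration order'' phenomenon.

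\textbf{Anticipated main obstacle.} The delicate point is that to expand a child like $s\setminus\{a_i\}$ one more level, I need that child to itself be non-terminal (so that the $\mathcal{P}$-valid recursion applies with a genuine mixing step) — but it might be terminal, e.g.\ if $a_i$ was the last above-arm. So the cleanest route is probably \emph{not} to expand twice uniformly, but to set up the induction so that the elementary-move identity is proved by a careful case split on whether the relevant grandchild states are terminal, using the base-case values $\ind[\,\cdot=\emptyset\,]$ where needed. An alternative, possibly slicker, framing that sidesteps this: prove directly by induction that $Q(\pi,s)$ equals a \emph{closed-form} quantity depending only on $(\above(s),\below(s))$ — guess the formula from small cases (it should be something like a ratio of symmetric functions of the $\mu$'s, or even just determined by $|\above(s)|$ together with the $\mu$-values of $\below(s)$) — and then policy-independence is immediate. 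I expect reconciling the terminal/non-terminal case analysis in the elementary-move argument to be where the real work lies; everything else is the telescoping weight computation.
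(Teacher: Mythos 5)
Your overall skeleton matches the paper's: induction on the state, reduction to two ``elementary moves'' (swap the $\above$-arm, swap the $\below$-arm), and an algebraic telescoping identity at the core. The base cases you list are also handled correctly. However, there is a genuine gap at the central step. When you change the below-arm from $a_j$ to $a_{\tilde j}$ and expand one more level, the cross terms on $s\setminus\{a_j,a_{\tilde j}\}$ do cancel, but you are left needing
\[
\bigl(\bl p_{i,j}(a_i)-\bl p_{i,\tilde j}(a_i)\bigr)\,Q(s\setminus\{a_i\})
+\bl p_{i,j}(a_j)\bl p_{i,\tilde j}(a_i)\,Q(s\setminus\{a_i,a_j\})
-\bl p_{i,\tilde j}(a_{\tilde j})\bl p_{i,j}(a_i)\,Q(s\setminus\{a_i,a_{\tilde j}\})=0,
\]
which involves three \emph{distinct} states (there are no ``common grandchildren'' here), and no relation among their $Q$-values follows from the inductive hypothesis alone --- the hypothesis only says each is policy-independent, not how they compare. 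The discrepancy between the two policies does not resolve after a bounded number of expansion levels: it must be chased down the tree until \emph{both} $a_j$ and $a_{\tilde j}$ have left the state, and the number of levels this takes grows with $|\above(s)|$. This is precisely why the paper's proof factorizes $Q(\pi,U)$ over \emph{all} suffixes $Z$ of the order on $\above(A)$, and then needs Propositions \ref{prop:coef c} and \ref{prop:coef d} to identify the resulting coefficients $c^\pi_Z$, $d^\pi_Z$ as differences of $Q$-values on smaller states so that the inductive hypothesis can be invoked coefficient-by-coefficient.

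A second, related issue: some of those coefficients involve states that are \emph{not} strictly smaller than $U$ (e.g.\ when $|\below(U)|=2$, the coefficient formula produces $Q(\pi,\above(U),\{a_{j^*},a_{j'}\})=Q(\pi,U)$ itself), so a plain induction on $|s|$ becomes circular exactly there. The paper escapes this by running a two-dimensional induction on $(|\above(s)|,|\below(s)|)$ with four separately-proved base cases, two of which ($|\above|=2$ or $|\below|=2$ with the other coordinate arbitrary) require their own explicit telescoping computations (Propositions \ref{prop:case of two strong} and \ref{prop:case of two}, with Claims \ref{claim:triplets} and \ref{claim:triplets additional} supplying the scalar identities you anticipated). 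Your ``anticipated main obstacle'' (terminal vs.\ non-terminal children) is real but minor; the unbounded depth of the required expansion and the near-circularity of the size-based induction are where the actual work lies, and the proposal does not contain the ideas needed to overcome them. The alternative ``guess a closed form for $Q$'' route is attractive but speculative: closed forms exist only in the degenerate cases $|\above(s)|=1$ or $|\below(s)|=1$, and the paper does not (and apparently cannot easily) exhibit one in general.
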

The proof of Lemma \ref{lemma:equivalence} appears in~{\ifnum\Includeappendix=0{the appendix}\else{Section \ref{sec:proof of lemma}}\fi}. We stress that this lemma holds regardless of Assumption~\ref{assumption:dominance}. Next, we leverage Lemma \ref{lemma:equivalence} to prove the main technical result of the paper.
\begin{theorem}
\label{thm:holy grail}
Let $\pi^*$ be a right-ordered, $\mP$-valid policy with $\sigr_{\pi^*}$ ordered in decreasing expected value. Under Assumption \ref{assumption:dominance}, for every state $s\in \mS=2^A$, it holds that $W(\pi^*,s)=W^*(s)$.
\end{theorem}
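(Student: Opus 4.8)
\textbf{Proof plan for Theorem~\ref{thm:holy grail}.}
The plan is to proceed by induction on $|s|$, or more precisely by a double induction on the pair $(|\above(s)|,|\below(s)|)$, with the base cases being the terminal states (where $\above(s)=\emptyset$ and the claim is trivial since $W(\pi^*,s)=R(s)=W^*(s)$). For the inductive step, fix a non-terminal state $s$ and assume the theorem holds for all proper subsets of $s$. By Proposition~\ref{prop:optimal p valid} there is an optimal $\mP$-valid policy $\rho$; I may assume $\rho$ agrees with $\pi^*$ on every state strictly contained in $s$ (replacing $\rho$ on those states by $\pi^*$ does not decrease its value, by the inductive hypothesis applied to those substates). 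Hence it suffices to show that the action $\pi^*(s)$ is at least as good as any other action in $\mP\cup\mP'$ available at $s$, given that both continue with $\pi^*$ afterwards. If $\below(s)=\emptyset$, the only available actions are the deterministic $\bl p_{i,i}$; since $\above(s)$ nonempty and all continuations lead to states still containing the other arms of $\above(s)$, all such choices are equivalent (any arm in $\above(s)$ can still explore the rest), so $\pi^*$ is optimal. The substantive case is $\below(s)\neq\emptyset$.

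So suppose $\below(s)\neq\emptyset$ and compare the prescribed action $\bl p_{i,j}$ — where $a_{j}=\argmax_{a_{j'}\in\below(s)}\mu(a_{j'})$ is the stochastically-largest inferior arm — against an alternative $\bl p_{i',\tilde\jmath}$. First I would reduce the choice of the $\above$-coordinate: I expect that which arm of $\above(s)$ is mixed in is irrelevant to $W$, and this is exactly where Lemma~\ref{lemma:equivalence} does the heavy lifting. Concretely, unfolding $W(\pi^*,s)=\sum_{s_t\text{ terminal}}\Pr(\pathto{s}{s_t})R(s_t)$ via \refeq{eq:w with terminal}, the terminal states reachable from $s$ split into those containing all of $\above$-explored-so-far (reward $\max_{a}X(a)$, weighted by $Q(\pi^*,\cdot)$ type quantities) versus the all-zero ones; the policy-independence of $Q$ from Lemma~\ref{lemma:equivalence} pins down the "we eventually explored something positive" probabilities regardless of the order on $\above(s)$, leaving only the interaction with the \emph{order on $\below(s)$} to analyze. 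This lets me assume $i'=i$ and reduce to comparing $\bl p_{i,j}$ with $\bl p_{i,\tilde\jmath}$ for $a_{\tilde\jmath}\neq a_{j}$, i.e. to establishing Inequality~\refeq{eq:why hard}:
\[
\bl p_{i,j}(a_j)W^*(s\setminus\{a_j\})-\bl p_{i,\tilde\jmath}(a_{\tilde\jmath})W^*(s\setminus\{a_{\tilde\jmath}\})+\bigl(\bl p_{i,j}(a_i)-\bl p_{i,\tilde\jmath}(a_i)\bigr)W^*(s\setminus\{a_i\})\ \ge\ 0 .
\]

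The crux — and the step I expect to be the main obstacle — is precisely the sign analysis flagged in Subsection~\ref{subsec:stopchastic}: since $\mu(a_j)\ge\mu(a_{\tilde\jmath})$ (so $|\mu(a_j)|\le|\mu(a_{\tilde\jmath})|$), we have $\bl p_{i,j}(a_j)\ge \bl p_{i,\tilde\jmath}(a_{\tilde\jmath})$ and $\bl p_{i,j}(a_i)\le\bl p_{i,\tilde\jmath}(a_i)$, but $W^*(s\setminus\{a_j\})$ need not dominate $W^*(s\setminus\{a_{\tilde\jmath}\})$ (Claim~\ref{claim:ass is not for W}), so the naive term-by-term comparison fails. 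To get around this I would expand $W^*(s\setminus\{a_j\})$ and $W^*(s\setminus\{a_{\tilde\jmath}\})$ one more level, using the inductive hypothesis (those substates have smaller $\below$) to write each in terms of $\pi^*$-continuations, and then rewrite the whole left side of \refeq{eq:why hard} as a telescoping/rearranged sum over terminal states in which the coefficient of each $R(s_t)$ is controlled. The key structural facts I would invoke are: (i) Lemma~\ref{lemma:equivalence}, so that the "reach-$\emptyset$" probabilities and, more generally, the probabilities of reaching any fixed terminal configuration of $\below$-arms are the same under $\bl p_{i,j}$-then-$\pi^*$ and $\bl p_{i,\tilde\jmath}$-then-$\pi^*$ once we exploring $a_j$ and $a_{\tilde\jmath}$ in a fixed relative order; (ii) the monotonicity of the dynamic program in the available arms (more positive-budget or more to-be-explored inferior arms never hurts the optimal value), which I would prove as an auxiliary claim; and (iii) the stochastic-dominance Assumption~\ref{assumption:dominance}, which enters only through the fact that $R$ of a terminal state is $\max_a X(a)$ whenever \emph{some} explored arm was positive, and that exploring the stochastically-larger arm $a_j$ earlier makes "hitting a positive reward" more likely at every prefix. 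Combining (i)–(iii), the difference \refeq{eq:why hard} collapses to a nonnegative combination of increments of the form $\bigl(\Pr[\text{positive found after exploring }a_j]-\Pr[\text{positive found after exploring }a_{\tilde\jmath}]\bigr)\cdot(\text{nonnegative reward weights})$, which is $\ge 0$ by first-order stochastic dominance. This yields the optimality of $\bl p_{i,j}$ at $s$, completing the induction; since $\OGP$ is exactly a right-ordered $\mP$-valid policy with $\sigr$ in decreasing expected value, Theorem~\ref{thm:optimal policy} follows as the special case $s=A$.
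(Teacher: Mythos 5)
Your plan follows the paper's proof essentially step for step: the same double induction on $(\abs{\above(s)},\abs{\below(s)})$, the same reduction via Proposition~\ref{prop:optimal p valid} to comparing single-action deviations that are continued by $\pi^*$, the use of Lemma~\ref{lemma:equivalence} both to show that the choice of the $\above$-coordinate is irrelevant (the paper's Step~2) and to equate the coefficients in the terminal-state decomposition, and the same resolution of Inequality~\refeq{eq:why hard} by a telescoping (Abel-summation) rearrangement combined with monotonicity of the conditional reward terms under Assumption~\ref{assumption:dominance} (the paper's Step~1 together with Proposition~\ref{prop: monotonicity in thm}). The only places where the paper does substantively more than you sketch are the base cases --- in particular $\abs{\above(s)}=1$ is not subsumed by the terminal-state base case but is handled by a separate exchange argument with an explicit index (Proposition~\ref{prop:index with ugeq one}) --- and the identification of the two precise monotonicity properties of the conditional rewards that make the telescoping go through.
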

In particular, Theorem \ref{thm:holy grail} implies that $W(\SEGB,s_0)=W^*(s_0)$, since $\SEGB$ is right-ordered in decreasing expected value. The formal proof of Theorem \ref{thm:holy grail} is relegated to~{\ifnum\Includeappendix=0{the appendix}\else{Section \ref{sec:proof of thm}}\fi}. 

\section{Proof of Lemma \ref{lemma:equivalence}}\label{sec:proof of lemma}
\begin{proofof}{Lemma \ref{lemma:equivalence}}
We prove the lemma by a two-dimensional induction on the number of arms in $\above(s)$ and $\below(s)$. We prove four base cases in Section \ref{sec:base for lemma}:
\begin{itemize}
\item $\abs{\above(s)}=1$ and $\abs{\below(s)}\geq 2$ (Proposition \ref{prop:case of one strong}).
\item $\abs{\above(s)}\geq 2 $ and $\abs{\below(s)} = 1$ (Proposition \ref{prop:case of one}).
\item $\abs{\above(s)}\geq 2$ and $\abs{\below(s)} = 2$ (Proposition \ref{prop:case of two strong}).
\item $\abs{\above(s)}=2$ and $\abs{\below(s)}\geq 2$ (Proposition \ref{prop:case of two}).
\end{itemize}
While the first two cases are almost immediate, the other two are technical and require careful attention. Next, assume the statement holds for all states $s\in \mS$ such that $\abs{\above(s)}\leq K_1$, $\abs{\below(s)}\leq K_2$ and $\abs{\above(s)}+\abs{\below(s)}< K_1+K_2$.

Let $U\in\mS$ denote a state with $\abs{\above(U)}=K_1$ and $\abs{\below(U)}=K_2$. For abbreviation, let $\ug\defeq\above(U),\ul\defeq\below(U)$. Further, define $Q^*(U) = \sup_{\pi} Q(\pi,U)$, \footnote{This supremum is attained since there are only finitely many $\mP$-valid policies.} and for every $a_i \in \ug, a_j\in \ul$ let 
\[
Q^*_{i,j}(\ug,\ul)\defeq\bl p_{i,j}(a_j)Q^*(\ug,\ul\setminus \{a_j\}) +\bl p_{i,j}(a_i)Q^*(\ug\setminus\{a_i\},\ul).
\]
Next, let $(a_{i^*},a_{j^*})\in \argmax_{a_i\in \ug,a_j\in \ul}Q^*_{i,j}(\ug,\ul)$, and assume by contradiction that there exists a pair $(a_{\tilde i}, a_{\tilde j})$ such that 
\begin{equation}\label{eq:contradiction of lemma}
Q^*_{{i^*},{j^*}}(U) > Q^*_{{\tilde i}, {\tilde j}}(U).
\end{equation}
\paragraph{Step 1} Fix arbitrary $a_{i'}$ and $a_{j'}$ such that  $a_{i'} \in \ug$ and $a_{j'} \in \ul$. We will show that 
\begin{equation}\label{eq:step 1 goal}
Q^*_{{i'},{j^*}}(U)=Q^*_{{i'},{j'}}(U).
\end{equation}
We define the ordered policy $\pi$ such that $\sigr_\pi=(a_{i'},\dots)$, i.e., $\sigr_\pi$ first explores $a_{i'}$ and then the rest of the arms of $\ug$ in some arbitrary order; and, $\sigl_\pi=(a_{j^*},a_{j'},\dots)$. In addition, we define $\rho$ such that $\sigl_{\rho}=\sigl_\pi$, and $\sigr_\rho=(a_{j'},a_{j^*},\dots)$. Due to the inductive assumption, we have 
\begin{align}\label{eq:policies suffice}
Q^*_{{i'},{j^*}}(U) &=  \bl p_{{i'},{j^*}}(a_{j^*})Q^*(\ug,\ul\setminus \{a_{j^*}\}) +\bl p_{{i'},{j^*}}(a_{i'})Q^*(\ug\setminus \{a_{i'}\},\ul) \nonumber \\
&= \bl p_{{i'},{j^*}}(a_{j^*})Q(\pi,\ug,\ul\setminus \{a_{j^*}\})+\bl p_{{i'},{j^*}}(a_{i'})Q(\pi,\ug\setminus \{a_{i'}\},\ul)\\
&=Q(\pi,U).\nonumber
\end{align}
Similarly, $Q^*_{{i'},{j'}}(U)  = Q(\rho,U)$; hence, proving that $Q(\pi,U) =Q(\rho,U)$ entails Equality (\ref{eq:step 1 goal}). Next, let $\suff(\sigl_\pi)$ be the set of all non-empty suffices of $\sigl_\pi$. Being left-ordered suggests that on-path\footnote{These are terminal states that $\pi$ reaches to with positive probability.} terminal states with all arms of $\ul$ explored of $\pi$ are of the form $(Z,\emptyset)$, where $Z\in \suff(\sigl_\pi)$. Next, we factorize $Q(\pi,U)$ recursively as follows: We factorize $Q(\pi,U)$ into two terms, like in Equation \refeq{eq:policies suffice}. Following, for each term obtained, we ask whether the corresponding state excludes $\{a_{j^*},a_{j'}\}$. If the answer is yes, we stop factorizing it, and move to the other terms. We do this recursively, until we cannot factorize anymore, or we reached a terminal state. Using this factorizing process, we have \footnote{We stop factorizing if both $a_{j^*},a_{j'}$ were observed; thus, $Z$ will never be the empty set.} 
\begin{align*}
Q(\pi,U) &= \alpha \cdot Q(\pi,\emptyset,\ul)+\beta\cdot Q(\pi,\emptyset,\ul\setminus \{a_{j^*} \}) + \sum_{Z \in \suff(\sigl_\pi)} c^\pi_Z \cdot Q(\pi,Z,\ul\setminus \{a_{j^*},a_{j'}\}),
\end{align*}
for $\alpha=\Pr(\pathto{s}{(\emptyset,\ul)})$ and $\beta=\Pr(\pathto{s}{(\emptyset,\ul\setminus\{a_{j^*}\})})$ such that $\alpha+\beta+\sum_{Z\in \suff(\sigl_\pi)}c^\pi_Z =1$ and $\alpha,\beta,c^\pi_Z\in [0,1]$ for every $Z \in \suff(\sigl_\pi)$. In this representation, $\alpha$ is the probability of reaching the terminal $(\emptyset,\ul)$, while $\beta$ is the probability of reaching the terminal state $(\emptyset,\ul\setminus\{a_{j^*}\})$. For these two terminal states, we know that $Q^*(\emptyset,\ul)= Q^*(\emptyset,\ul\setminus \{a_{j^*} \})=0$; hence,
\begin{align}\label{eq:pi j^* to j}
Q(\pi,U) &= \sum_{Z \in \suff(\sigl_\pi)} c^\pi_Z \cdot Q(\pi,Z,\ul\setminus \{a_{j^*},a_{j'}\}).
\end{align}
Following the same factorization process for $\rho$, we get
\begin{align}\label{eq:rho j^* to j}
Q(\rho,U) &= \sum_{Z \in \suff(\sigl_\rho)} c^\rho_Z \cdot Q(\rho,Z,\ul\setminus \{a_{j^*},a_{j'}\}).
\end{align}
Next, we want to simplify the coefficients $\left(c^\pi_Z\right)_Z$. We remark that $c^\pi_Z$ is not simply the probability of reaching $(Z,\ul\setminus \{a_{j^*},a_{j'}\})$ from $s$, i.e., $\Pr(\pathto{s}{(Z,\ul\setminus \{a_{j^*},a_{j'}\})})$. To clarify. consider a strict suffix $Z$, $1\leq \abs{Z}< \abs{\above(A)}$, and the suffix $Z'=Z \cup\{a_l\}$ for the minimal element $a_l \in \ug \setminus Z$ according to $\sigl_\pi$,i.e., $a_l = \argmin_{a\in \ug \setminus Z}\sigl_\pi(a)$. In the factorization process that produced Equation (\ref{eq:pi j^* to j}), once we got the term $Q(\pi,Z',\ul\setminus \{a_{j^*},a_{j'}\})$, we stopped factorizing any further; thus, $c^\pi_Z$ does not include the probability of reaching a node associated with $(Z',\ul\setminus \{a_{j^*},a_{j'}\})$ and then following the left edge to $(Z,\ul\setminus \{a_{j^*},a_{j'}\})$. However, this probability is taken into account in $\Pr(\pathto{s}{(Z,\ul\setminus \{a_{j^*},a_{j'}\})})$. Rather, $c^\pi_Z$ is the probability of reaching any node $v$ in the tree induced by $\pi$ with the following property: $v$ represents the state $(Z,\ul\setminus \{a_{j^*},a_{j'}\})$, while $a_{j'}$ does not belong to the state represented by the parent of $v$.  In the tree interpretation, $v$ should also be a \textit{right child of its parent} (for instance, the left highlighted node in the tree in Figure \ref{fig:tree example}). The following Proposition \ref{prop:coef c} describes $\left(c^\pi_Z\right)_Z$ in terms of $Q$.
\begin{proposition}\label{prop:coef c}
For every $Z\in \suff(\sigl_\pi)$, let $a_{i(Z)} = \argmin_{a_i\in Z} \sigl_\pi(a_i)$. It holds that
\[
c^\pi_Z = Q(\pi,\ug\setminus Z \cup \{a_{i(Z)}\}, \{a_{j^*},a_{j'}\})-Q(\pi,\ug\setminus Z, \{a_{j^*},a_{j'}\}).
\]
\end{proposition}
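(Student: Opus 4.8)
The plan is to prove Proposition~\ref{prop:coef c} by induction on $|Z|$, peeling suffixes from the longest (all of $\ug$) down to the singleton $\{a_{i'}\}$, and tracking exactly where the factorization process of Equation~\refeq{eq:pi j^* to j} stops. The key observation is that $c^\pi_Z$ counts the probability mass of reaching a node $v$ that (i) represents the state $(Z,\ul\setminus\{a_{j^*},a_{j'}\})$ and (ii) is a \emph{right child} whose parent's state still contains $a_{j'}$ (equivalently, $v$ is the first node on its root-path at which $a_{j'}$ has just been explored, while $Z\subseteq \ug$ is the set of still-unexplored strong arms). Because $\pi$ is left-ordered with order $\sigl_\pi$, the set of unexplored strong arms at such a node is automatically a suffix $Z$ of $\sigl_\pi$, and the ``last explored'' strong arm before entering the sub-tree rooted at $v$ is $a_{i(Z)}=\argmin_{a_i\in Z}\sigl_\pi(a_i)$'s predecessor — so the parent of $v$ in the relevant sense has strong-arm set $Z\cup\{a_{i(Z)}\}$ (or $Z=\ug$ at the top, handled as the base case).

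First I would set up the recursion for $c^\pi_Z$ directly from the factorization: when we expand $Q(\pi,U)$ and reach a node with strong-arm set $Z'\cup\{a_l\}$ (where $a_l=\argmin_{a\in\ug\setminus Z'}\sigl_\pi(a)$) and weak-arm set still equal to $\ul$, the policy plays $\bl p_{l,j^*}$ (since $a_{j^*}$ is first in $\sigr_\pi$ restricted to $\ul$) — wait, more carefully: along the path we are tracking states of the form $(\,\cdot\,,\ul\setminus S)$ for $S\subseteq\{a_{j^*},a_{j'}\}$, and the factorization stops precisely when $S=\{a_{j^*},a_{j'}\}$. So $c^\pi_Z$ is a sum over paths from the root to such a stopping node whose strong-set equals $Z$. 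I would then express this reachability probability as a telescoping difference of $Q$-values on the two-element weak set $\{a_{j^*},a_{j'}\}$: reaching a right child of a $(Z\cup\{a_{i(Z)}\},\{a_{j^*},a_{j'}\})$-type configuration with $a_{j'}$ newly explored is exactly the event ``all of $\ug\setminus Z$ minus possibly $a_{i(Z)}$ gets explored while both $a_{j^*},a_{j'}$ remain'' minus ``... with $a_{i(Z)}$ also remaining'' — which, after invoking the policy-independence of $Q$ (Lemma~\ref{lemma:equivalence}) and its recursive structure, rearranges to $Q(\pi,\ug\setminus Z\cup\{a_{i(Z)}\},\{a_{j^*},a_{j'}\}) - Q(\pi,\ug\setminus Z,\{a_{j^*},a_{j'}\})$.

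The main obstacle, I expect, is making the bookkeeping of ``stopping nodes'' rigorous: one must argue that every root-to-leaf path in the tree induced by $\pi$ passes through exactly one node of the form described (or through a $(\emptyset,\ul)$- or $(\emptyset,\ul\setminus\{a_{j^*}\})$-terminal, whose contribution was already shown to vanish), so that the $c^\pi_Z$'s partition the relevant probability mass without double-counting. This is where left-orderedness does the real work — it forces the unexplored strong set to decrease monotonically through suffixes of $\sigl_\pi$, so the ``first time $a_{j'}$ is explored'' is well-defined and the strong set at that moment is a uniquely determined suffix $Z$. Once that combinatorial claim is in place, the identity for $c^\pi_Z$ follows by substituting the recursive definition of $Q$ at the node with strong-set $Z\cup\{a_{i(Z)}\}$ and weak-set $\{a_{j^*},a_{j'}\}$ (the portfolio there is $\bl p_{i(Z)\text{-predecessor},\,j^*}$ on the parent, routing to $a_{j^*}$ then $a_{j'}$), and collecting terms; I would verify the base case $Z=\ug$ separately, where $Q(\pi,\ug\setminus Z\cup\{a_{i(Z)}\},\cdot)$ should be read as $Q$ at the root configuration, giving $c^\pi_{\ug}=1-Q(\pi,\ug,\{a_{j^*},a_{j'}\})$ consistent with the stated formula under the convention that $\ug\setminus\ug\cup\{a_{i(\ug)}\}$ is the full-root minus-nothing configuration.
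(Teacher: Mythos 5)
Your overall strategy is the paper's: characterize $c^\pi_Z$ as the probability that, at the first node where both $a_{j^*}$ and $a_{j'}$ have been explored, the residual set of unexplored strong arms equals the suffix $Z$, and then realize this probability as a difference of two $Q$-values over the weak set $\{a_{j^*},a_{j'}\}$. However, the execution as written would not go through. The two $Q$-values are mis-identified, and your base case exposes this: for $Z=\ug$ the stated formula reads
$c^\pi_{\ug}=Q(\pi,\{a_{i(\ug)}\},\{a_{j^*},a_{j'}\})-Q(\pi,\emptyset,\{a_{j^*},a_{j'}\})=\bl p_{i(\ug),j^*}(a_{j^*})\,\bl p_{i(\ug),j'}(a_{j'})$,
the probability that both weak arms are realized on their first trials with no strong arm consumed --- exactly $\Pr[\textnormal{residual}=\ug]$, so no special convention is needed. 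Your proposed value $1-Q(\pi,\ug,\{a_{j^*},a_{j'}\})$ is instead the probability that the strong arms run out before both $a_{j^*},a_{j'}$ are explored, i.e.\ the mass $\alpha+\beta$ that the surrounding argument discards because those terminals contribute zero; it is not $c^\pi_{\ug}$. The missing observation that fixes the indexing is this: when the restricted chain started at $(\ug\setminus Z,\{a_{j^*},a_{j'}\})$ succeeds, the step exploring the second weak arm leaves the strong arm it was mixed with unconsumed, so success forces the residual to be a \emph{non-empty} suffix of $\ug\setminus Z$. Hence $Q(\pi,\ug\setminus Z,\{a_{j^*},a_{j'}\})=\Pr[\textnormal{residual}\supsetneq Z]$ while $Q(\pi,(\ug\setminus Z)\cup\{a_{i(Z)}\},\{a_{j^*},a_{j'}\})=\Pr[\textnormal{residual}\supseteq Z]$, and the difference is $c^\pi_Z$. (This is exactly what the paper does by pruning the tree and matching each $Q$-value to the set of ``type 2'' leaves whose strong set is $\psi\cup Z$ for a non-empty suffix $\psi$ of $\ug\setminus Z$, respectively of $(\ug\setminus Z)\cup\{a_{i(Z)}\}$.) Relatedly, several of your set expressions collapse: $Z\cup\{a_{i(Z)}\}=Z$ since $a_{i(Z)}\in Z$, and the parent of the stopping node has strong set $Z$ itself (the transition into it removes only a weak arm), not a larger set.

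Separately, you must not invoke Lemma~\ref{lemma:equivalence} here: Proposition~\ref{prop:coef c} is a step \emph{inside} the inductive proof of that lemma (it is what lets the paper conclude $c^\pi_Z=c^\rho_Z$ afterwards via the inductive hypothesis), so appealing to the full policy-independence of $Q$ would be circular. It is also unnecessary: the identity concerns a single ordered policy $\pi$ and follows purely from the Markov and ordered structure of its tree.
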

The proof of Proposition \ref{prop:coef c} appears at the end of this proof. Notice that for every $Z$, $c^\pi_Z$ includes values of $Q$ with less arms than $U$ (besides, perhaps, the case where $\abs{\ug}=2$ and $\abs{Z}=1$ obtaining $Q(\pi,\ug, \{a_{j^*},a_{j'}\})$, but we cover this case in the bases cases); consequently, due to the inductive step
\begin{align}\label{c pi is rho}
c^\pi_Z = Q(\rho,\ug\setminus Z \cup \{a_{i(Z)}\}, \{a_{j^*},a_{j'}\})-Q(\rho,\ug\setminus Z, \{a_{j^*},a_{j'}\})=c^\rho_Z,
\end{align}
where the last equality follows from mirroring Proposition \ref{prop:coef c} for $(c^\rho_Z)_Z$. Ultimately,
{\thinmuskip=.2mu
\medmuskip=0mu plus .2mu minus .2mu
\thickmuskip=1mu plus 1mu
\begin{align*}
Q(\pi,U) &\stackrel{\textnormal{Eq. (\ref{eq:pi j^* to j})}}{=}\sum_{Z \in \suff(\sigl_\pi)}c^\pi_Z \cdot Q(\pi,Z,\ul\setminus \{a_{j^*},a_{j'}\})\stackrel{\textnormal{Eq. (\ref{c pi is rho})}}{=}\sum_{Z \in \suff(\sigl_\pi)}c^\rho_Z \cdot Q(\pi,Z,\ul\setminus \{a_{j^*},a_{j'}\}) \nonumber\\
&\stackrel{\textnormal{Ind. step}}{=}\sum_{Z \in \suff(\sigl_\pi)}c^\rho_Z \cdot Q(\rho,Z,\ul\setminus \{a_{j^*},a_{j'}\}) \stackrel{\sigl_{\rho}=\sigl_\pi}{=}\sum_{Z \in \suff(\sigl_\rho)}c^\rho_Z \cdot Q(\rho,Z,\ul\setminus \{a_{j^*},a_{j'}\}) \nonumber\\
&\stackrel{\textnormal{Eq. (\ref{eq:rho j^* to j})}}{=}Q(\rho,U) .
\end{align*}}
This completes Step 1.

\paragraph{Step 2}
Fix arbitrary $a_{i'}$ and $a_{j'}$ such that $a_{i'} \in \ug$ and $a_{j'} \in \ul$. We will show that 
\begin{equation}\label{eq:step 2 goal}
Q^*_{{i^*},{j'}}(U)=Q^*_{{i'},{j'}}(U).
\end{equation}
We follow the same technique as in the previous step. Let $\pi$ be an ordered policy such that $\sigl_\pi=(a_{i^*},a_{i'},\dots )$, i.e., $\sigl_\pi$ ranks $a_{i^*}$ first, $a_{i'}$ second and then follows some arbitrary order on the remaining arms, and $\sigr_\pi=(a_{j'},\dots)$. In addition, we define the ordered policy $\rho$ with $\sigl_\rho=(a_{i'},a_{i^*},\dots )$, where the dots refer to any arbitrary order on the remaining elements of $\ug$, and $\sigr_\rho =\sigr_\pi=(a_{j'},\dots)$. Using the inductive step and the same arguments as in Equation (\ref{eq:policies suffice}), it suffices to show that $Q(\pi,U)=Q(\rho,U)$. We factorize $Q(\pi,U)$ recursively such that \omer{Explain this factorization better?}
\begin{align}\label{eq: q pi with d}
Q(\pi,U)=Q(\pi,\{a_{i^*},a_{i'}\},\ul)+ \sum_{Z \in \suff(\sigr_\pi)} d^\pi_Z \cdot Q(\pi,\ug \setminus \{a_{i^*},a_{i'}\},Z),
\end{align}
and similarly
\begin{align}\label{eq: q rho with d}
Q(\rho,U)=Q(\rho,\{a_{i^*},a_{i'}\},\ul)+ \sum_{Z \in \suff(\sigr_\rho)} d^\rho_Z \cdot Q(\rho,\ug \setminus \{a_{i^*},a_{i'}\},Z).
\end{align}
Next, we claim that
\begin{proposition}\label{prop:coef d}
For every $Z\in \suff(\sigr_\pi)$, let $a_{j(Z)} = \argmin_{a_j\in Z}\sigr_\pi(a_j)$. It holds that
\[
d^\pi_Z = Q(\pi,\{a_{i^*},a_{i'}\},\ul \setminus Z )-Q(\pi,\{a_{i^*},a_{i'}\},\ul \setminus Z \cup \{a_{j(Z)}\}).
\]
\end{proposition}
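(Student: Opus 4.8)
The plan is to prove Proposition~\ref{prop:coef d} as the mirror image of Proposition~\ref{prop:coef c}, interchanging the roles of $\above$/$\below$ and of $\sigl_\pi$/$\sigr_\pi$; the only genuinely new point is that our GMDP is not symmetric in these two sides, because termination is triggered by $\above$ emptying rather than $\below$.

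First I would unwind the factorization behind Equation~(\ref{eq: q pi with d}) to read off the meaning of $d^\pi_Z$: it is the total probability, over trajectories of $\pi$ started at $U$, that at the first instant both $a_{i^*}$ and $a_{i'}$ have been explored, the set of still-unexplored $\below$-arms is exactly $Z$. Since $\sigl_\pi$ ranks $a_{i^*}$ and $a_{i'}$ before every other arm of $\ug$, during that opening phase $\pi$ only ever mixes $a_{i^*}$ or $a_{i'}$ with a $\below$-arm, so the remaining arms of $\ug$ are never touched and may be deleted; hence $d^\pi_Z$ equals the same probability for the process started at the reduced state $(\{a_{i^*},a_{i'}\},\ul)$, where ``the first instant both are explored'' is just the termination time --- reached with probability one, since once $\below$ is exhausted $\pi$ deterministically clears $\{a_{i^*},a_{i'}\}$.

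Next, for a fixed $Z\in\suff(\sigr_\pi)$ I would set $a_{j(Z)}=\argmin_{a_j\in Z}\sigr_\pi(a_j)$ and $P=\ul\setminus Z$, a prefix of $\sigr_\pi$. In the reduced process from $(\{a_{i^*},a_{i'}\},\ul)$ the event ``the unexplored $\below$-part is contained in $Z$ at termination'' coincides with ``all of $P$ is explored before $\{a_{i^*},a_{i'}\}$ is emptied''; applying the analogous irrelevant-arm deletion on the $\below$-side (the arms of $Z$ after $a_{j(Z)}$ are not touched until $P$ is exhausted) identifies this probability with $Q(\pi,\{a_{i^*},a_{i'}\},P)$. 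This yields
\[
Q\bigl(\pi,\{a_{i^*},a_{i'}\},\ul\setminus Z\bigr)=\sum_{Z'\subseteq Z} d^\pi_{Z'},
\qquad
Q\bigl(\pi,\{a_{i^*},a_{i'}\},\ul\setminus Z\cup\{a_{j(Z)}\}\bigr)=\sum_{Z'\subsetneq Z} d^\pi_{Z'},
\]
the sums ranging over suffixes of $\sigr_\pi$ together with $\emptyset$, where $d^\pi_\emptyset$ abbreviates the probability that $\below$ empties before $\{a_{i^*},a_{i'}\}$. Subtracting the two identities telescopes to $d^\pi_Z$, which is exactly the claim; and the $Z=\ul$ case of the first identity reads $Q(\pi,\{a_{i^*},a_{i'}\},\ul)=d^\pi_\emptyset$, consistent with the leading term of Equation~(\ref{eq: q pi with d}).

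The hard part is the pair of irrelevant-arm reductions used above: each requires a coupling of the process on the larger state with the process on the reduced one, and the coupling works precisely because an ordered $\mP$-valid policy chooses its mixed pair as a function only of the $\sigl_\pi$-smallest surviving $\above$-arm and the $\sigr_\pi$-smallest surviving $\below$-arm, while the transition distribution $\bl p_{i,j}$ depends on those two arms alone --- not on which other arms are present. Once those couplings are pinned down the remaining algebra is routine; if one prefers to avoid coupling language, the two displayed identities can instead be established by induction on $|\ul|$ directly from the recursion for $Q$, in exact parallel with the proof of Proposition~\ref{prop:coef c}.
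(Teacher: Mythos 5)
Your proposal is correct and follows essentially the same route as the paper's proof: both identify $d^\pi_Z$ with the probability that $a_{i^*}$ and $a_{i'}$ are first both consumed exactly when the surviving $\below$-set is $Z$, both exploit the ordered structure to pass to an off-path process on the reduced arm set $\{a_{i^*},a_{i'}\}\cup(\ul\setminus Z)\cup\{a_{j(Z)}\}$, and both read the claimed formula off as a difference of two exploration probabilities (your telescoping partial sums are a cosmetic repackaging of the paper's direct ``difference equals the probability of the pruned leaves'' step). The only blemish is the final consistency check, where the leading term $Q(\pi,\{a_{i^*},a_{i'}\},\ul)=d^\pi_\emptyset$ corresponds to $Z=\emptyset$, not $Z=\ul$, in your first identity.
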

The proof of Proposition \ref{prop:coef d} appears at the end of this proof. Notice that for every $Z$, $d^\pi_Z$ includes values of $Q$ with less arms than $U$ (besides, perhaps, the case where $\abs{\ul}=2$ and $\abs{Z}=1$ obtaining $Q(\pi,\{a_{i^*},a_{i'}\},\ul\})$, but we cover this case in the bases cases); consequently, due to the inductive step
\begin{align}\label{eq: d pi is rho}
d^\pi_Z = Q(\rho,\{a_{i^*},a_{i'}\},\ul \setminus Z )-Q(\rho,\{a_{i^*},a_{i'}\},\ul \setminus Z \cup \{a_{j(Z)}\})=d^\rho_Z,
\end{align}
where the last equality follows from mirroring Proposition \ref{prop:coef d} for $(d^\rho_Z)_Z$. Ultimately, by rearranging Equation (\ref{eq: q pi with d}) and invoking the inductive step, Equation (\ref{eq: d pi is rho}) and the fact that $\sigr_\rho =\sigr_\pi$, we get
\begin{align*}
Q(\pi,U)=Q(\rho,\{a_{i^*},a_{i'}\},\ul)+ \sum_{Z \in \suff(\sigr_\rho)} d^\rho_Z \cdot Q(\rho,\ug \setminus \{a_{i^*},a_{i'}\}),Z)\stackrel{\textnormal{Eq. (\ref{eq: q rho with d})}}{=}Q(\pi,U),
\end{align*}
implying Equation (\ref{eq:step 2 goal}) holds.
\paragraph{Step 3} We are ready to prove the lemma. Fix arbitrary $a_{\tilde i}$ and $a_{\tilde j}$ such that $a_{\tilde i} \in \ug$ and  $a_{\tilde j} \in \ul$. By the previous Step 1 and Step 2 we know that
\[
Q^*_{{i^*},{j^*}}(U)\stackrel{\textnormal{Step 1}}{=}Q^*_{{i^*},{\tilde j}}(U)\stackrel{\textnormal{Step 2}}{=}Q^*_{{\tilde i},{\tilde j}}(U),
\]
which contradicts Equation (\ref{eq:contradiction of lemma}); hence, the lemma holds.
\end{proofof}

\section{Additional Statements for Lemma \ref{lemma:equivalence}}
\begin{proofof}{Proposition \ref{prop:coef c}}
\begin{figure}
\centering
\includegraphics[scale=.95]{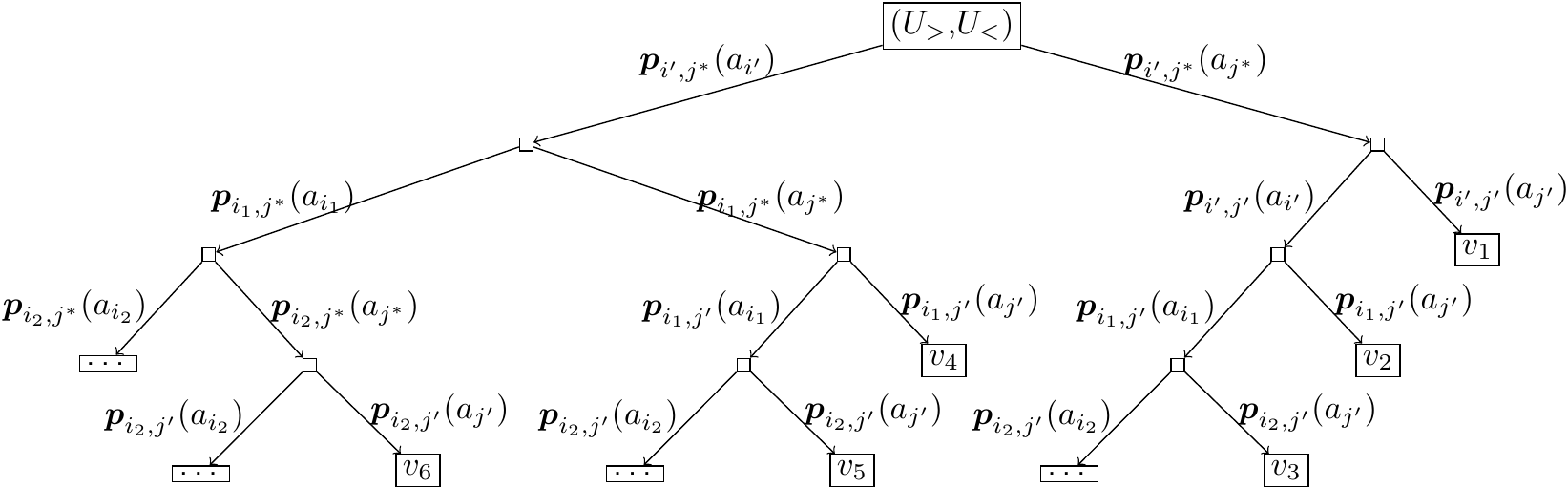}
\caption{Illustration for Proposition \ref{prop:coef c}. The tree depicts $T(\pi)$. Nodes $v_1$ to $v_6$ are nodes whose sub-trees were pruned in the construction of $T$. Let $Z=\ug\setminus \{a_{i'},a_{i_1}\}$ and $Z^c= \{a_{i'},a_{i_1}\}$. The minimal element of $Z$, denoted $a_{i(Z)}$ in the proof, is $a_{i_2}$. The corresponding $c^\pi_Z$ is the probability to reach one of $\{v_3,v_5,v_6\}$, namely, $c^\pi_Z = \Pr(\{v_3,v_5,v_6\})$. In the tree $T$, we ignore sub-trees of nodes $v$ labeled with ``$\dots $'' since these do not contribute to $c^\pi_Z$. Observe that the probability of reaching $v_i$, for $i\in\{1,\dots, 6\}$ is the same in $T(\pi)$ and $T$. Finally, notice that $Q(\pi,\ug\setminus Z \cup \{a_{i(Z)}\}, \{a_{j^*},a_{j'}\})=\Pr(\{v_1,v_2,\dots,v_6 \})$, and $Q(\pi,\ug\setminus Z, \{a_{j^*},a_{j'}\})=\Pr(\{v_1,v_2,v_4\})$. Combining, we get that $c^\pi_Z = Q(\pi,\ug\setminus Z \cup \{a_{i(Z)}\}, \{a_{j^*},a_{j'}\})-Q(\pi,\ug\setminus Z, \{a_{j^*},a_{j'}\})=\Pr(\{v_3,v_5,v_6\})$ as required.
\label{fig:tree illustation}}
\end{figure}
To prove this claim, we focus on the tree induced by $\pi$, $T(\pi)$. It is convenient to discuss a modified version $T(\pi)$ obtained by pruning, and this is feasible since even if prune nodes from $T(\pi)$ it still remains Markov chain. We illustrate the proof of this claim in Figure \ref{fig:tree illustation}.

We factorize $Q(\pi,U) $ recursively (see Equation (\ref{eq:pi j^* to j})) until we hit a node associated with a state that excludes $\{a_{j^*},a_{j'}\}$, or a leaf. This factorization can be illustrated as follows: We traverse $T(\pi)$, from right to left. Every node we visit, we ask whether that node includes $\{a_{j^*},a_{j'}\}$. If it does not, we prune its sub-tree (i.e., it becomes a leaf) while leaving it intact. Denote the obtained tree by $T$, and let $V(T)$ be its set of nodes. Observe that
\begin{observation}\label{obs:two types}
Every leaf $v$ in $V(T)$ satisfies exactly one property: 
\begin{enumerate}[leftmargin=0cm,itemindent=.5cm,labelwidth=\itemindent,labelsep=0cm,align=left]
\item[]\textit{Type 1:} $\above(state(v))=\emptyset$, or
\item[]\textit{Type 2:} $\below(state(v))=\ul\setminus \{j^*,j'\}$ with $\below(state(v)) \subset \below(state(parent(v)))$.
\end{enumerate}
\end{observation}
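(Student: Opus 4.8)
The plan is to classify the leaves of the pruned tree $T$ according to how they arise in the construction, matching each class with one of the two listed properties. Each leaf of $T$ arises for exactly one of two reasons: either (i) it was already a leaf of $T(\pi)$ and the (right-to-left) traversal reached it, or (ii) it is a \emph{pruned node}, i.e.\ the first node on its root-path whose state contains neither $a_{j^*}$ nor $a_{j'}$, while its parent's state still contains at least one of them. These classes are exhaustive: a node of $T$ becomes a leaf precisely when all of its $T(\pi)$-children have been deleted, and this happens iff it had none of them (case (i)) or iff its own subtree was pruned (case (ii)). They are disjoint because the last GMDP transition into a node of each class is of a different kind. Indeed, a single transition deletes exactly one arm; for a class-(ii) node $v$ the deleted arm must be the unique element of $\{a_{j^*},a_{j'}\}$ still present in $\state(\parent(v))$, hence a below-arm, whereas the transition into a terminal state must delete an above-arm, since deleting a below-arm would leave $\above(\cdot)$ unchanged and so force the parent to be terminal, contradicting that $\pi$ acts there.

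For class (i): recall that a state $s$ is terminal iff $\above(s)=\emptyset$ (the point mass on any element of $\above(s)$ is safe, and if $\above(s)=\emptyset$ every portfolio supported on $s$ has strictly negative expectation). So a class-(i) leaf satisfies Type~1. It does not satisfy Type~2: by the transition argument above, its incoming edge deletes an above-arm, hence $\below(\state(v))=\below(\state(\parent(v)))$ and the proper-containment clause fails.

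For class (ii): let $v$ be a pruned node. Its incoming edge deletes a below-arm (the remaining element of $\{a_{j^*},a_{j'}\}$), so $\below(\state(v))$ is a proper subset of $\below(\state(\parent(v)))$; and since $\parent(v)$ is non-terminal, $\above(\state(v))=\above(\state(\parent(v)))\neq\emptyset$, so Type~1 fails for $v$. It remains to pin down $\below(\state(v))$ itself. Here I would use that $\pi$ is ordered and, by its construction in Step~1, explores the arms of $\ul$ in an order whose first two entries are $a_{j^*}$ and $a_{j'}$; therefore the first node along any path at which both $a_{j^*}$ and $a_{j'}$ have been explored has no \emph{other} arm of $\ul$ explored, i.e.\ $\below(\state(v))=\ul\setminus\{a_{j^*},a_{j'}\}$ --- which is Type~2. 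Together with class (i), exhaustiveness, and disjointness, this shows every leaf of $T$ satisfies exactly one of the two properties.

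The difficulty here is bookkeeping, not depth. The delicate point is the last step: one must check that ``the first node at which both $a_{j^*}$ and $a_{j'}$ are explored'' really lies in $V(T)$ --- it does, because ``first'' forces every proper ancestor to still contain one of the two arms and hence to be unpruned --- and that exactly these two arms of $\ul$ are missing there, which is precisely where orderedness of $\pi$ and the exploration order fixed in Step~1 are needed (a third arm of $\ul$ explored earlier would break the identity $\below(\state(v))=\ul\setminus\{a_{j^*},a_{j'}\}$). One must also keep the left/right labelling ($\sigl$ versus $\sigr$) consistent, which is routine; Figure~\ref{fig:tree illustation} already depicts the generic situation.
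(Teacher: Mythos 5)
Your proof is correct and follows essentially the same route as the paper, which justifies the observation in only two sentences (leaves are either terminal states of the GMDP or pruned nodes, with the proper containment $\below(\state(v)) \subset \below(\state(\parent(v)))$ holding because otherwise the parent would already have been pruned). You merely fill in the details the paper leaves implicit---the exhaustiveness/disjointness of the two classes via the type of arm deleted on the incoming edge, and the use of the right-order $(a_{j^*},a_{j'},\dots)$ to pin down $\below(\state(v))=\ul\setminus\{a_{j^*},a_{j'}\}$ exactly---all of which is accurate.
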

Leaves of type 1 are associated with terminal states of the MDP (see Subsection~\ref{subsec:aux GMDP}). Leaves of type 2 are those whose sub-trees were pruned during the traversal. Moreover, $\below(state(v)) \subset \below(state(parent(v)))$ holds in every such a leaf $v$, since otherwise we would have pruned its parent. Due to Observation \ref{obs:two types}, every node $v$ with $state(v)=(Z,\ul\setminus \{a_{j^*},a_{j'}\})$ is of type 2; therefore,
\begin{align}\label{eq:c with T}
c^\pi_Z = \sum_{\substack{v\in V(T):state(v)=\\(Z,\ul\setminus \{a_{j^*},a_{j'}\})}}\Pr\left(\pth{root(T)}{}{v}\right).
\end{align}

Next, fix an arbitrary non-empty $Z$, $Z \subseteq \ug$, and $\Psi$ be the set of all non empty suffixes of $\ug\setminus Z$. Consider $T$ and its root $root(T)$. Notice that $Q(\pi,\ug\setminus Z, \{a_{j^*},a_{j'}\})$ is the probability of reaching a (type 2) leaf $v$ such that $above(v)=\psi\cup Z$ for some $\psi \in \Psi$. This is true since $\pi$ is ordered, and every path from $root(T)$ to such a leaf $v$ does not include any action from $Z$; hence, we can compare the probability of reaching it to off-path behavior of $\pi$. Further, $Q(\pi,\ug\setminus Z \cup \{a_{i(Z)}\}, \{a_{j^*},a_{j'}\})$ is the probability of reaching a (type 2) leaf $v$ such that $above(v)=\psi\cup Z$ for some $\psi \in \Psi$ or $above(v)=Z$; hence,
\[
Q(\pi,\ug\setminus Z \cup \{a_{i(Z)}\}, \{a_{j^*},a_{j'}\})-Q(\pi,\ug\setminus Z, \{a_{j^*},a_{j'}\})
\]
is precisely the right-hand-side of Equation \refeq{eq:c with T}.
\end{proofof}

\begin{proofof}{Proposition \ref{prop:coef d}}
Fix $Z\in \suff(\sigr_\pi)$, and let $a_{j(Z)} = \argmin_{a_j\in Z}\sigr_\pi(a_j)$. Let $T(\pi)$ denote the tree induced by $\pi$. Observe that
\begin{observation}\label{obs: for d}
The coefficient $d^\pi_Z$ is the probability to get to a node $v$ in $T(\pi)$ such that
\begin{enumerate}
\item $state(v)=(\ug\setminus \{a_{i^*},a_{i'} \},Z )$, and 
\item $state(parent(v))=(\ug\setminus \{a_{i^*}\},Z )$. 
\end{enumerate}
\end{observation}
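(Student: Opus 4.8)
The statement to prove is Observation~\ref{obs: for d}, which characterizes the coefficient $d^\pi_Z$ appearing in the factorization \refeq{eq: q pi with d} of $Q(\pi,U)$ for the ordered policy $\pi$ with $\sigl_\pi=(a_{i^*},a_{i'},\dots)$ and $\sigr_\pi=(a_{j'},\dots)$.

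\medskip
\noindent\textbf{Plan.} The approach mirrors the proof of Proposition~\ref{prop:coef c}, but with the roles of $\above$ and $\below$ swapped: instead of traversing $T(\pi)$ from right to left and pruning nodes whose state has lost some arm of $\below$, here we traverse from left to right and prune as soon as a node's state excludes both $a_{i^*}$ and $a_{i'}$. Concretely, I would unfold the recursive definition of $Q$ (via $Q(\pi,s)=\bl p(a_i)Q(\pi,s\setminus\{a_i\})+\bl p(a_j)Q(\pi,s\setminus\{a_j\})$) starting from $U=(\ug,\ul)$, stopping the expansion along any branch the moment the current state is of the form $(\ug\setminus\{a_{i^*},a_{i'}\},\cdot)$ or the moment we hit a terminal state. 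Collecting the stopped branches yields exactly Equation~\refeq{eq: q pi with d}: the term $Q(\pi,\{a_{i^*},a_{i'}\},\ul)$ accounts for the branch where only the two arms $a_{i^*},a_{i'}$ remain from $\ug$ and no arm of $\ul$ has yet been explored (this is the unique branch reaching a state that still contains both $a_{i^*}$ and $a_{i'}$ but cannot be factorized further without first touching them — since $\pi$ is left-ordered, $a_{i^*}$ is explored first, so the only way $\ul$ stays intact is if every coin flip so far realized an arm of $\ug$), while each $d^\pi_Z\cdot Q(\pi,\ug\setminus\{a_{i^*},a_{i'}\},Z)$ collects the branches where both $a_{i^*}$ and $a_{i'}$ have been explored and the surviving subset of $\ul$ is the suffix $Z$ of $\sigr_\pi$.

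\medskip
\noindent\textbf{Key steps.} (1) Formalize the truncation: build the pruned tree $T$ from $T(\pi)$ by cutting the sub-tree below any node whose state excludes $\{a_{i^*},a_{i'}\}$, and argue (as in the proof of Proposition~\ref{prop:coef c}) that pruning leaves a Markov chain with unchanged reach probabilities for the retained nodes. (2) Prove the analogue of Observation~\ref{obs:two types}: every leaf of $T$ is either a genuine terminal state ($\above(\state(v))=\emptyset$) or a node with $\above(\state(v))=\ug\setminus\{a_{i^*},a_{i'}\}$ whose parent still contained at least one of $a_{i^*},a_{i'}$ — and since $\pi$ is left-ordered and explores $a_{i^*}$ before $a_{i'}$, the parent's $\above$ must in fact be $\ug\setminus\{a_{i^*}\}$, giving condition~2 of the observation. (3) Since $\pi$ is right-ordered, the $\below$-component of such a leaf's state is precisely a suffix $Z$ of $\sigr_\pi$, and the sum of reach probabilities over all retained leaves with $\below$-component $Z$ is exactly the coefficient $d^\pi_Z$ in \refeq{eq: q pi with d}; conditions~1 and~2 of Observation~\ref{obs: for d} then identify these leaves uniquely as the nodes $v$ with $\state(v)=(\ug\setminus\{a_{i^*},a_{i'}\},Z)$ and $\state(\parent(v))=(\ug\setminus\{a_{i^*}\},Z)$.

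\medskip
\noindent\textbf{Main obstacle.} The delicate point is step~(2): arguing that the parent of a pruned node has $\above$-component exactly $\ug\setminus\{a_{i^*}\}$ rather than merely ``missing $a_{i'}$ but possibly missing others too.'' This is where left-orderedness with the specific order $(a_{i^*},a_{i'},\dots)$ is essential — because $\pi$ always mixes the $\sigl$-minimal available arm of $\above$, the arm $a_{i^*}$ is necessarily the first element of $\ug$ to be explored on any path, so right before both $a_{i^*}$ and $a_{i'}$ are gone the state must have contained exactly $a_{i'}$ (and nothing else) removed from $\ug$, i.e.\ the transition that prunes the node is precisely the one realizing $a_{i'}$ out of $(\ug\setminus\{a_{i^*}\},Z)$. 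Getting this ordering bookkeeping exactly right, and confirming it is consistent with the off-path specification of the ordered policy $\pi$, is the part that needs care; the rest is the same pruning-and-counting argument already carried out for Proposition~\ref{prop:coef c}.
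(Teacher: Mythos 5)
Your proposal is correct and follows essentially the same route as the paper: condition 1 falls out of where the factorization of Equation \eqref{eq: q pi with d} stops, and condition 2 is obtained by ruling out the two alternative parents of $v$ --- the parent $(\ug\setminus\{a_{i'}\},Z)$ is off-path because the left order $\sigl_\pi=(a_{i^*},a_{i'},\dots)$ forces $a_{i^*}$ to be removed before $a_{i'}$, and a parent of the form $(\ug\setminus\{a_{i^*},a_{i'}\},Z\cup\{a\})$ would already have been stopped and hence contributes to $d^\pi_{Z\cup\{a\}}$ rather than $d^\pi_Z$. Your write-up is in fact slightly more explicit than the paper's on the left-orderedness point, which the paper only leaves implicit.
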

The first condition is immediate, due to the way we factorize $Q$ in Equation \refeq{eq: q pi with d}. To see why the second condition holds, notice that $state(parent(v))$ must be a strict superset of $state(v)$; hence, $state(parent(v))$ could be either $(\ug\setminus \{a_{i^*}\},Z )$ or $(\ug\setminus \{a_{i^*},a_{i'}\},Z\cup\{ a\} )$ for $a\in \ul\setminus Z$, but then it would contribute to $d^\pi_{Z \cup \{a\}}$, namely, to another summand in Equation \refeq{eq: q pi with d}. 

Denote by $V$ the set of all nodes that satisfy the conditions of Observation \ref{obs: for d}. Due to the way we constructed $\pi$, the paths from the root of $T(\pi)$ to any node in $V$ consist of actions that involve the arms $\{a_{i^*},a_{i'},a_{j(Z)}\}\cup (\ul \setminus Z)$ solely; hence, we can focus on the \text{off-path} tree whose root is $s_0'\defeq\{a_{i^*},a_{i'}\}\cup (\ul \setminus Z)\cup \{a_{j(Z)}\}$, and the actions are precisely as in the tree induced by $\pi$ (according to the order of $\pi$). Denote this new tree by $T'$, and let
\[
V' \defeq \left\{v\in nodes(T)\mid state(v)=\{a_{j(Z)}\}  \right\}.
\] 
Due to this construction, 
\begin{observation}\label{obs: for d two}
The coefficient $d^\pi_Z$ is the probability to get to a node that belongs to $V'$  in $T'$.
\end{observation}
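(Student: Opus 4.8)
The plan is to derive the claim from Observation~\ref{obs: for d} by exhibiting a probability-preserving correspondence between the relevant parts of the two trees. By Observation~\ref{obs: for d}, $d^\pi_Z$ equals the probability, in $T(\pi)$, of reaching a node $v$ in the set $V$, i.e. a node with $state(v)=(\ug\setminus\{a_{i^*},a_{i'}\},Z)$ whose parent satisfies $state(parent(v))=(\ug\setminus\{a_{i^*}\},Z)$. Since $T'$ is rooted at $s_0'$ and $V'$ collects the $T'$-nodes with state $\{a_{j(Z)}\}$, it suffices to prove that the probability of reaching $V$ in $T(\pi)$ equals the probability of reaching $V'$ in $T'$. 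I would establish this by a map $\Phi$ from the on-path states of $T(\pi)$ that can precede a node of $V$ to the states of $T'$, and by checking that $\pi$ prescribes the identical portfolio at matched states.

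Every state on a root-to-$V$ path in $T(\pi)$ has the form $(\ug\setminus B,\ul\setminus C)$, where $B\subseteq\{a_{i^*},a_{i'}\}$ with $a_{i'}\in B\Rightarrow a_{i^*}\in B$ (as $\sigl_\pi=(a_{i^*},a_{i'},\dots)$ risks $a_{i^*}$ before $a_{i'}$), and $C$ is a right-order prefix of $\ul\setminus Z$ (as $\pi$ is right-ordered and no arm of $Z$ is explored before reaching $V$). I set $\Phi(\ug\setminus B,\ul\setminus C)\defeq s_0'\setminus(B\cup C)$, whose above-part is $\{a_{i^*},a_{i'}\}\setminus B$ and whose below-part is $((\ul\setminus Z)\setminus C)\cup\{a_{j(Z)}\}$. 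The crux is to verify that $\pi$ plays the same portfolio at a state and at its $\Phi$-image: strictly before reaching $V$ we have $B\subsetneq\{a_{i^*},a_{i'}\}$, so the $\sigl_\pi$-minimal available above arm is $a_{i^*}$ (if $B=\emptyset$) or $a_{i'}$ (if $B=\{a_{i^*}\}$) in both trees; and the $\sigr_\pi$-minimal available below arm is the first unexplored element of $\ul\setminus Z$, or $a_{j(Z)}$ once $\ul\setminus Z$ is exhausted, again in both trees. Hence the chosen $\bl p_{i,j}$, and therefore all transition probabilities, agree along $\Phi$.

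It then follows that $\Phi$ induces a probability-preserving bijection between root-to-$V$ paths in $T(\pi)$ and root-to-$V'$ paths in $T'$: such a path is exactly a realization that explores precisely $\{a_{i^*},a_{i'}\}\cup(\ul\setminus Z)$ with $a_{i'}$ explored last, terminating at $v\in V$ on the one side and at the terminal state $\{a_{j(Z)}\}\in V'$ on the other. I would record that the two descriptions of the final step coincide: the parent condition $state(parent(v))=(\ug\setminus\{a_{i^*}\},Z)$ says the last action is $\bl p_{i',j(Z)}$ with outcome $a_{i'}$, and in $T'$ the only non-terminal parent of $\{a_{j(Z)}\}$ is $\{a_{i'},a_{j(Z)}\}$ (a below-only parent would already be terminal), so the last step there is also $\bl p_{i',j(Z)}$ with outcome $a_{i'}$. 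Summing the matched path probabilities gives $\Pr(\text{reach }V\text{ in }T(\pi))=\Pr(\text{reach }V'\text{ in }T')$, which is the claim.

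The main obstacle is the verification underlying $\Phi$: one must check carefully that no arm outside $s_0'$ is ever involved on a root-to-$V$ path (so that pruning to $T'$ loses nothing), and that the diverging branches also correspond. Specifically, a Bernoulli outcome that explores $a_{j(Z)}$ pushes the $T(\pi)$ process into $Z$—hence away from $V$—exactly when it pushes the $T'$ process away from $\{a_{j(Z)}\}$, while a state with $B=\{a_{i^*},a_{i'}\}$ reached before $\ul\setminus Z$ is exhausted maps under $\Phi$ to a terminal state of $T'$; confirming that these off-target nodes contribute to neither reach event is what makes the coupling exact.
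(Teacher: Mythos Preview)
Your proposal is correct and takes essentially the same approach as the paper: the paper's own justification is the single sentence ``The observation follows from the one-to-one correspondence between the nodes and path in $T(\pi)$ and their counterparts in the off-path tree $T'$,'' and your map $\Phi$ together with the verification that $\pi$ prescribes identical portfolios at matched states is precisely a careful unpacking of that correspondence. The additional checks you outline (that only arms in $s_0'$ appear on root-to-$V$ paths, that the unique reachable non-terminal parent of $\{a_{j(Z)}\}$ in $T'$ is $\{a_{i'},a_{j(Z)}\}$, and that the off-target branches match) are exactly what is needed to make the paper's assertion rigorous.
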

The observation follows from the one-to-one correspondence between the nodes and path in $T(\pi)$ and their counterparts in the off-path tree $T'$. 

In $T'$,  $Q(\pi,s'_0)$ is the probability of starting at $s'_0$ and reaching the leaf with no arms (terminal state $\emptyset$), i.e., exploring $\ul \setminus Z$ \textit{and} $a_{j(Z)}$. In contrast, $Q(\pi,\{a_{i^*},a_{i'}\}\cup (\ul \setminus Z))$ is the probability of starting at $s'_0$ and reaching a node (internal or terminal) $v$ with $state(v)= \cap \left(\ul \setminus Z\right) =\emptyset$, namely, exploring $\ul \setminus Z$. Such a node $v$ leads to a leaf with probability 1; hence, paths from $v$ end in leaves associated with the states $\emptyset$ or  $\{a_{j(Z)}\}$ only. Consequently,
\[
Q(\pi,\{a_{i^*},a_{i'}\}\cup (\ul \setminus Z)) - Q(\pi,\{a_{i^*},a_{i'}\}\cup (\ul \setminus Z)\cup \{ a_{j(Z)}\})
\]
is the probability of starting at $s_0'$, and reaching a terminal node that belongs to $V'$.
\end{proofof}

\section{Base Cases for Lemma \ref{lemma:equivalence}}\label{sec:base for lemma}
\begin{proposition}\label{prop:case of one strong}
Let $\abs{\ug}=1$ and $\abs{\ul} \geq 2$. For any pair of policies $\pi,\rho$, it holds that $Q(\pi,\ug,\ul)=Q(\rho,\ug,\ul)$.
\end{proposition}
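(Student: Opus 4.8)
The plan is to exploit the fact that $\abs{\ug}=1$: there is a single arm $a_i$ in $\above(U)$, and every $\mP$-valid policy must play a portfolio $\bl p_{i,j}$ mixing $a_i$ with some arm $a_j\in\below(U)$, for as long as $\below$ is non-empty. The key observation is that reaching the empty state $\emptyset$ requires \emph{never} realizing $a_i$ until all arms of $\ul$ have been explored, because the moment $a_i$ is realized we transition to a state $(\emptyset, \ul')$ with $\ul'\neq\emptyset$ (since $\abs{\ul}\ge 2$ guarantees at least one arm of $\ul$ is still unexplored at the first step, and more generally once $a_i$ is gone the state is terminal with a non-empty $\below$ part unless $\ul$ was already exhausted). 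So $Q(\pi,\ug,\ul)$ is exactly the probability that, in the sequence of Bernoulli trials prescribed by $\pi$, the ``$a_j$ side'' is realized every single time until $\below$ is emptied.

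The main step is then a telescoping/product computation. Suppose $\pi$ explores the arms of $\ul$ in order $a_{j_1}, a_{j_2}, \dots, a_{j_m}$ (where $m=\abs{\ul}$). At the $k$-th trial the relevant portfolio is $\bl p_{i, j_k}$, and the probability of realizing $a_{j_k}$ (rather than $a_i$) is $\bl p_{i,j_k}(a_{j_k}) = \frac{\mu(a_i)}{\mu(a_i)-\mu(a_{j_k})}$. Hence
\[
Q(\pi,\ug,\ul) \;=\; \prod_{k=1}^{m} \frac{\mu(a_i)}{\mu(a_i)-\mu(a_{j_k})} \;=\; \prod_{a_j\in\ul}\frac{\mu(a_i)}{\mu(a_i)-\mu(a_j)}.
\]
The final product is manifestly independent of the order $j_1,\dots,j_m$, i.e., independent of the policy $\pi$; the same expression arises for $\rho$, giving $Q(\pi,\ug,\ul)=Q(\rho,\ug,\ul)$. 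I would carry this out by induction on $\abs{\ul}$ using the recursive definition $Q(\pi,s)=\bl p_{i,j}(a_i)Q(\pi,s\setminus\{a_i\})+\bl p_{i,j}(a_j)Q(\pi,s\setminus\{a_j\})$: the first summand contributes $0$ because $s\setminus\{a_i\}$ has empty $\above$ but non-empty $\below$ (a terminal state from which $\emptyset$ is unreachable), and the second summand reduces to the $(\abs{\ul}-1)$-case, to which the inductive hypothesis (policy-independence) applies.

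I do not expect a serious obstacle here; this is one of the ``almost immediate'' base cases flagged in the text. The only points requiring a little care are (i) checking that $s\setminus\{a_i\}$ is indeed terminal with $Q=0$ whenever $\below(s)\neq\emptyset$ — which follows because $\safe(s\setminus\{a_i\})$ contains only portfolios over arms with negative expectation, hence is empty, and a terminal state with a non-empty $\below$ part has $R$ irrelevant to $Q$ and $Q=0$ by definition since $\emptyset$ is not reached; and (ii) handling the base case $\abs{\ul}=1$, which is not covered by this proposition but is needed to ground the induction — there $Q(\pi,\ug,\ul)=\bl p_{i,j}(a_j)\cdot Q(\pi,\emptyset,\emptyset)+\bl p_{i,j}(a_i)\cdot Q(\pi,\{a_i\}\setminus\{a_i\}?\dots)$; more simply, from $s=\{a_i,a_j\}$ realizing $a_j$ leads to $\{a_i\}$ which is terminal (no $\below$) hence does \emph{not} reach $\emptyset$, so actually $Q$ at size-$1$ $\below$ equals $\bl p_{i,j}(a_i)\cdot Q(\pi,\{a_j\})$, and $\{a_j\}$ has empty $\above$ so $Q=0$ — wait, this shows $Q=0$ at $\abs{\ul}=1$, so I would instead set up the induction directly on the proposition's hypothesis $\abs{\ul}\ge 2$ with the product formula above as the explicit closed form, verifying it satisfies the recursion. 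Either route is routine; the product formula is the cleanest presentation.
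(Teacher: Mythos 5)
Your proposal is correct and takes essentially the same route as the paper: both express $Q$ as the explicit product $\prod_{a_j\in\ul}\bl p_{i,j}(a_j)=\prod_{a_j\in\ul}\frac{\mu(a_i)}{\mu(a_i)-\mu(a_j)}$ over the arms of $\ul$ and conclude by order-invariance of the product. One correction to your side remark on grounding the induction: the state $\{a_i\}$ reached after all of $\ul$ is explored is \emph{not} terminal, since $\bl p_{i,i}\in\safe(\{a_i\})$ (as $\mu(a_i)>0$), so $Q(\pi,\{a_i\})=1$ and the recursion at $\abs{\ul}=1$ gives $Q=\bl p_{i,j}(a_j)\cdot 1+\bl p_{i,j}(a_i)\cdot 0=\bl p_{i,j}(a_j)$, exactly the one-factor instance of your product formula — so the induction grounds cleanly there and no special-casing is needed.
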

\begin{proofof}{Proposition \ref{prop:case of one strong}}
Let $\tilde \mu(a)\defeq \abs{\mu(a)}$, and denote $\ug=\{a_{i_1}\}$  and $\ul=\{a_{j_1},\dots a_{j_k}\}$ for $k=\abs{\ul}$. The probability of reaching the empty terminal state under any $\mP$-valid policy is
\begin{align*}
\prod_{l=1}^k\frac{\tilde \mu(a_{j_l})}{\tilde \mu(a_{j_l})+\tilde \mu(a_{i_1})},
\end{align*}
i.e., the probability of successfully exploring $\ul$. Due to multiplication associativity, the above expression is invariant of the way we order its elements. Finally, by definition of $Q$, this implies that $Q(\pi,\ug,\ul)=Q(\rho,\ug,\ul)$.
\end{proofof}

\begin{proposition}\label{prop:case of one}
Let $\abs{\ug}\geq 2$ and $\abs{\ul} = 1$. For any pair of policies $\pi,\rho$, it holds that $Q(\pi,\ug,\ul)=Q(\rho,\ug,\ul)$.
\end{proposition}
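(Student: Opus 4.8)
The plan is to collapse $Q(\pi,\ug,\ul)$ into a closed form that visibly does not mention $\pi$. Write $\ul=\{a_j\}$ and, as in the proof of Proposition~\ref{prop:case of one strong}, set $\tilde \mu(a)\defeq\abs{\mu(a)}$. The first step is to pin down the forced structure of a $\mP$-valid policy in this regime: in any non-terminal state $s$ reachable from $(\ug,\ul)$ we have either $\below(s)=\{a_j\}$, in which case $\pi(s)\in\mP$ must be $\bl p_{i,j}$ for some $a_i\in\above(s)$ (its support lies in $s$, and $a_j$ is the only available below-arm), or $\below(s)=\emptyset$, in which case $\pi(s)\in\mP'$ is a singleton $\bl p_{i,i}$. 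Thus the only branching happens while $a_j$ is still unexplored, and the positive arms are removed from $\ug$ in a policy-determined permutation $a_{\kappa_1},\dots,a_{\kappa_m}$, $m=\abs{\ug}$, where $a_{\kappa_\ell}$ is the positive arm $\pi$ selects once $a_{\kappa_1},\dots,a_{\kappa_{\ell-1}}$ have been realized but $a_j$ has not.

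Next I would classify when the empty state is reached. Once the coin first lands on $a_j$ we enter a state with $\below(s)=\emptyset$; from there $\pi$ deterministically plays singletons and exhausts the remaining positive arms, so a short induction on $\abs{s}$ via the recursion for $Q$ (with base case $Q(\pi,\emptyset)=1$) gives $Q(\pi,s)=1$ for every such $s$. On the other hand, if the coin lands on the positive arm at all $m$ steps, the terminal state reached has empty $\above$ set but still contains $a_j$, hence is not $\emptyset$ and contributes $0$. These are the only possibilities, so $Q(\pi,\ug,\ul)=1-\Pr(\text{the coin picks the positive arm at every step})$. Computing this probability by conditioning on the history — at the step mixing $a_{\kappa_\ell}$ with $a_j$ the positive arm is drawn with probability $\bl p_{\kappa_\ell,j}(a_{\kappa_\ell})=\tilde \mu(a_j)/(\tilde \mu(a_{\kappa_\ell})+\tilde \mu(a_j))$ — yields
\[
1-Q(\pi,\ug,\ul)=\prod_{\ell=1}^{m}\frac{\tilde \mu(a_j)}{\tilde \mu(a_{\kappa_\ell})+\tilde \mu(a_j)}=\prod_{a_i\in\ug}\frac{\tilde \mu(a_j)}{\tilde \mu(a_i)+\tilde \mu(a_j)}.
\]

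The last equality is the whole point: the product runs over all of $\ug$, so by commutativity of multiplication it is independent of the order $\kappa$ that $\pi$ induces, and therefore so is $Q(\pi,\ug,\ul)$. Hence $Q(\pi,\ug,\ul)=Q(\rho,\ug,\ul)$ for any two $\mP$-valid policies, which is the claim. This is the mirror image of Proposition~\ref{prop:case of one strong} with the two arm classes swapped; no real obstacle arises beyond bookkeeping the forced shape of $\mP$-valid policies when $\abs{\ul}=1$, and the one point deserving care is deriving $Q(\pi,s)=1$ on $\below$-empty states directly from the $Q$-recursion (and checking that the "stuck" terminal state still containing $a_j$ indeed has $Q=0$) rather than treating it as self-evident.
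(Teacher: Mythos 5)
Your proof is correct and follows essentially the same route as the paper's: express the probability of the unique failure event (never realizing $a_j$) as a product over all arms of $\ug$ and observe that commutativity of multiplication makes it order-invariant, hence policy-invariant. You are somewhat more careful than the paper---justifying the forced structure of $\mP$-valid policies in this regime, deriving $Q=1$ on $\below$-empty states from the recursion, and writing the per-step numerator as $\tilde\mu(a_j)$, which is the correct value of $\bl p_{i_l,j}(a_{i_l})$, whereas the paper's displayed product appears to have the numerator transposed (a typo that does not affect the order-invariance argument).
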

\begin{proofof}{Proposition \ref{prop:case of one}}
Let $\tilde \mu(a)\defeq \abs{\mu(a)}$, and denote $\ug=\{a_{i_1},\dots a_{i_k}\}$ for $k=\abs{\ug}$ and $\ul=\{a_{j_1}\}$. The probability of reaching the terminal state $(a_{j_1})$ under any $\mP$-valid policy is
\begin{align*}
\prod_{l=1}^k\frac{\tilde \mu(a_{i_l})}{\tilde \mu(a_{i_l})+\tilde \mu(a_{j_1})},
\end{align*}
i.e., the probability of failing to explore $a_{j_1}$. Due to multiplication associativity, the above expression is invariant of the way we order its elements. Finally, by definition of $Q$, this implies that $1-Q(\pi,\ug,\ul)=1-Q(\rho,\ug,\ul)$; hence, $Q(\pi,\ug,\ul)=Q(\rho,\ug,\ul)$
\end{proofof}

\begin{proposition}\label{prop:case of two strong}
Let $U$ be an arbitrary state, such that $\ug\defeq\above(U)=2$ and $\ul\defeq\below(U) \geq 2$. For any pair of $\mP$-valid policies $\pi$ and $\rho$, it holds that $Q(\pi,U)=Q(\rho,U)$.
\end{proposition}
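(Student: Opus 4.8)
The plan is to sidestep comparing two arbitrary $\mP$-valid policies directly: I will write down a closed form for $Q(\pi,U)$ that manifestly does not mention $\pi$, so that policy-independence falls out for free. Write $\above(U)=\{x,y\}$ (two arms), put $u\defeq\abs{\mu(x)}$ and $v\defeq\abs{\mu(y)}$, and for $B\subseteq\below(U)$ and $a\in\{x,y\}$ set $P_a(B)\defeq\prod_{z\in B}\frac{\abs{\mu(a)}}{\abs{\mu(a)}+\abs{\mu(z)}}$ (so $P_a(\emptyset)=1$). From the $Q$-recursion and the definition of $\bl p_{\cdot,\cdot}$ in \eqref{eq:blp from body} one checks immediately that $Q(\pi,(\{a\},B))=P_a(B)$ for every $\mP$-valid $\pi$ (for $\abs B\geq 2$ this is Proposition \ref{prop:case of one strong}; for $\abs B\le 1$ it is trivial, only one action being available). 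I claim that whenever $u\neq v$,
\[
Q\bigl(\pi,(\{x,y\},B)\bigr)\;=\;\frac{u\,P_x(B)-v\,P_y(B)}{u-v}\qquad\text{for every }\mP\text{-valid }\pi\text{ and every }B\subseteq\below(U).
\]
The right-hand side does not depend on $\pi$, so the claim, applied to $B=\below(U)$, gives $Q(\pi,U)=Q(\rho,U)$, which is the assertion.

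I would prove the claim by induction on $\abs B$. For $\abs B=0$ both sides equal $1$. For $\abs B=1$, say $B=\{z\}$ with $w\defeq\abs{\mu(z)}$, the two non-terminal moves are $\bl p_{x,z}$ and $\bl p_{y,z}$; plugging into the recursion and using $Q((\{x,y\},\emptyset))=1$, $Q((\{y\},\{z\}))=\frac{v}{v+w}$, $Q((\{x\},\{z\}))=\frac{u}{u+w}$ shows both moves yield $\frac{uv+uw+vw}{(u+w)(v+w)}$, which equals $\frac{uP_x(\{z\})-vP_y(\{z\})}{u-v}$ (this case is also subsumed by Proposition \ref{prop:case of one}). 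For the inductive step with $\abs B=m\geq 2$: a $\mP$-valid $\pi$ plays some $\bl p_{a,z'}$ at $(\{x,y\},B)$, and since the claimed formula is symmetric under exchanging $(x,u)$ with $(y,v)$ it suffices to treat $a=x$. Writing $w'\defeq\abs{\mu(z')}$, the recursion gives
\[
Q\bigl(\pi,(\{x,y\},B)\bigr)=\tfrac{u}{u+w'}\,Q\bigl(\pi,(\{x,y\},B\setminus\{z'\})\bigr)+\tfrac{w'}{u+w'}\,Q\bigl(\pi,(\{y\},B)\bigr).
\]
The first child has $\abs{B\setminus\{z'\}}=m-1<m$, so the induction hypothesis applies; the second child has a single above-arm, so $Q(\pi,(\{y\},B))=P_y(B)$. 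Substituting these, and then using $P_x(B)=\frac{u}{u+w'}P_x(B\setminus\{z'\})$ and $P_y(B)=\frac{v}{v+w'}P_y(B\setminus\{z'\})$ to eliminate the size-$(m-1)$ quantities, the expression simplifies to $\frac{uP_x(B)-vP_y(B)}{u-v}$ after a short algebraic manipulation (the numerator telescopes to $(u+w')\bigl(uP_x(B)-vP_y(B)\bigr)$ and the factor $u+w'$ cancels).

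It remains to handle the degenerate case $u=v$, i.e.\ $\abs{\mu(x)}=\abs{\mu(y)}$, where $P_x(B)=P_y(B)$ and the displayed formula is an indeterminate $0/0$. The cleanest fix is a continuity argument: for each fixed policy and state, $Q(\pi,\cdot)$ is a rational function of the positive reals $\bigl(\abs{\mu(a)}\bigr)_{a}$, hence continuous on that domain, and we have just proved $Q(\pi,\cdot)\equiv Q(\rho,\cdot)$ on the dense subset where $\abs{\mu(x)}\neq\abs{\mu(y)}$, so equality persists everywhere. (Alternatively one re-runs the same induction with the companion closed form $Q(\pi,(\{x,y\},B))=P_x(B)\bigl(1+\sum_{z\in B}\tfrac{\abs{\mu(z)}}{u+\abs{\mu(z)}}\bigr)$, which is what the previous expression converges to as $v\to u$.) The only genuinely non-mechanical parts are guessing the closed form $\frac{uP_x(B)-vP_y(B)}{u-v}$ and carrying out the algebraic collapse in the inductive step; I expect the latter to be the main place one must be careful. (If one instead prefers to follow the pattern of Steps 1–2 in the proof of Lemma \ref{lemma:equivalence} — first reduce $Q(\pi,U)$ to a function of the single initial action via Propositions \ref{prop:case of one strong}–\ref{prop:case of one} and the induction hypothesis, then show that function is constant in the below-arm and, separately, in the above-arm — the above-arm step still reduces to the identity $(u-v)\,Q(\pi,(\{x,y\},B))=uP_x(B)-vP_y(B)$, so the same closed form does the work either way.)
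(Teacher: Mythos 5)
Your proposal is correct, and I verified the key computations: the closed form $Q\bigl(\pi,(\{x,y\},B)\bigr)=\frac{uP_x(B)-vP_y(B)}{u-v}$ does satisfy the recursion for either choice of above-arm (the numerator indeed collapses via $u(v+w')-w'(u-v)=v(u+w')$), it is symmetric under exchanging $(x,u)$ with $(y,v)$, it reproduces $\frac{uv+uw+vw}{(u+w)(v+w)}$ for $\abs{B}=1$, and the continuity argument for $u=v$ is legitimate since for fixed combinatorial policies $Q(\pi,\cdot)-Q(\rho,\cdot)$ is a rational function of the $\abs{\mu(a)}$'s with nonvanishing denominator on the positive orthant. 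This is, however, a genuinely different route from the paper's. The paper proves the proposition by induction on $\abs{\ul}$ with Proposition \ref{prop:case of one} as the base case, and argues by \emph{local swaps}: Step 1 shows that transposing which below-arm is attacked first (for a fixed above-arm) leaves $Q$ unchanged, via an explicit expansion of $Q$ into a product term $\lambda$ plus a sum term $\delta$ and the four-variable algebraic identity of Claim \ref{claim:triplets}; Step 2 handles swapping the two above-arms by introducing an auxiliary policy $\tilde\rho$ and invoking order-invariance of a $(k-1)$-arm sub-expression. Your approach replaces all of this bookkeeping with a single policy-independent closed form, which is shorter, gives strictly more information (the actual value of $Q$, not just its invariance), and makes the symmetry in the two above-arms manifest rather than something to be re-derived. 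The trade-off is that the paper's swap machinery (suffix decompositions, the coefficients $c^\pi_Z$, $d^\pi_Z$) is the template reused in the general inductive step of Lemma \ref{lemma:equivalence}, where no closed form is available; your argument exploits the special structure of $\abs{\ug}=2$ and would not generalize as directly.
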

\begin{proofof}{Proposition \ref{prop:case of two strong}}
We prove the claim by induction, with Proposition \ref{prop:case of one} serving as the base case. Assume the claim holds for $\abs{\ul}=k-1$. It is enough to show that if $\abs{\ul}=k$, for any $a_i\in \ug,a_j\in \ul$, $Q^*_{i,j}(U)=Q^*(U)$. Assume that $Q^*_{{i_1},{j_1}}(U)=Q^*(U)$, and fix any $a_{i'}\in \ug, a_{j'}\in \ul$. 
\paragraph{Remark} We do not use Assumption \ref{assumption:dominance} here.
\paragraph{Step 1} Assume that $i'=i_1$ and $j' \neq j_1$. W.l.o.g. $j'=j_2$. We construct two policies, $\pi$ that ordered $\ul$ as $\sigr_\pi=(a_{j_1},a_{j_2},\dots, a_{j_{k}})$, and $\rho$ that orders $\ul$ as $\sigr_\rho=(a_{j_2},a_{j_1},\dots,a_{j_{k}})$. Both policies order $\ug$ according to $\sigl_\pi=\sigl_\rho=(a_{i_1},a_{i_2})$. Due to the inductive step and our assumption that  $Q^*_{i_1,j_1}(U)=Q^*(U)$, we have that $Q(\pi,U)=Q^*(U)$, and
\begin{align*}
&Q(\pi,\ug,\ul) = \underbrace{\prod_{l=1}^{k} \bl p_{{i_1},{j_l}}(a_{j_l})}_{\lambda(\pi)}
+ \underbrace{\sum_{f=1}^{k} \left(\prod_{l=1}^{f-1} \bl p_{{i_1},{j_l}}(a_{j_l})\right) \bl p_{{i_1},{j_f}}(a_{i_1}) \left(\prod_{l=f}^{k} \bl p_{{i_2},{j_l}}(a_{j_l})\right)}_{\delta(\pi)}.
\end{align*}
Notice that $\lambda(\pi)=\lambda(\rho)$. In addition, we have that
{\thinmuskip=0mu
\medmuskip=0mu plus 0mu minus 0mu
\thickmuskip=0mu plus 0mu
\begin{align}\label{eq:ind 2 step 1}
\delta(\pi)&=\bl p_{{i_1},{j_1}}(a_{i_1})\prod_{l=1}^{k} \bl p_{{i_2},{j_l}}(a_{j_l})\nonumber\\
&\qquad +\bl p_{{i_1},{j_1}}(a_{j_1})\sum_{f=2}^{k} \left(\prod_{l=2}^{f-1} \bl p_{{i_1},{j_l}}(a_{j_l})\right) \bl p_{{i_1},{j_f}}(a_{i_1}) \left(\prod_{l=f}^{k} \bl p_{{i_2},{j_l}}(a_{j_l})\right)\nonumber\\
& =\bl p_{{i_1},{j_1}}(a_{i_1})\bl p_{{i_2},{j_1}}(a_{j_1})\bl p_{{i_2},{j_2}}(a_{j_2})\prod_{l=3}^{k} \bl p_{{i_2},{j_l}}(a_{j_l})\nonumber\\
&\qquad+ \bl p_{{i_1},{j_1}}(a_{j_1})\Bigg[\bl p_{{i_1},{j_2}}(a_{i_1})\bl p_{{i_2},{j_2}}(a_{j_2})\prod_{l=3}^{k} \bl p_{{i_2},{j_l}}(a_{j_l})\nonumber\\
&\qquad \qquad+\bl p_{{i_1},{j_2}}(a_{j_2})  \sum_{f=3}^{k} \left(\prod_{l=3}^{f-1} \bl p_{{i_1},{j_l}}(a_{j_l})\right) \bl p_{{i_1},{j_f}}(a_{i_1}) \left(\prod_{l=f}^{k} \bl p_{{i_2},{j_l}}(a_{j_l})\right) \Bigg]\nonumber\\
&=\bl p_{{i_2},{j_2}}(a_{j_2})\left( \bl p_{{i_1},{j_1}}(a_{i_1})\bl p_{{i_2},{j_1}}(a_{j_1})+\bl p_{{i_1}
,{j_1}}(a_{j_1})\bl p_{{i_1},{j_2}}(a_{i_1})\right)\prod_{l=3}^{k} \bl p_{{i_2},{j_l}}(a_{j_l})\nonumber\\
&\qquad+ \bl p_{{i_1},{j_1}}(a_{j_1})\bl p_{{i_1},{j_2}}(a_{j_2})  \sum_{f=3}^{k} \left(\prod_{l=3}^{f-1} \bl p_{{i_1},{j_l}}(a_{j_l})\right) \bl p_{{i_1},{j_f}}(a_{i_1}) \left(\prod_{l=f}^{k} \bl p_{{i_2},{j_l}}(a_{j_l})\right).
\end{align}}%
We show that
\begin{claim}\label{claim:triplets}
It holds that
\begin{align*}
&\bl p_{{i_2},{j_2}}(a_{j_2})\left( \bl p_{{i_1},{j_1}}(a_{i_1})\bl p_{{i_2},{j_1}}(a_{j_1})+\bl p_{{i_1},{j_1}}(a_{j_1})\bl p_{{i_1},{j_2}}(a_{i_1})\right)\\
&=\bl p_{{i_2},{j_1}}(a_{j_1})\left(\bl p_{{i_1},{j_2}}(a_{i_1})\bl p_{{i_2},{j_2}}(a_{j_2})+\bl p_{{i_1},{j_2}}(a_{j_2})\bl p_{{i_1},{j_1}}(a_{i_1})  \right).
\end{align*}
\end{claim}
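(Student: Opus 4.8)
The plan is to prove the claim by substituting the closed form of the portfolios and checking the resulting polynomial identity. Recall from Equation~\refeq{eq:blp from body} that for $a_i\in\above(A)$, $a_j\in\below(A)$ we have $\bl p_{i,j}(a_i)=\frac{-\mu(a_j)}{\mu(a_i)-\mu(a_j)}$ and $\bl p_{i,j}(a_j)=\frac{\mu(a_i)}{\mu(a_i)-\mu(a_j)}$. Writing $\tilde\mu(a)=\abs{\mu(a)}$ as in the preceding base-case proofs (so $\mu(a_i)=\tilde\mu(a_i)>0$ and $\mu(a_j)=-\tilde\mu(a_j)<0$), these become
\[
\bl p_{i,j}(a_i)=\frac{\tilde\mu(a_j)}{\tilde\mu(a_i)+\tilde\mu(a_j)},\qquad \bl p_{i,j}(a_j)=\frac{\tilde\mu(a_i)}{\tilde\mu(a_i)+\tilde\mu(a_j)}.
\]
To keep the expressions manageable I would abbreviate $x_1=\tilde\mu(a_{i_1})$, $x_2=\tilde\mu(a_{i_2})$, $y_1=\tilde\mu(a_{j_1})$, $y_2=\tilde\mu(a_{j_2})$, all positive; note that no use of Assumption~\ref{assumption:dominance} is needed here, consistent with the remark above.

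With this notation, the left-hand side of the claim is
\[
\frac{x_2}{x_2+y_2}\!\left(\frac{y_1}{x_1+y_1}\cdot\frac{x_2}{x_2+y_1}+\frac{x_1}{x_1+y_1}\cdot\frac{y_2}{x_1+y_2}\right),
\]
and the right-hand side is
\[
\frac{x_2}{x_2+y_1}\!\left(\frac{y_2}{x_1+y_2}\cdot\frac{x_2}{x_2+y_2}+\frac{x_1}{x_1+y_2}\cdot\frac{y_1}{x_1+y_1}\right).
\]
The next step is to place each side over the common denominator $(x_1+y_1)(x_1+y_2)(x_2+y_1)(x_2+y_2)$ and expand. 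A direct computation gives numerator $x_2\big(x_2 y_1(x_1+y_2)+x_1 y_2(x_2+y_1)\big)$ on the left and $x_2\big(x_2 y_2(x_1+y_1)+x_1 y_1(x_2+y_2)\big)$ on the right; expanding the inner brackets, both collapse to
\[
x_2\big(x_1 x_2(y_1+y_2)+y_1 y_2(x_1+x_2)\big),
\]
which is manifestly symmetric under the swap $y_1\leftrightarrow y_2$. Since the denominators coincide, the two sides are equal, which proves Claim~\ref{claim:triplets}.

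I do not expect a genuine obstacle here: the statement reduces to a polynomial identity in four positive reals, and the only real care needed is bookkeeping — tracking, before substitution, which of the eight weights $\bl p_{i_1,j_1},\dots,\bl p_{i_2,j_2}$ is evaluated at an ``above'' coordinate and which at a ``below'' coordinate. The point worth isolating is the symmetric normal form displayed above; it is exactly this ``budget-swap'' symmetry that will let the surrounding argument equate the $\delta(\pi)$ and $\delta(\rho)$ contributions and conclude $Q(\pi,U)=Q(\rho,U)$.
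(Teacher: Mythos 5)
Your proof is correct and is essentially identical to the paper's own argument: both substitute the closed forms $\bl p_{i,j}(a_i)=\tmu{j}/(\tmu{i}+\tmu{j})$, $\bl p_{i,j}(a_j)=\tmu{i}/(\tmu{i}+\tmu{j})$, place both sides over the common denominator $(\tmu{i_1}+\tmu{j_1})(\tmu{i_1}+\tmu{j_2})(\tmu{i_2}+\tmu{j_1})(\tmu{i_2}+\tmu{j_2})$, and observe that the two numerators expand to the same four monomials. Your symmetric normal form $x_2\bigl(x_1x_2(y_1+y_2)+y_1y_2(x_1+x_2)\bigr)$ is a slightly cleaner way to exhibit the $y_1\leftrightarrow y_2$ symmetry than the paper's term-by-term relabeling, but the content is the same.
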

Now, set $\sigma:\mathbb N \rightarrow \mathbb N$ such that $\sigma(1)=2, \sigma(2)=1$, and $\sigma(i)=i$ for $i\geq 3$; hence, using Claim \ref{claim:triplets},
\begin{align}
\textnormal{Eq. (\ref{eq:ind 2 step 1})}
& =\bl p_{{i_1},{j_2}}(a_{i_1})\bl p_{{i_2},{j_2}}(a_{j_2})\bl p_{{i_2},{j_1}}(a_{j_1})\prod_{l=3}^{k} \bl p_{{i_2},{j_l}}(a_{j_l})\nonumber\\
&\qquad+ \bl p_{{i_1},{j_2}}(a_{j_2}) \Bigg[\bl p_{{i_1},{j_1}}(a_{i_1})\bl p_{{i_2},{j_1}}(a_{j_1}) \prod_{l=3}^{k} \bl p_{{i_2},{j_l}}(a_{j_l})\nonumber\\
&\qquad \qquad + \bl p_{{i_1},{j_1}}(a_{j_1})\sum_{f=3}^{k} \left(\prod_{l=3}^{f-1} \bl p_{{i_1},{j_l}}(a_{j_l})\right) \bl p_{{i_1},{j_f}}(a_{i_1}) \left(\prod_{l=f}^{k} \bl p_{{i_2},{j_l}}(a_{j_l})\right) \Bigg]\nonumber\\
&= \sum_{f=1}^{k} \left(\prod_{l=1}^{f-1} \bl p_{{i_1},{j_{\sigma(l)}}}(a_{j_{\sigma(l)}})\right) \bl p_{{i_1},{j_{\sigma (f)}}}(a_{i_1}) \left(\prod_{l=f}^{k} \bl p_{{i_2},{j_{\sigma(l)}}}(a_{j_{\sigma(l)}})\right)\nonumber\\
&=\delta(\rho)
\end{align}
\paragraph{Step 2} Assume that $i'=i_2\neq i_1$ and $j'=j_1$. We construct two policies, $\pi$ that orders $\ug$ as $\sigl_\pi=(a_{i_1},a_{i_2})$, and $\rho$ that orders $\ug$ as  $\sigl_\rho=(a_{i_2},a_{i_1})$. Both policies order $\ul$ by $\sigr_\pi=\sigr_\rho=(a_{j_1},a_{j_2}, \dots, a_{j_k})$. In addition, we introduce a third policy, $\tilde \rho$, that has the same order as $\rho$ on $\ug$, namely $\sigl_{\tilde \rho}=\sigl_\rho=(a_{i_2},a_{i_1})$, and orders $\ul$ by $\sigr_{\tilde \rho}=(a_{j_k},a_{j_2},\dots ,a_{j_{k-1}},a_{j_1})$.
It holds that
{\thinmuskip=0mu
\medmuskip=0mu plus 0mu minus 0mu
\thickmuskip=0mu plus 0mu
\begin{align}\label{eq: multiple transitions}
&Q(\pi,\ug,\ul)=\prod_{l=1}^k \bl p_{{i_1},{j_l}}(a_{j_l})+\sum_{f=1}^{k} \left(\prod_{l=1}^{f-1} \bl p_{{i_1},{j_l}}(a_{j_l})\right) \bl p_{{i_1},{j_f}}(a_{i_1}) \left(\prod_{l=f}^{k} \bl p_{{i_2},{j_l}}(a_{j_l})\right)\nonumber\\
&=\bl p_{{i_1},{j_1}}(a_{j_1})\bl p_{{i_1},{j_k}}(a_{j_k})\overbrace{\prod_{l=2}^{k-1} \bl p_{{i_1},{j_l}}(a_{j_l})}^{I_1}+\bl p_{{i_2},{j_k}}(a_{j_k})\Bigg[\bl p_{{i_1},{j_1}}(a_{j_1})\bl p_{{i_1},{j_k}}(a_{i_1})\underbrace{\prod_{l=2}^{k-1} \bl p_{{i_1},{j_l}}(a_{j_l})}_{I_1}\nonumber\\
&\quad+\underbrace{\sum_{f=1}^{k-1} \left(\prod_{l=1}^{f-1} \bl p_{{i_1},{j_l}}(a_{j_l})\right) \bl p_{{i_1},{j_f}}(a_{i_1}) \left(\prod_{l=f}^{k-1} \bl p_{{i_2},{j_l}}(a_{j_l})\right)}_{I_2}\Bigg]\nonumber\\
&= \bl p_{{i_1},{j_1}}(a_{j_1})\bl p_{{i_1},{j_k}}(a_{j_k})\bl p_{{i_2},{j_k}}(a_{j_k})I_1 +\bl p_{{i_1},{j_1}}(a_{j_1})\bl p_{{i_1},{j_k}}(a_{j_k})\bl p_{{i_2},{j_k}}(a_{i_2})I_1\nonumber\\
&\qquad+\bl p_{{i_2},{j_k}}(a_{j_k})\left[\bl p_{{i_1},{j_1}}(a_{j_1})\bl p_{{i_1},{j_k}}(a_{i_1})I_1+I_2\right]\nonumber\\
&=\bl p_{{i_1},{j_1}}(a_{j_1})\bl p_{{i_1},{j_k}}(a_{j_k})\bl p_{{i_2},{j_k}}(a_{i_2})I_1\nonumber\\
&\qquad +\bl p_{{i_2},{j_k}}(a_{j_k})\left[\bl p_{{i_1},{j_1}}(a_{j_1})\bl p_{{i_1},{j_k}}(a_{j_k})I_1+\bl p_{{i_1},{j_1}}(a_{j_1})\bl p_{{i_1},{j_k}}(a_{i_1})I_1+I_2\right]\nonumber\\
&=\bl p_{{i_1},{j_1}}(a_{j_1})\bl p_{{i_1},{j_k}}(a_{j_k})\bl p_{{i_2},{j_k}}(a_{i_2})I_1+\bl p_{{i_2},{j_k}}(a_{j_k})\left[\bl p_{{i_1},{j_1}}(a_{j_1})I_1+I_2\right].
\end{align}}%
Next, observe that
{\thinmuskip=0mu
\medmuskip=0mu plus 0mu minus 0mu
\thickmuskip=0mu plus 0mu
\begin{equation}\label{eq:ind step with k-1}
\left[\bl p_{{i_1},{j_1}}(a_{j_1})I_1+I_2\right]=\prod_{l=1}^{k-1} \bl p_{{i_1},{j_l}}(a_{j_l})+\sum_{f=1}^{k-1} \left(\prod_{l=1}^{f-1} \bl p_{{i_1},{j_l}}(a_{j_l})\right) \bl p_{{i_1},{j_f}}(a_{i_1}) \left(\prod_{l=f}^{k-1} \bl p_{{i_2},{j_l}}(a_{j_l})\right).
\end{equation}}%
Notice that the latter is precisely $Q(\pi,\left\{a_{i_1},a_{i_2},a_{j_1},\dots a_{j_{k-1}}\right\})$; thus, the inductive step implies that it is order invariant. Let $\sigma:\mathbb N \rightarrow \mathbb N$ such that $\sigma(1)=k,\sigma(k)=1$, and $\sigma(i)=i$ for $1<i<k$. Since $Q(\pi,\left\{a_{i_1},a_{i_2},a_{j_1},\dots a_{j_{k-1}}\right\})=Q(\tilde \rho,\left\{a_{i_1},a_{i_2},a_{j_1},\dots a_{j_{k-1}}\right\})$, we conclude that the expression in Equation (\ref{eq:ind step with k-1}) equals 
\[
\prod_{l=2}^{k} \bl p_{{i_2},{j_{\sigma(l)}}}(a_{j_{\sigma(l)}})+\sum_{f=2}^{k} \left(\prod_{l=2}^{f-1} \bl p_{{i_2},{j_{\sigma(l)}}}(a_{j_{\sigma(l)}})\right) \bl p_{{i_2},{j_{\sigma(f)}}}(a_{i_2}) \left(\prod_{l=f}^{k} \bl p_{{i_1},{j_{\sigma(l)}}}(a_{j_{\sigma(l)}})\right).
\]
Combining this with Equation (\ref{eq: multiple transitions}),
{\thinmuskip=0mu
\medmuskip=0mu plus 0mu minus 0mu
\thickmuskip=0mu plus 0mu
\begin{align}
&\textnormal{Eq. }\refeq{eq: multiple transitions}=\bl p_{{i_1},{j_1}}(a_{j_1})\bl p_{{i_1},{j_k}}(a_{j_k})\bl p_{{i_2},{j_k}}(a_{i_2})\prod_{l=2}^{k-1} \bl p_{{i_1},{j_{\sigma(l)}}}(a_{j_{\sigma(l)}})\nonumber\\
&\quad +\bl p_{{i_2},{j_k}}(a_{j_k})\Bigg[
\prod_{l=2}^{k} \bl p_{{i_2},{j_{\sigma(l)}}}(a_{j_{\sigma(l)}})+\sum_{f=2}^{k} \left(\prod_{l=2}^{f-1} \bl p_{{i_2},{j_{\sigma(l)}}}(a_{j_{\sigma(l)}})\right) \bl p_{{i_2},{j_{\sigma(f)}}}(a_{i_2}) \left(\prod_{l=f}^{k} \bl p_{{i_1},{j_{\sigma(l)}}}(a_{j_{\sigma(l)}})\right)
\Bigg]\nonumber\\
&=\bl p_{{i_2},{j_k}}(a_{i_2})\prod_{l=1}^{k} \bl p_{{i_1},{j_{\sigma(l)}}}(a_{j_{\sigma(l)}})+\bl p_{{i_2},{j_k}}(a_{j_k})
\prod_{l=2}^{k} \bl p_{{i_2},{j_{\sigma(l)}}}(a_{j_{\sigma(l)}})\nonumber\\
&\quad +\bl p_{{i_2},{j_k}}(a_{j_k})\left[
\sum_{f=2}^{k} \left(\prod_{l=2}^{f-1} \bl p_{{i_2},{j_{\sigma(l)}}}(a_{j_{\sigma(l)}})\right) \bl p_{{i_2},{j_{\sigma(f)}}}(a_{i_2}) \left(\prod_{l=f}^{k} \bl p_{{i_1},{j_{\sigma(l)}}}(a_{j_{\sigma(l)}})\right)
\right]\nonumber\\
&=\prod_{l=1}^{k} \bl p_{{i_2},{j_{\sigma(l)}}}(a_{j_{\sigma(l)}})+
\sum_{f=1}^{k} \left(\prod_{l=1}^{f-1} \bl p_{{i_2},{j_{\sigma(l)}}}(a_{j_{\sigma(l)}})\right) \bl p_{{i_2},{j_{\sigma(f)}}}(a_{i_2}) \left(\prod_{l=f}^{k} \bl p_{{i_1},{j_{\sigma(l)}}}(a_{j_{\sigma(l)}})\right)\nonumber\\
&=Q(\tilde \rho ,U),
\end{align}}%
where the last equality follows from the definition of $\tilde \rho$ (orders the arms precisely so). Finally, $Q(\tilde \rho, U)=Q(\rho ,U)$ follows from the previous Step 1.
\paragraph{Step 3}
The two previous steps imply that for any $a_{i'}\in \ug, a_{j'}\in \ul$, it holds that
\[
Q_{i',j'}(U)=Q_{i',j_1}(U)=Q_{i_1,j_1}(U).
\]
This completes the proof of Proposition \ref{prop:case of two strong}.
\end{proofof}
\begin{proofof}{Claim \ref{claim:triplets}}
To ease readability, let $\tmu{i}\defeq\abs{\mu(a_i)}$ for every $a_i\in A$. It holds that
\begin{align*}
&\bl p_{{i_2},{j_2}}({j_2})\left( \bl p_{{i_1},{j_1}}({i_1})\bl p_{{i_2},{j_1}}({j_1})+\bl p_{{i_1},{j_1}}({j_1})\bl p_{{i_1},{j_2}}({i_1})\right)\\
&=\frac{\tmu{i_2}}{\tmu{i_2}+\tmu{j_2}}\left( \frac{\tmu{j_1}}{\tmu{i_1}+\tmu{j_1}}\frac{\tmu{i_2}}{\tmu{i_2}+\tmu{j_1}}+\frac{\tmu{i_1}}{\tmu{i_1}+\tmu{j_1}}\frac{\tmu{j_2}}{\tmu{i_1}+\tmu{j_2}}\right) \\
&= \frac{\tmu{i_2}\tmu{j_1}\tmu{i_2}(\tmu{i_1}+\tmu{j_2})+\tmu{i_2}\tmu{i_1}\tmu{j_2}(\tmu{i_2}+\tmu{j_1})}{(\tmu{i_1}+\tmu{j_1})(\tmu{i_1}+\tmu{j_2})(\tmu{i_2}+\tmu{j_1})(\tmu{i_2}+\tmu{j_2})}\\
&= \frac{\overbrace{\tmu{i_2}\tmu{j_1}\tmu{i_2}\tmu{i_1}}^{I}+\overbrace{\tmu{i_2}\tmu{j_1}\tmu{i_2}\tmu{j_2}}^{II}+\overbrace{\tmu{i_2}\tmu{i_1}\tmu{j_2}\tmu{i_2}}^{III}+\overbrace{\tmu{i_2}\tmu{i_1}\tmu{j_2}\tmu{j_1}}^{IV}}{(\tmu{i_1}+\tmu{j_1})(\tmu{i_1}+\tmu{j_2})(\tmu{i_2}+\tmu{j_1})(\tmu{i_2}+\tmu{j_2})}\\
&= \frac{\overbrace{\tmu{i_2}\tmu{j_2}\tmu{i_2}\tmu{i_1}}^{III}+\overbrace{\tmu{i_2}\tmu{j_2}\tmu{i_2}\tmu{j_1}}^{II}+\overbrace{\tmu{i_2}\tmu{i_1}\tmu{j_1}\tmu{i_2}}^{I}+\overbrace{\tmu{i_2}\tmu{i_1}\tmu{j_1}\tmu{j_2}}^{IV}}{(\tmu{i_1}+\tmu{j_1})(\tmu{i_1}+\tmu{j_2})(\tmu{i_2}+\tmu{j_1})(\tmu{i_2}+\tmu{j_2})}\\
&= \frac{\tmu{i_2}\tmu{j_2}\tmu{i_2}(\tmu{i_1}+\tmu{j_1})+\tmu{i_2}\tmu{i_1}\tmu{j_1}(\tmu{i_2}+\tmu{j_2})}{(\tmu{i_1}+\tmu{j_1})(\tmu{i_1}+\tmu{j_2})(\tmu{i_2}+\tmu{j_1})(\tmu{i_2}+\tmu{j_2})}\\
&=\bl p_{{i_2},{j_1}}({j_1})\left(\bl p_{{i_1},{j_2}}({i_1})\bl p_{{i_2},{j_2}}({j_2})+\bl p_{{i_1},{j_2}}({j_2})\bl p_{{i_1},{j_1}}({i_1})  \right)
\end{align*}
\end{proofof}

\begin{proposition}\label{prop:case of two}
Let $U$ be an arbitrary state, such that $\ug\defeq\above(U)\geq 2$ and $\ul\defeq\below(U) = 2$. For any pair of $\mP$-valid policies $\pi$ and $\rho$, it holds that $Q(\pi,U)=Q(\rho,U)$.
\end{proposition}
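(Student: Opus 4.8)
The plan is to obtain Proposition~\ref{prop:case of two} from Proposition~\ref{prop:case of two strong} by a short \emph{duality} argument, avoiding a re-run of the delicate casework behind Proposition~\ref{prop:case of two strong}. Write $\ug\defeq\above(U)$ with $|\ug|=k\geq 2$ and $\ul\defeq\below(U)$ with $|\ul|=2$, and let $\mathcal{G}'$ be the GMDP obtained from ours by negating all the means $\mu(a_i)$ --- equivalently, by swapping the roles of $\above(A)$ and $\below(A)$. The two-arm portfolio $\bl p_{i,j}$ of our GMDP ($a_i\in\ug$, $a_j\in\ul$) and the corresponding two-arm portfolio of $\mathcal{G}'$ are the \emph{same} distribution: a direct check from \eqref{eq:blp from body} shows that in either GMDP this action realizes $a_i$ with probability $\frac{|\mu(a_j)|}{|\mu(a_i)|+|\mu(a_j)|}$ and $a_j$ with the complementary probability, and hence induces the same transition. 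Consequently, at any state $s$ containing arms of both signs, a distribution is an admissible $\mP$-valid choice for our GMDP at $s$ if and only if it is one for $\mathcal{G}'$ at $s$.

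The key step is a complementation identity. Fix a $\mP$-valid policy $\pi$ for our GMDP and let $\pi'$ be any $\mP$-valid policy for $\mathcal{G}'$ that agrees with $\pi$ on every state containing arms of both signs (such a $\pi'$ exists by the previous paragraph). Started from $U$ --- which contains arms of both signs and is therefore non-terminal in \emph{both} GMDPs --- the process $\pi$ in our GMDP and the process $\pi'$ in $\mathcal{G}'$ can be coupled so that they visit exactly the same random sequence of states up to the first time the state contains arms of only one sign. Such a moment occurs with probability $1$: arms are removed one at a time, so when one sign class first empties, the other is still non-empty. If at that moment only $\ug$-arms remain, then our GMDP proceeds to realize all remaining (positive-mean) arms and reaches $\emptyset$, contributing $1$ to $Q(\pi,U)$, whereas $\mathcal{G}'$ sits at a non-empty terminal state, contributing $0$ to $Q_{\mathcal{G}'}(\pi',U)$; if only $\ul$-arms remain, the two roles are reversed. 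Averaging over the coupled path gives $Q(\pi,U)+Q_{\mathcal{G}'}(\pi',U)=1$.

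Finally, apply Proposition~\ref{prop:case of two strong} inside $\mathcal{G}'$: there, $U$ has $|\above_{\mathcal{G}'}(U)|=|\ul|=2$ and $|\below_{\mathcal{G}'}(U)|=|\ug|=k\geq 2$, so $Q_{\mathcal{G}'}(\cdot,U)$ equals a single value $v$ for \emph{all} $\mP$-valid policies of $\mathcal{G}'$. Combined with the complementation identity, $Q(\pi,U)=1-v$ for \emph{every} $\mP$-valid policy $\pi$ of our GMDP, which is the assertion of the proposition. I expect the one point genuinely needing care to be the middle paragraph: setting up the coupling precisely and verifying that the two terminal contributions are honestly complementary in all cases (including $k=2$, when $|\ug|=|\ul|=2$). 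As with Proposition~\ref{prop:case of two strong}, this argument invokes neither Assumption~\ref{assumption:dominance} nor any induction beyond what is already inside Proposition~\ref{prop:case of two strong}. (A self-contained alternative would mirror the proof of Proposition~\ref{prop:case of two strong} with the roles of $\ug$ and $\ul$ swapped, inducting on $|\ug|$ with Proposition~\ref{prop:case of one strong} as base case; this works too, but requires re-deriving an analogue of Claim~\ref{claim:triplets} and is more laborious.)
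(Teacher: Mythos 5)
Your proof is correct, but it takes a genuinely different route from the paper's. The paper proves Proposition~\ref{prop:case of two} by re-running the induction of Proposition~\ref{prop:case of two strong} with the roles of $\ug$ and $\ul$ swapped: it works with $\overline Q=1-Q$, inducts on $\abs{\ug}$ with Proposition~\ref{prop:case of one strong} as the base case, and re-derives the algebraic exchange identity as Claim~\ref{claim:triplets additional} --- i.e., exactly the ``self-contained alternative'' you mention at the end. Your duality argument instead uses Proposition~\ref{prop:case of two strong} as a black box: the checks all go through --- the portfolio $\bl p_{i,j}$ and its counterpart in the sign-flipped GMDP $\mathcal{G}'$ are literally the same distribution, so the trivial coupling keeps the two processes on the same state sequence until one sign class empties (which happens a.s.\ at a nonempty state, since arms are removed one at a time); at that moment exactly one of the two processes is guaranteed to reach $\emptyset$ and the other sits at a nonempty terminal state, giving $Q(\pi,U)+Q_{\mathcal{G}'}(\pi',U)=1$; and since $Q$ depends only on the transition structure (not on the reward $R$), Proposition~\ref{prop:case of two strong} applies verbatim to $\mathcal{G}'$, where $U$ has $\abs{\above_{\mathcal{G}'}(U)}=2$ and $\abs{\below_{\mathcal{G}'}(U)}\geq 2$. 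What your approach buys is economy and structure: it makes the $1-Q$ symmetry that the paper exploits only computationally into an exact involution between a GMDP and its negation, eliminating the second induction and Claim~\ref{claim:triplets additional} entirely; the same observation would also derive Proposition~\ref{prop:case of one} from Proposition~\ref{prop:case of one strong}. The paper's version, by contrast, is self-contained and exhibits the explicit product formulas for $\overline Q$ that are reused elsewhere in the base-case analysis.
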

\begin{proofof}{Proposition \ref{prop:case of two}}
The proof of this proposition goes along the lines of the proof of Proposition \ref{prop:case of two strong}, but we provide the details here for completeness. For simplicity, we let $\overline Q = 1-Q$, and prove that for any two policies $\pi, \rho$ it holds that $\overline Q(\pi,U)=\overline Q(\rho,U)$.

We prove the claim by induction, with Proposition \ref{prop:case of one strong} serving as the base case. Assume the claim holds for $\abs{\ug}=k-1$. It is enough to show that if $\abs{\ug}=k$, for any $a_i\in \ug,a_j\in \ul$, $Q^*_{i,j}(U)=Q^*(U)$. Assume that $Q^*_{{i_1},{j_1}}(U)=Q^*(U)$, and define a policy $\pi$ such that $\sigr_\pi=(a_{j_1},a_{j_2})$ and $\sigl_\pi=(a_{i_1},a_{i_2},\dots a_{i_k})$. Next, fix any $a_{i'}\in \ug, a_{j'}\in \ul$. 
\paragraph{Remark} We do not use Assumption \ref{assumption:dominance} here.
\paragraph{Step 1} Assume that $i'\neq i_1$ and $j' = j_1$. W.l.o.g. $i'=i_2$. We construct the ordered policy $\rho$ that orders $\ul$ by $\sigr_\rho=\sigr_\pi=(a_{j_1},a_{j_2})$ and $\ug$ by $\sigl_\rho=(a_{i_2},a_{i_1},\dots, a_{i_k})$. Due to the inductive step and our assumption that  $Q^*_{i_1,j_1}(U)=Q^*(U)$, we have that $\overline Q(\pi,U)=\overline Q^*(U)$. For brevity, we introduce the following notations. For $r\in \{1,2,3\}$, let
{\thinmuskip=.2mu
\medmuskip=0mu plus .2mu minus .2mu
\thickmuskip=1mu plus 1mu
\[
\lambda^r_{j_1}=\prod_{l=r}^{k} \bl p_{{i_l},{j_1}}(a_{i_l}),\lambda^r_{j_2}=\prod_{l=r}^{k} \bl p_{{i_l},{j_2}}(a_{i_l}), \delta^r= \sum_{f=r}^{k} \left(\prod_{l=r}^{f-1} \bl p_{{i_l},{j_1}}(a_{i_l})\right) \bl p_{{i_f},{j_1}}(a_{j_1}) \left(\prod_{l=f}^{k} \bl p_{{i_l},{j_2}}(a_{i_l})\right).
\]
}%
Observe that
\begin{align}\label{eq:from pi to ghjghj}
\overline Q(\pi,U)&=\lambda^1_{j_1}+\delta^1=\lambda^1_{j_1}+\bl p_{i_1,j_1}(a_{j_1})\lambda^1_{j_2}+\bl p_{i_1,j_1}(a_{i_1})\delta^2\nonumber\\
&=\lambda^1_{j_1}+\bl p_{i_1,j_1}(a_{j_1})\bl p_{i_1,j_2}(a_{i_1})\bl p_{i_2,j_2}(a_{i_2})\lambda^3_{j_2}\nonumber\\
&\qquad \qquad+\bl p_{i_1,j_1}(a_{i_1})\left[
\bl p_{i_2,j_1}(a_{j_1})\bl p_{i_2,j_2}(a_{i_2})\lambda^3_{j_2}+\bl p_{i_2,j_1}(a_{i_2})\delta^3\right].
\end{align}
Next, we show that 
\begin{claim}\label{claim:triplets additional}
It holds that
\begin{align*}
&\bl p_{{i_2},{j_2}}(a_{i_2})\left( \bl p_{{i_1},{j_1}}(a_{j_1})\bl p_{{i_1},{j_2}}(a_{i_1})+\bl p_{{i_1},{j_1}}(a_{i_1})\bl p_{{i_2},{j_1}}(a_{j_1})\right)\\
&=\bl p_{{i_1},{j_2}}(a_{i_1})\left(\bl p_{{i_2},{j_1}}(a_{j_1})\bl p_{{i_2},{j_2}}(a_{i_2})+\bl p_{{i_2},{j_1}}(a_{i_2})\bl p_{{i_1},{j_1}}(a_{j_1})  \right).
\end{align*}
\end{claim}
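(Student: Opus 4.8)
The plan is to verify this purely algebraic identity by the same elementary computation used for Claim~\ref{claim:triplets}, and in fact to reduce it directly to that claim. First I would rewrite every portfolio weight in the $\tmu{}$-form used there: setting $\tmu{i}\defeq\abs{\mu(a_i)}$, for $a_i\in\above(A)$ and $a_j\in\below(A)$ one has $\bl p_{i,j}(a_i)=\tmu{j}/(\tmu{i}+\tmu{j})$ and $\bl p_{i,j}(a_j)=\tmu{i}/(\tmu{i}+\tmu{j})$, since $\mu(a_i)=\tmu{i}>0$ and $\mu(a_j)=-\tmu{j}<0$. Substituting, both sides of the claimed equality become rational expressions in the four positive reals $\tmu{i_1},\tmu{j_1},\tmu{i_2},\tmu{j_2}$ over the common denominator $(\tmu{i_1}+\tmu{j_1})(\tmu{i_1}+\tmu{j_2})(\tmu{i_2}+\tmu{j_1})(\tmu{i_2}+\tmu{j_2})$; concretely the left-hand side becomes $\frac{\tmu{j_2}}{\tmu{i_2}+\tmu{j_2}}\bigl(\frac{\tmu{i_1}}{\tmu{i_1}+\tmu{j_1}}\frac{\tmu{j_2}}{\tmu{i_1}+\tmu{j_2}}+\frac{\tmu{j_1}}{\tmu{i_1}+\tmu{j_1}}\frac{\tmu{i_2}}{\tmu{i_2}+\tmu{j_1}}\bigr)$ and the right-hand side becomes $\frac{\tmu{j_2}}{\tmu{i_1}+\tmu{j_2}}\bigl(\frac{\tmu{i_2}}{\tmu{i_2}+\tmu{j_1}}\frac{\tmu{j_2}}{\tmu{i_2}+\tmu{j_2}}+\frac{\tmu{j_1}}{\tmu{i_2}+\tmu{j_1}}\frac{\tmu{i_1}}{\tmu{i_1}+\tmu{j_1}}\bigr)$.

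Next I would observe that these two expressions are exactly the images, under the variable renaming $\tmu{i_1}\leftrightarrow\tmu{j_1}$ and $\tmu{i_2}\leftrightarrow\tmu{j_2}$, of the left- and right-hand sides of Claim~\ref{claim:triplets} written in $\tmu{}$-form, namely $\frac{\tmu{i_2}}{\tmu{i_2}+\tmu{j_2}}\bigl(\frac{\tmu{j_1}}{\tmu{i_1}+\tmu{j_1}}\frac{\tmu{i_2}}{\tmu{i_2}+\tmu{j_1}}+\frac{\tmu{i_1}}{\tmu{i_1}+\tmu{j_1}}\frac{\tmu{j_2}}{\tmu{i_1}+\tmu{j_2}}\bigr)$ and $\frac{\tmu{i_2}}{\tmu{i_2}+\tmu{j_1}}\bigl(\frac{\tmu{j_2}}{\tmu{i_1}+\tmu{j_2}}\frac{\tmu{i_2}}{\tmu{i_2}+\tmu{j_2}}+\frac{\tmu{i_1}}{\tmu{i_1}+\tmu{j_2}}\frac{\tmu{j_1}}{\tmu{i_1}+\tmu{j_1}}\bigr)$. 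Since the displayed proof of Claim~\ref{claim:triplets} establishes an \emph{identity} of these rational functions, valid for all positive values of the four parameters and using no sign information beyond positivity of the $\tmu{}$'s, the renamed identity holds as well — and that is precisely Claim~\ref{claim:triplets additional}. (One only needs to check the book-keeping that the renaming indeed carries one pair of expressions onto the other, which is immediate since it just permutes the four formal variables.)

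As a self-contained alternative I would instead carry out the direct expansion, mirroring the chain of equalities in the proof of Claim~\ref{claim:triplets}: clear the common denominator and expand each numerator into a sum of four degree-four monomials. The left numerator equals $\tmu{i_1}\tmu{i_2}\tmu{j_2}^2+\tmu{i_1}\tmu{j_1}\tmu{j_2}^2+\tmu{i_1}\tmu{i_2}\tmu{j_1}\tmu{j_2}+\tmu{i_2}\tmu{j_1}\tmu{j_2}^2$, and the right numerator equals the same four monomials listed in a different order, so the two sides coincide. There is no genuine obstacle here — the statement is a routine algebraic identity — so the only thing requiring care is the monomial book-keeping (and, for the symmetry route, the remark that Claim~\ref{claim:triplets} was proved as a formal algebraic identity, which makes the substitution legitimate).
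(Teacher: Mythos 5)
Your proposal is correct. The paper proves Claim~\ref{claim:triplets additional} by exactly your ``self-contained alternative'': it substitutes $\tmu{i}\defeq\abs{\mu(a_i)}$, writes both sides over the common denominator $(\tmu{i_1}+\tmu{j_1})(\tmu{i_1}+\tmu{j_2})(\tmu{i_2}+\tmu{j_1})(\tmu{i_2}+\tmu{j_2})$, and checks that the two numerators consist of the same four degree-four monomials (your lists agree with the paper's terms $I$--$IV$, just regrouped). Your primary route --- deducing the claim from Claim~\ref{claim:triplets} via the formal swap $\tmu{i_1}\leftrightarrow\tmu{j_1}$, $\tmu{i_2}\leftrightarrow\tmu{j_2}$ --- is a genuine shortcut the paper does not take, and it is valid: the proof of Claim~\ref{claim:triplets} establishes an identity of rational functions in four independent positive parameters (the only place the sign structure of $\above$/$\below$ enters is in converting $\bl p_{i,j}$ into $\tmu{}$-form, which is done before the algebra starts), the swap preserves positivity, and the bookkeeping indeed carries each factor of one claim onto the corresponding factor of the other, e.g.\ $\tfrac{\tmu{i_2}}{\tmu{i_2}+\tmu{j_2}}\mapsto\tfrac{\tmu{j_2}}{\tmu{i_2}+\tmu{j_2}}$ and $\tfrac{\tmu{i_2}}{\tmu{i_2}+\tmu{j_1}}\mapsto\tfrac{\tmu{j_2}}{\tmu{i_1}+\tmu{j_2}}$. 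The symmetry argument buys a two-line proof and makes transparent why the two triplet claims come as a pair (they are mirror images under exchanging the roles of the above- and below-arm magnitudes); the paper's direct expansion buys self-containedness and spares the reader the meta-observation that Claim~\ref{claim:triplets} was proved as a formal identity rather than only for the specific values arising in the instance. Either way the statement holds.
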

Combining Equation \refeq{eq:from pi to ghjghj} and Claim \ref{claim:triplets additional}, we get
\begin{align}\label{eq: flipping}
\textnormal{Eq. } \refeq{eq:from pi to ghjghj}
&=\lambda^1_{j_1}+\bl p_{i_2,j_1}(a_{j_1})\bl p_{i_2,j_2}(a_{i_2})\bl p_{i_1,j_2}(a_{i_1})\lambda^3_{j_2}\nonumber\\
&\qquad \qquad+\bl p_{i_2,j_1}(a_{i_2})\left[
\bl p_{i_1,j_1}(a_{j_1})\bl p_{i_1,j_2}(a_{i_1})\lambda^3_{j_2}+\bl p_{i_1,j_1}(a_{i_1})\delta^3\right]\nonumber \\
&=\lambda^1_{j_1}+\bl p_{i_2,j_1}(a_{j_1})\lambda^1_{j_2}+\bl p_{i_2,j_1}(a_{i_2})\left[
\bl p_{i_1,j_1}(a_{j_1})\bl p_{i_1,j_2}(a_{i_1})\lambda^3_{j_2}+\bl p_{i_1,j_1}(a_{i_1})\delta^3\right]
\end{align}
Now, set $\sigma:\mathbb N \rightarrow \mathbb N$ such that $\sigma(1)=2, \sigma(2)=1$, and $\sigma(i)=i$ for $i\geq 3$; hence,
{\thinmuskip=.2mu
\medmuskip=0mu plus .2mu minus .2mu
\thickmuskip=1mu plus 1mu
\begin{align*}
\textnormal{Eq. }\refeq{eq: flipping}&=\lambda^1_{j_1}+\bl p_{i_2,j_1}(a_{j_1})\lambda^1_{j_2}\\
&\qquad \qquad +\bl p_{i_2,j_1}(a_{i_2})\left[ 
\sum_{f=2}^{k} \left(\prod_{l=2}^{f-1} \bl p_{{i_{\sigma(l)}},{j_1}}(a_{i_{\sigma(l)}})\right) \bl p_{{i_{\sigma (f)}},{j_1}}(a_{j_1}) \left(\prod_{l=f}^{k} \bl p_{{i_{\sigma(l)}},{j_2}}(a_{i_{\sigma(l)}})\right)
 \right] \\
&= \lambda^1_{j_1}+ \sum_{f=1}^{k} \left(\prod_{l=1}^{f-1} \bl p_{{i_{\sigma(l)}},{j_1}}(a_{i_{\sigma(l)}})\right) \bl p_{{i_{\sigma (f)}},{j_1}}(a_{j_1}) \left(\prod_{l=f}^{k} \bl p_{{i_{\sigma(l)}},{j_2}}(a_{i_{\sigma(l)}})\right)\\
&= \overline Q(\rho,U).
\end{align*}
}
This concludes Step 1.
\paragraph{Step 2} Assume that $i'=i_1$ and $j' = j_2 \neq j_1$. We construct two ordered policies, $\rho$ and $\tilde \rho$ such that $\sigr_\rho=\sigr_{\tilde \rho}=(a_{j_2},a_{j_1})$ and $\sigl_\rho=(a_{i_1},a_{i_2},\dots, a_{i_k})$, $\sigl_{\tilde \rho}=(a_{i_k},a_{i_2},\dots, a_{i_{k-1}},a_{i_1})$. The previous Step 1 implies that $\overline Q(\rho,U)=\overline Q(\tilde \rho, U)$; thus, it suffices to show that $\overline Q(\pi,U)=\overline Q(\tilde \rho, U)$. Notice that
{\thinmuskip=.2mu
\medmuskip=0mu plus .2mu minus .2mu
\thickmuskip=1mu plus 1mu
\begin{align}\label{jnknkfdas}
\overline Q(\pi,U)&=\bl p_{i_k,j_1}(a_{i_k})\overbrace{\prod_{l=1}^{k-1} \bl p_{{i_l},{j_1}}(a_{i_l})}^{I_1}\nonumber \\
&+\bl p_{i_k,j_2}(a_{i_k})\left[\bl p_{i_k,j_1}(a_{j_1})\underbrace{\prod_{l=1}^{k-1} \bl p_{{i_l},{j_1}}(a_{i_l})}_{I_1}+\underbrace{\sum_{f=1}^{k-1} \left(\prod_{l=1}^{f-1} \bl p_{{i_l},{j_1}}(a_{i_l})\right) \bl p_{{i_f},{j_1}}(a_{j_1}) \left(\prod_{l=f}^{k-1} \bl p_{{i_l},{j_2}}(a_{i_l})\right)}_{I_2}  \right]
\end{align}}%
Rearranging,
\begin{align}\label{eq:kijnjvfvs}
\textnormal{Eq. }\refeq{jnknkfdas} &=\bl p_{i_k,j_1}(a_{i_k})I_1+\bl p_{i_k,j_2}(a_{i_k})\left[\bl p_{i_k,j_1}(a_{j_1})I_1+I_2  \right]. \nonumber \\
&= \bl p_{i_k,j_1}(a_{i_k})\left( \bl p_{i_k,j_2}(a_{i_k})+\bl p_{i_k,j_2}(a_{j_2}) \right) I_1+\bl p_{i_k,j_2}(a_{i_k})\left[\bl p_{i_k,j_1}(a_{j_1})I_1+I_2  \right]. \nonumber \\
&= \bl p_{i_k,j_1}(a_{i_k})\bl p_{i_k,j_2}(a_{j_2})I_1+\bl p_{i_k,j_2}(a_{i_k})\left[I_1+I_2  \right]. 
\end{align}
Recall that
\begin{align} \label{porenkjnf}
I_1+I_2=\prod_{l=1}^{k-1} \bl p_{{i_l},{j_1}}(a_{i_l})+\sum_{f=1}^{k-1} \left(\prod_{l=1}^{f-1} \bl p_{{i_l},{j_1}}(a_{i_l})\right) \bl p_{{i_f},{j_1}}(a_{j_1}) \left(\prod_{l=f}^{k-1} \bl p_{{i_l},{j_2}}(a_{i_l})\right),
\end{align}
which is precisely $\overline Q(\pi,\left\{a_{i_1},\dots, a_{i_{k-1}},a_{j_1},a_{j_2}\right\})$; thus, the inductive step implies that it is order invariant. Let $\sigma:\mathbb N \rightarrow \mathbb N$ such that $\sigma(1)=k,\sigma(k)=1$, and $\sigma(i)=i$ for $1<i<k$. Since $\overline Q(\pi,\left\{a_{i_1},\dots, a_{i_{k-1}},a_{j_1},a_{j_2}\right\})=\overline Q(\tilde \rho,\left\{a_{i_1},\dots, a_{i_{k-1}},a_{j_1},a_{j_2}\right\}))$, we conclude that the expression in Equation (\ref{porenkjnf}) equals 
\[
\prod_{l=2}^{k} \bl p_{{\sigma (i_l)},{j_2}}(a_{\sigma (i_l)})+\sum_{f=2}^{k} \left(\prod_{l=2}^{f-1} \bl p_{{\sigma (i_l)},{j_2}}(a_{\sigma (i_l)})\right) \bl p_{{\sigma (i_f)},{j_2}}(a_{j_2}) \left(\prod_{l=f}^{k} \bl p_{{\sigma (i_l)},{j_1}}(a_{\sigma (i_l)})\right).
\]
Notice that in the above expression, we first try to explore $a_{j_2}$, and only then $a_{j_1}$. Combining this with Equation \refeq{eq:kijnjvfvs},
{\thinmuskip=.2mu
\medmuskip=0mu plus .2mu minus .2mu
\thickmuskip=1mu plus 1mu
\begin{align*}
&\textnormal{Eq. } \refeq{eq:kijnjvfvs}=\bl p_{i_k,j_1}(a_{i_k})\bl p_{i_k,j_2}(a_{j_2})I_1\\
&\quad + \bl p_{i_k,j_2}(a_{i_k})\left[\prod_{l=2}^{k} \bl p_{{\sigma (i_l)},{j_2}}(a_{\sigma (i_l)})+\sum_{f=2}^{k} \left(\prod_{l=2}^{f-1} \bl p_{{\sigma (i_l)},{j_2}}(a_{\sigma (i_l)})\right) \bl p_{{\sigma (i_f)},{j_2}}(a_{j_2}) \left(\prod_{l=f}^{k} \bl p_{{\sigma (i_l)},{j_1}}(a_{\sigma (i_l)})\right) \right]. \\
&= \bl p_{i_k,j_1}(a_{i_k})\bl p_{i_k,j_2}(a_{j_2})\prod_{l=1}^{k-1} \bl p_{{i_l},{j_1}}(a_{i_l})+\bl p_{i_k,j_2}(a_{i_k})\prod_{l=2}^{k} \bl p_{{\sigma (i_l)},{j_2}}(a_{\sigma (i_l)})\\
&\quad + \bl p_{i_k,j_2}(a_{i_k})\left[\sum_{f=2}^{k} \left(\prod_{l=2}^{f-1} \bl p_{{\sigma (i_l)},{j_2}}(a_{\sigma (i_l)})\right) \bl p_{{\sigma (i_f)},{j_2}}(a_{j_2}) \left(\prod_{l=f}^{k} \bl p_{{\sigma (i_l)},{j_1}}(a_{\sigma (i_l)})\right) \right]\\
&=\bl p_{i_k,j_2}(a_{j_2})\prod_{l=1}^{k} \bl p_{{\sigma(i_l)},{j_1}}(a_{\sigma(i_l)})+\prod_{l=1}^{k} \bl p_{{\sigma (i_l)},{j_2}}(a_{\sigma (i_l)})\\
&\quad + \bl p_{i_k,j_2}(a_{i_k})\left[\sum_{f=2}^{k} \left(\prod_{l=2}^{f-1} \bl p_{{\sigma (i_l)},{j_2}}(a_{\sigma (i_l)})\right) \bl p_{{\sigma (i_f)},{j_2}}(a_{j_2}) \left(\prod_{l=f}^{k} \bl p_{{\sigma (i_l)},{j_1}}(a_{\sigma (i_l)})\right) \right]\\
&=\prod_{l=1}^{k} \bl p_{{\sigma (i_l)},{j_2}}(a_{\sigma (i_l)})+\sum_{f=1}^{k} \left(\prod_{l=1}^{f-1} \bl p_{{\sigma (i_l)},{j_2}}(a_{\sigma (i_l)})\right) \bl p_{{\sigma (i_f)},{j_2}}(a_{j_2}) \left(\prod_{l=f}^{k} \bl p_{{\sigma (i_l)},{j_1}}(a_{\sigma (i_l)})\right), 
\end{align*}}%
and the latter is precisely $\overline Q(\tilde \rho, U)$.
\paragraph{Step 3}
The two previous steps imply that for any $a_{i'}\in \ug, a_{j'}\in \ul$, it holds that
\[
\overline Q_{i',j'}(U)=\overline Q_{i',j_1}(U)= \overline Q_{i_1,j_1}(U).
\]
This completes the proof of Proposition \ref{prop:case of two}.
\end{proofof}
\begin{proofof}{Claim \ref{claim:triplets additional}}
To ease readability, let $\tmu{i}\defeq\abs{\mu(a_i)}$ for every $a_i\in A$. It holds that
\begin{align*}
&\bl p_{{i_2},{j_2}}(a_{i_2})\left( \bl p_{{i_1},{j_1}}(a_{j_1})\bl p_{{i_1},{j_2}}(a_{i_1})+\bl p_{{i_1},{j_1}}(a_{i_1})\bl p_{{i_2},{j_1}}(a_{j_1})\right)\\
&=\frac{\tmu{j_2}}{\tmu{i_2}+\tmu{j_2}}\left( \frac{\tmu{i_1}}{\tmu{i_1}+\tmu{j_1}}\frac{\tmu{j_2}}{\tmu{i_1}+\tmu{j_2}}+\frac{\tmu{j_1}}{\tmu{i_1}+\tmu{j_1}}\frac{\tmu{i_2}}{\tmu{i_2}+\tmu{j_1}}\right) \\
&= \frac{\tmu{j_2}\tmu{i_1}\tmu{j_2}(\tmu{i_2}+\tmu{j_1})+\tmu{j_2}\tmu{j_1}\tmu{i_2}(\tmu{i_1}+\tmu{j_2})}{(\tmu{i_1}+\tmu{j_1})(\tmu{i_1}+\tmu{j_2})(\tmu{i_2}+\tmu{j_1})(\tmu{i_2}+\tmu{j_2})}\\
&= \frac{\overbrace{\tmu{j_2}\tmu{i_1}\tmu{j_2}\tmu{i_2}}^{I}+\overbrace{\tmu{j_2}\tmu{i_1}\tmu{j_2}\tmu{j_1}}^{II}+\overbrace{\tmu{j_2}\tmu{j_1}\tmu{i_2}\tmu{i_1}}^{III}+\overbrace{\tmu{j_2}\tmu{j_1}\tmu{i_2}\tmu{j_2}}^{IV}}{(\tmu{i_1}+\tmu{j_1})(\tmu{i_1}+\tmu{j_2})(\tmu{i_2}+\tmu{j_1})(\tmu{i_2}+\tmu{j_2})}\\
&= \frac{\overbrace{\tmu{j_2}\tmu{i_1}\tmu{j_2}\tmu{i_2}}^{I}+\overbrace{\tmu{j_2}\tmu{j_1}\tmu{i_2}\tmu{j_2}}^{IV}+\overbrace{\tmu{j_2}\tmu{j_1}\tmu{i_2}\tmu{i_1}}^{III}+\overbrace{\tmu{j_2}\tmu{i_1}\tmu{j_2}\tmu{j_1}}^{II}}{(\tmu{i_1}+\tmu{j_1})(\tmu{i_1}+\tmu{j_2})(\tmu{i_2}+\tmu{j_1})(\tmu{i_2}+\tmu{j_2})}\\
&= \frac{\tmu{j_2}\tmu{i_2}\tmu{j_2}(\tmu{i_1}+\tmu{j_1})+\tmu{j_2}\tmu{j_1}\tmu{i_1}(\tmu{i_2}+\tmu{j_2})}{(\tmu{i_1}+\tmu{j_1})(\tmu{i_1}+\tmu{j_2})(\tmu{i_2}+\tmu{j_1})(\tmu{i_2}+\tmu{j_2})}\\
&=\bl p_{{i_1},{j_2}}(a_{i_1})\left(\bl p_{{i_2},{j_1}}(a_{j_1})\bl p_{{i_2},{j_2}}(a_{i_2})+\bl p_{{i_2},{j_1}}(a_{i_2})\bl p_{{i_1},{j_1}}(a_{j_1})  \right).
\end{align*}
\end{proofof}

\section{Proof of Theorem \ref{thm:holy grail}}\label{sec:proof of thm}
\begin{proofof}{Theorem \ref{thm:holy grail}}
Fix an arbitrary instance. We prove the claim by a two-dimensional induction on the size of $\above(s),\below(s)$, for states $s\in \mS$. The base cases are 
\begin{itemize}
\item $\abs{\above(s)}\geq 2 $ and $\abs{\below(s)} = 1$ (Proposition \ref{prop:W case of one}), and
\item $\abs{\above(s)}=1$ and $\abs{\below(s)}\geq 2$ (Proposition \ref{prop:W case of one strong}),
\end{itemize}
which we relegate to Section \ref{sec:for theorem}. Next, assume the statement holds for all $s\in \mS$ such that $\abs{\above(s)}\leq K_1$, $\abs{\below(s)}\leq K_2$ and $\abs{\above(s)}+\abs{\below(s)}< K_1+K_2$. Let $U\in\mS$ denote a state with $\abs{\above(U)}=K_1$ and $\abs{\below(U)}=K_2$. For abbreviation, let $\ug\defeq\above(U)=\{a_{i_1},a_{i_2},\dots ,a_{i_{K_1}}\}$ and $\ul\defeq\below(U)=\{a_{j_1},a_{j_2},\dots ,a_{j_{K_2}}\}$, and assume the indices follow the stochastic order. Further, for every $a_i \in \ug, a_j\in \ul$ let 
\[
W^*_{i,j}(\ug,\ul)\defeq\bl p_{i,j}(a_j)W^*(\ug,\ul\setminus \{a_j\}) +\bl p_{i,j}(a_i)W^*(\ug\setminus\{a_i\},\ul).
\]
We need to prove that $W^*_{{i_1},{j_1}}(s)=W^*(s)$.
\paragraph{Remarks} Notice that if $X_{a_i'}>0$ for $a_{i'}\in \above(A)$, any $\mP$-valid policy gets $W^*(s)$. To see this, recall that $\mP$-valid policies reach terminate states only after exploring all arms in $\above(A)$. Consequently, we assume for the rest of the proof that $X_{a_i'}\leq 0$ for $a_{i'}\in \above(A)$. 
\paragraph{Step 1}
\begin{figure}[htbp]
\centering
\forestset{
 strongedge label/.style 2 args={
    edge label={node[midway,left, #1]{#2}},
  }, 
 weakedge label/.style 2 args={
    edge label={node[midway,right, #1]{#2}},
  }, 
   straightedge label/.style 2 args={
    edge label={node[midway, #1]{#2}},
  }, 
  important/.style={draw={red,thick,fill=red}}
}
\begin{forest} 
[{\Large $\pi$},
 [{$(U_>{,}U_<)$}, edge={white},l*=.05, for tree=
	{
	draw,
	font=\sffamily,
	l+=.5cm,
	inner sep=2pt,
	l sep=5pt,
	s sep=5pt,
	parent anchor=south,
	child anchor=north
    }
 	[${(U_>\setminus\{ a_{i'}\},U_<)}$, strongedge label={left}{$\bl p_{{i'},{j_k}}(a_{i'})$}
 		[,white, edge={dashed}, straightedge label={fill=white}{$\pi^*$}
 		]
 	]
	[${(U_>,U_<\setminus\{ a_{j_k}\})}$, weakedge label={right}{$\bl p_{{i'},{j_k}}(a_{j_k})$}
 		[,white, edge={dashed}, straightedge label={fill=white}{$\pi^*$}
 		]
 	]
 ]
]
\end{forest}
\qquad
\begin{forest} 
[{\Large $\rho$},
 [{$(U_>{,}U_<)$}, edge={white},l*=.05, for tree=
	{
	draw,
	font=\sffamily,
	l+=.5cm,
	inner sep=2pt,
	l sep=5pt,
	s sep=5pt,
	parent anchor=south,
	child anchor=north
    }
 	[${(U_>\setminus\{ a_{i'}\},U_<)}$, strongedge label={left}{$\bl p_{{i'},{j_k}}(a_{i'})$}
 		[,white, edge={dashed}, straightedge label={fill=white}{$\pi^*$}
 		]
 	]
	[${(U_>,U_<\setminus\{ a_{j_{k+1}}\})}$, weakedge label={right}{$\bl p_{{i'},{j_{k+1}}}(a_{j_{k+1}})$}
 		[,white, edge={dashed}, straightedge label={fill=white}{$\pi^*$}
 		]
 	]
 ]
]
\end{forest}
\label{fig:helping for thm two}
\caption{Illustration of the policies $\pi,\rho$ from Step 1 of Theorem \ref{thm:holy grail}. Notice that the left sub-trees of $\pi$ and $\rho$ are identical.
}
\end{figure}

Fix $a_{i'} \in \ug$. We show that for every $k$, $1\leq k<K_2$ it holds that $W^*_{{i'},{j_k}}(s) \geq W^*_{{i'},{j_{k+1}}}(s)$. We define the ordered, $\mP$-valid policy $\pi^*$ by $\sigl_\pi = (a_{i'},\dots)$  namely, $\sigl_{\pi^*}$ ranks $a_{i'}$ first and all the other arms in $\ug$ arbitrarily, and $\sigr_{\pi^*}=(a_{j_1},a_{j_2},\dots,a_{j_{K_2}})$. Due to the inductive assumption, for every state $s$ with $\abs{s}<\abs{U}$, $W^*(s) = W(\pi^*,s)$. Next, we define the policies $\pi,\rho$ explicitly, as follows:
\[
\pi(s) = 
\begin{cases}
\bl p_{{i'},{j_k}} & \textnormal{if $s=(\ug,\ul)$} \\
\pi^*(s)& \textnormal{otherwise} 
\end{cases},\quad 
\rho(s) = 
\begin{cases}
\bl p_{{i'},{j_{k+1}}} & \textnormal{if $s=(\ug,\ul)$} \\
\pi^*(s)& \textnormal{otherwise}
\end{cases}.
\]
We illustrate $\pi$ and $\rho$ in Figure \ref{fig:helping for thm two}. Note that both $\pi,\rho$ have on-path states that are off-path for $\pi^*$. For instance, $\rho$ reaches the state $(\ug,\ul\setminus\{a_{j_{k+1}}\})$ with positive probability, while $\pi^*$ cannot reach it at all. In addition, $\pi$ and $\rho$ are left-ordered with $\sigl_\pi=\sigl_\rho=\sigl_{\pi^*}$. Due to the inductive assumption, it is enough to show that $W(\pi,U)-W(\rho,U)\geq 0$, as this implies $W^*_{{i'},{j_k}}(s) \geq W^*_{{i'},{j_{k+1}}}(s)$.

Next, we factorize $W(\pi,U)$ as follows: for every state $U'\subset U$, we factor $W(\pi,U')$ as long as $a_{j_k},a_{j_{k+1}}\in U'$. Once we reach a term $W(\pi,U')$ with $a_{j_k},a_{j_{k+1}}\notin U'$, we stop. Let $\Psi\defeq \prefix(a_{j_1},a_{j_2}\dots, a_{j_{k-1}})$ be the set of (possibly empty) prefixes of the first $k-1$ arms in $\ul$ according to $\pi^*$. Observe that\footnote{The reader can think of $\psi$ as the set of arms from $(a_{j_1},a_{j_2}\dots, a_{j_{k-1}})$ that were explored.}  
{\thinmuskip=.2mu
\medmuskip=0mu plus .2mu minus .2mu
\thickmuskip=1mu plus 1mu
\begin{align}\label{eq:w of pi}
W(\pi,U)&=\sum_{\psi\in \Psi}
\Pr(\pathto{s}{(\ul\setminus \psi)})R(\ul\setminus \psi)+\Pr(\pathto{s}{(\ul\setminus (\psi \cup \{a_{j_k}\})})R(\ul\setminus (\psi \cup \{a_{j_k}\})\nonumber \\
&\qquad +\underbrace{\sum_{\substack{Z\in \suff(\sigl_\pi)}}
f^\pi_Z \cdot W(\pi,Z,\ul\setminus \{a_{j_1},a_{j_2},\dots,a_{j_{k}},a_{j_{k+1}} \})}_{I^\pi}
\end{align}}
The coefficients $(f^\pi_Z)$ follow from the factorization process. Using similar factorization,
{\thinmuskip=.2mu
\medmuskip=0mu plus .2mu minus .2mu
\thickmuskip=1mu plus 1mu
\begin{align}\label{eq:w of rho}
W(\rho,U)&=\sum_{\psi\in \Psi}
\Pr(\pathtorho{s}{(\ul\setminus \psi)})R(\ul\setminus \psi)+\Pr(\pathtorho{s}{(\ul\setminus (\psi \cup \{a_{j_{k+1}}\}))})R(\ul\setminus (\psi \cup \{a_{j_{k+1}}\}))\nonumber\\
&\qquad +\underbrace{\sum_{\substack{Z\in \suff(\sigl_\rho)}}
f^\rho_Z \cdot W(\rho,Z,\ul\setminus \{a_{j_1},a_{j_2},\dots,a_{j_{k}},a_{j_{k+1}} \})}_{I^\rho}. 
\end{align}}%
Next, we express $(f^\pi_Z)_Z$ in terms of $Q$. By relying on the Equivalence lemma, we show that
\begin{claim}\label{claim:f are equal}
For every $Z\in \suff(\sigl_\pi)=\suff(\sigl_\rho)$, it holds that $f^\pi_Z=f^\rho_Z$.
\end{claim}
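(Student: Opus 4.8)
\textbf{Proof plan for Claim \ref{claim:f are equal}.}
The plan is to express both $f^\pi_Z$ and $f^\rho_Z$ in closed form in terms of the \emph{policy-independent} quantity $Q$, exactly as Proposition~\ref{prop:coef c} does for the coefficients $c^\pi_Z$, and then invoke the Equivalence Lemma (Lemma~\ref{lemma:equivalence}). Write $\ul^- \defeq \ul\setminus\{a_{j_1},\dots,a_{j_{k+1}}\}$. First I would record the combinatorial meaning of the coefficients: tracing the factorization that produced Equation~\refeq{eq:w of pi}, $f^\pi_Z$ is the total $\pi$-probability of reaching a node $v$ of $T(\pi)$ with $state(v)=(Z,\ul^-)$ that is an \emph{entry node} --- the first node on its root-to-leaf path whose state contains neither $a_{j_k}$ nor $a_{j_{k+1}}$ --- and hence is a right child of a parent whose below-set is $\ul^-\cup\{a_{j_{k+1}}\}$ (for $\pi$) or $\ul^-\cup\{a_{j_k}\}$ (for $\rho$). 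Since in $\pi$ (respectively $\rho$) the arms $a_{j_1},\dots,a_{j_{k+1}}$ are always precisely the first $k+1$ elements of $\ul$ to be explored, the below-set is exactly $\ul^-$ the moment both targets are gone; in particular every pruned leaf in both $T(\pi)$ and $T(\rho)$ carries below-set exactly $\ul^-$, even though $\pi$ and $\rho$ consume $a_{j_k}$ and $a_{j_{k+1}}$ in opposite orders.

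Next I would mirror the argument of Proposition~\ref{prop:coef c} (and Figure~\ref{fig:tree illustation}). Build the pruned tree $T$ from $T(\pi)$ by traversing right-to-left and cutting the subtree below any node whose state contains neither $a_{j_k}$ nor $a_{j_{k+1}}$; its leaves split as in Observation~\ref{obs:two types} into ``Type~1'' terminal states ($\above=\emptyset$) and ``Type~2'' pruned leaves, which by the previous paragraph all have below-set $\ul^-$ and below-set strictly smaller than the parent's. For a nonempty suffix $Z\in\suff(\sigl_\pi)$ with $a_{i(Z)}=\argmin_{a_i\in Z}\sigl_\pi(a_i)$, the same mirroring-to-off-path-behavior argument as in Proposition~\ref{prop:coef c} shows that $Q(\pi,\ug\setminus Z,\{a_{j_1},\dots,a_{j_{k+1}}\})$ is the $\pi$-probability of reaching a Type-2 leaf with above-set $\psi\cup Z$ for a nonempty suffix $\psi$ of $\ug\setminus Z$, whereas $Q(\pi,(\ug\setminus Z)\cup\{a_{i(Z)}\},\{a_{j_1},\dots,a_{j_{k+1}}\})$ additionally counts the Type-2 leaves with above-set exactly $Z$. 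Subtracting,
\begin{equation*}
f^\pi_Z = Q\bigl(\pi,\,(\ug\setminus Z)\cup\{a_{i(Z)}\},\,\{a_{j_1},\dots,a_{j_{k+1}}\}\bigr)\;-\;Q\bigl(\pi,\,\ug\setminus Z,\,\{a_{j_1},\dots,a_{j_{k+1}}\}\bigr).
\end{equation*}
The one subtlety is that $\pi$'s move at the root $U$ is $\bl p_{i',j_k}$ rather than the $\pi^*$-move; but this portfolio mixes only $a_{i'}$ --- the $\sigl_{\pi^*}$-minimal element, hence an element of $\ug\setminus Z$ whenever $Z\neq\ug$ --- with $a_{j_k}\in\{a_{j_1},\dots,a_{j_{k+1}}\}$, so every path relevant to the identity above stays inside $(\ug\setminus Z)\cup\{a_{j_1},\dots,a_{j_{k+1}}\}$ and the mirroring goes through. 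The degenerate cases $Z=\ug$ and $|Z|=1$ (where the right-hand side may involve a $Q$ of a state of the same size as $U$) are handled as in Proposition~\ref{prop:coef c}; here there is no circularity, since Lemma~\ref{lemma:equivalence} is already established for \emph{all} states.

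Re-running the identical computation on $T(\rho)$ --- using that $\sigl_\rho=\sigl_{\pi^*}=\sigl_\pi$, so the element $a_{i(Z)}$ is unchanged, and that $\rho$'s root move $\bl p_{i',j_{k+1}}$ again mixes only $a_{i'}$ with $a_{j_{k+1}}\in\{a_{j_1},\dots,a_{j_{k+1}}\}$ --- yields
\begin{equation*}
f^\rho_Z = Q\bigl(\rho,\,(\ug\setminus Z)\cup\{a_{i(Z)}\},\,\{a_{j_1},\dots,a_{j_{k+1}}\}\bigr)\;-\;Q\bigl(\rho,\,\ug\setminus Z,\,\{a_{j_1},\dots,a_{j_{k+1}}\}\bigr).
\end{equation*}
Finally, $\pi$ and $\rho$ are $\mP$-valid, so Lemma~\ref{lemma:equivalence} gives $Q(\pi,s)=Q(\rho,s)$ for every state $s$, in particular for the two states appearing above; hence $f^\pi_Z=f^\rho_Z$. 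The step I expect to be the main obstacle is the closed form for $f^\pi_Z$ (the second paragraph): getting the pruned-tree bookkeeping exactly right, confirming that Type-2 leaves uniformly carry below-set $\ul^-$ in both trees despite the opposite exploration orders of $a_{j_k}$ and $a_{j_{k+1}}$, checking that the single root deviation from the ordered policy $\pi^*$ does not break the off-path mirroring, and disposing of the degenerate small-$Z$ cases exactly as in Proposition~\ref{prop:coef c}.
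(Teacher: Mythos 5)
Your proposal is correct and follows essentially the same route as the paper: express $f^\pi_Z$ (and $f^\rho_Z$) as a difference of $Q$-values by mirroring Proposition~\ref{prop:coef c}, then conclude via the Equivalence lemma (Lemma~\ref{lemma:equivalence}) --- the paper states exactly this and omits the details you supply. One remark: your closed form carries second argument $\{a_{j_1},\dots,a_{j_{k+1}}\}$ (the explored targets), which is the faithful analogue of Proposition~\ref{prop:coef c} and the combinatorially correct choice, whereas the paper writes $\ul\setminus\{a_{j_1},\dots,a_{j_{k+1}}\}$ there --- an apparent typo that does not affect the conclusion, since in either representation the Equivalence lemma renders the $\pi$- and $\rho$-expressions equal.
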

To see why Claim \ref{claim:f are equal} holds, notice that we can represent $f^\pi_Z$ as 
{\thinmuskip=.2mu
\medmuskip=0mu plus .2mu minus .2mu
\thickmuskip=1mu plus 1mu
\[
Q(\pi,\ug\setminus Z \cup \{a_{i(Z)}\}, \ul\setminus \{a_{j_1},a_{j_2},\dots,a_{j_{k}},a_{j_{k+1}} \})-Q(\pi,\ug\setminus Z, \ul\setminus \{a_{j_1},a_{j_2},\dots,a_{j_{k}},a_{j_{k+1}} \}),
\]}%
where $a_{i(Z)}$ is the minimal element in $Z$ according to $\sigl_\pi$. By invoking the Equivalence lemma, we can replace $\pi$ in the above expression with $\rho$ and thus for every $Z\in \suff(\sigl_\pi)=\suff(\sigl_\rho)$, it holds that $f^\pi_Z=f^\rho_Z$. Combining Claim \ref{claim:f are equal} with the inductive step and $\sigl_\pi,\sigl_\rho$ being equal, we conclude that $I^\pi=I^\rho$.

Next, we focus on the first sum of $W(\pi,U)$ in Equation \refeq{eq:w of pi}. For every $\psi\in \Psi$, we denote the event $E^\pi_\psi$ as a shorthand for
\[
E^\pi_\psi \defeq \left(\pathto{s}{(\ul\setminus \psi)}\right)\cup\left( \pathto{s}{(\ul\setminus (\psi \cup \{a_{j_k}\}))}\right).
\]
In words, $E^\pi_\psi$ is the event that the GMDP reaches the final state $(\ul\setminus \psi)$ or $(\ul\setminus (\psi \cup \{a_{j_k}\}))$ after starting in $s_0$ and following $\pi$. We use conditional expectation to simplify the summands in the first sum of Equation (\ref{eq:w of pi}),
{\thinmuskip=.2mu
\medmuskip=0mu plus .2mu minus .2mu
\thickmuskip=1mu plus 1mu
\begin{align}\label{eq:pi with alpha}
&\Pr(\pathto{s}{(\ul\setminus \psi)})R(\ul\setminus \psi)+\Pr(\pathto{s}{(\ul\setminus (\psi \cup \{a_{j_k}\})})R(\ul\setminus (\psi \cup \{a_{j_k}\})\nonumber\\
&=\Pr\left(E^\pi_\psi\right)\cdot \underbrace{\left( \Pr(\pathto{s}{(\ul\setminus \psi)}\mid E^\pi_\psi)R(\ul\setminus \psi)+\Pr(\pathto{s}{(\ul\setminus (\psi \cup \{a_{j_k}\})}\mid E^\pi_\psi)R(\ul\setminus (\psi \cup \{a_{j_k}\}))\right)}_{\alpha^\pi_\psi}\nonumber\\
&=\Pr(E^\pi_\psi) \alpha^\pi_\psi. 
\end{align}}%
Similarly for $\rho$, by letting
\[
E^\rho_\psi \defeq \left(\pathtorho{s}{(\ul\setminus \psi)}\right)\cup\left( \pathtorho{s}{(\ul\setminus (\psi \cup \{a_{j_{k+1}}\}))}\right)
\]
and following the derivation in Equation \refeq{eq:pi with alpha} for $\rho$, we get
{\thinmuskip=.2mu
\medmuskip=0mu plus .2mu minus .2mu
\thickmuskip=1mu plus 1mu
\begin{align}\label{eq:rho with alpha}
&\Pr(\pathtorho{s}{(\ul\setminus \psi)})R(\ul\setminus \psi)+\Pr(\pathtorho{s}{(\ul\setminus (\psi \cup \{a_{j_{k+1}}\})})R(\ul\setminus (\psi \cup \{a_{j_{k+1}}\})\nonumber\\
&=\Pr\left(E^\rho_\psi\right)\cdot \underbrace{\left( \Pr(\pathtorho{s}{(\ul\setminus \psi)}\mid E^\rho_\psi)R(\ul\setminus \psi)+\Pr(\pathtorho{s}{(\ul\setminus (\psi \cup \{a_{j_{k+1}}\})}\mid E^\rho_\psi)R(\ul\setminus (\psi \cup \{a_{j_{k+1}}\})) \right)}_{\alpha^\rho_\psi}\nonumber\\
&=\Pr(E^\rho_\psi) \alpha^\rho_\psi. 
\end{align}}%
Due to the fact that $I^\pi=I^\rho$ and relying on Equations (\ref{eq:w of pi})-(\ref{eq:rho with alpha}), we have  
\begin{align}\label{eq:w minus w}
W(\pi,U)-W(\rho,U)=\sum_{\psi\in \Psi}\left(\Pr(E^\pi_\psi) \alpha^\pi_\psi-\Pr(E^\rho_\psi) \alpha^\rho_\psi\right).
\end{align}
At this point, we might be tempted to show that every summand of the sum in Equation \refeq{eq:w minus w} is non-negative. Unfortunately, this approach would not work---stochastic dominance does not mean $\Pr(E^\pi_\psi) \alpha^\pi_\psi\geq \Pr(E^\rho_\psi) \alpha^\rho_\psi$\omer{want to write the example somewhere?}. Instead, we take a different approach. For every $l, 1\leq l \leq k-1$ let $E^\pi_{l}$ denote the event that $a_{j_l}$ was observed in the final state. Formally,
\[
E^\pi_{l} \defeq  \bigcup_{\substack{\psi' \in \Psi\\a_{j_l} \in \psi' }} \left(\pathto{s}{(\ul\setminus \psi')}\right)\cup\left(\pathto{s}{(\ul\setminus (\psi' \cup \{a_{j_k}\})}\right).
\]
In addition, we let $E^\pi_{k}$ denote the empty event, and $E^\pi_{0}$ be the full event (that occurs w.p.~$1$).  \omer{optional: write formally the probability space-in an earlier section} For every non-empty $\psi \in \Psi$, i.e. $\abs{\psi}\geq 1$, let $\max(\psi)\defeq\argmax_{j_l:{a_{j_l}} \in \psi} \sigr_{\pi^*}(a_{j_l})$. According to our assumption about the index of the arms, $\max(\psi)$ is simply the maximal index of an arm in $\psi$ (that is well-defined when $\abs{\psi}\geq 1$). In addition, for completeness, if $\psi=\emptyset$ we let $\max(\emptyset)=0$. We can use the terms $(E^\pi_{l})_{l=0}^k$ to provide an alternative form for $E^\pi_\psi$:
\[
E^\pi_\psi=E^\pi_{\max(\psi)}\setminus E^\pi_{\max(\psi)+1}.
\]
Put differently, for $\psi$ with $\max(\psi)<k-1$, $E^\pi_\psi$ can be viewed as the collection of all of events in which arm $a_{j_{\max(\psi)}}$ was explored, but arm $a_{j_{\max(\psi)+1}}$ was not (where the ``arm'' $a_{j_0}$ for $\psi=\emptyset$ refers to $E^\pi_{0}$). Recall that for $\psi$ with $\max(\psi)=k-1$, $E^\pi_\psi=E^\pi_{\max(\psi)}$ since $E^\pi_{k}$ is the empty event.

Since $E^\pi_{\max(\psi)+1}\subseteq E^\pi_{\max(\psi)}$, we have that for every $\psi\in \Psi$,
\begin{align}\label{eq:e pi with Z}
\Pr(E^\pi_\psi)=\Pr(E^\pi_{\max(\psi)})-\Pr(E^\pi_{\max(\psi)+1}),
\end{align}
taking care of edge cases too. By renaming $\alpha^\pi_\psi$ to $\alpha^\pi_{\max(\psi)}$, i.e., $\alpha^\pi_{\max(\psi)} \defeq \alpha^\pi_\psi $, and rearranging Equation \refeq{eq:w of pi} using Equations \refeq{eq:pi with alpha} and \refeq{eq:e pi with Z},
{\thinmuskip=.2mu
\medmuskip=0mu plus .2mu minus .2mu
\thickmuskip=1mu plus 1mu
\begin{align}\label{eq:w pi with diff}
W(\pi,U)&= I^\pi+\sum_{\psi\in \Psi} \Pr\left(E^\pi_\psi\right) \alpha^\pi_\psi=I^\pi+\sum_{l=0}^{k-1}\left(\Pr(E^\pi_{l})-\Pr(E^\pi_{l+1})\right) \alpha^\pi_l.
\end{align}}%
By defining $E^\rho_l$ analogously,
\[
E^\rho_{l} \defeq  \bigcup_{\substack{\psi' \in \Psi\\a_{j_l} \in \psi' }} \left(\pathtorho{s}{(\ul\setminus \psi')}\right)\cup\left(\pathtorho{s}{(\ul\setminus (\psi' \cup \{a_{j_{k+1}}\})}\right),
\]
and following similar arguments, we conclude that
{\thinmuskip=.2mu
\medmuskip=0mu plus .2mu minus .2mu
\thickmuskip=1mu plus 1mu
\begin{align}\label{eq:w rho with diff}
W(\rho,U)&= I^\rho+\sum_{\psi\in \Psi} \Pr\left(E^\rho_\psi\right) \alpha^\rho_\psi=I^\rho+\sum_{l=0}^{k-1}\left(\Pr(E^\rho_{l})-\Pr(E^\rho_{l+1})\right) \alpha^\rho_l.
\end{align}}%
By rephrasing Equation (\ref{eq:w minus w}) using Equations (\ref{eq:w pi with diff}) and (\ref{eq:w rho with diff}),
\begin{align}\label{eq: thm step 1 good}
W(\pi,U)-W(\rho,U)=\sum_{l=0}^{k-1}\left(\Pr(E^\pi_{l})-\Pr(E^\pi_{l+1})\right) \alpha^\pi_l-\left(\Pr(E^\rho_{l})-\Pr(E^\rho_{l+1})\right) \alpha^\rho_l.
\end{align}
Next, we show two monotonicity properties.
\begin{proposition}\label{prop: monotonicity in thm}
Under Assumption \ref{assumption:dominance},
\begin{enumerate}
\item for every $l\in \{0,1,\dots,k-1 \}$, it holds that $\alpha^\pi_l \geq \alpha^\rho_l$. \label{item:prop alphas rho pi}
\item for every $l\in \{0,1,\dots,k-2\}$, it holds that $\alpha^\pi_{l+1} \geq \alpha^\pi_{l}$ and $\alpha^\rho_{l+1} \geq \alpha^\rho_{l}$. \label{item:prop alphas alpha pi}
\end{enumerate}
\end{proposition}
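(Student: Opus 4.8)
The plan is to reduce everything to two elementary monotonicity properties of the expected terminal reward, and then to one delicate path-counting argument. Throughout I invoke the Remark already recorded in the proof of Theorem~\ref{thm:holy grail}: we may assume $X(a)\le 0$ for every $a\in\above(A)$, so that only arms in $\below$ can ever trigger a positive terminal reward. For $S\subseteq\ul$ write $g(S)\defeq\E[\max_{a\in A}X(a)\cdot\ind[\exists\,a\in S:\ X(a)>0]]$, which is exactly the expected reward of a terminal state whose explored $\below$-arms are precisely $S$; in the notation of \refeq{eq:pi with alpha}--\refeq{eq:rho with alpha}, $R(\ul\setminus\psi)=g(\psi)$ and $R(\ul\setminus(\psi\cup\{a_{j_k}\}))=g(\psi\cup\{a_{j_k}\})$. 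The two facts I would establish about $g$ are: \textbf{(a)} $g$ is non-decreasing under inclusion, since enlarging the explored set can only make ``some explored arm is positive'' more likely while the collected value $\max_{a\in A}X(a)$ is unchanged; and \textbf{(b)} if $X(a)$ stochastically dominates $X(a')$ and $a,a'\notin S$ then $g(S\cup\{a\})\ge g(S\cup\{a'\})$. For (b) I split on the signs of $X(a)$ and $X(a')$ and use that these arms and the remaining rewards are mutually independent, obtaining $g(S\cup\{a\})-g(S\cup\{a'\})=(1-p_{a'})A-(1-p_a)B$, where $p_a=\Pr(X(a)>0)\ge p_{a'}=\Pr(X(a')>0)$, $N\defeq\max_{b\ne a,a'}X(b)$, $A\defeq\E[\ind[\max_S X\le 0]\,\ind[X(a)>0]\,\max(N,X(a))]$ and $B$ the same with $a'$ in place of $a$; since $v\mapsto\ind[v>0]\max(n,v)$ is non-decreasing we get $A\ge B\ge 0$, and then $(1-p_{a'})A-(1-p_a)B\ge(1-p_a)(A-B)\ge 0$.

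Next I would rewrite the $\alpha$'s. Put $\psi_l\defeq\{a_{j_1},\dots,a_{j_l}\}$ (so $\psi_0=\emptyset$, $\Psi=\{\psi_0,\dots,\psi_{k-1}\}$ and $\max(\psi_l)=l$). Conditioned on $E^\pi_{\psi_l}$ the run ends in $(\ul\setminus\psi_l)$ or in $(\ul\setminus(\psi_l\cup\{a_{j_k}\}))$, so
\[
\alpha^\pi_l=(1-q^\pi_l)\,g(\psi_l)+q^\pi_l\,g(\psi_l\cup\{a_{j_k}\}),\qquad \alpha^\rho_l=(1-q^\rho_l)\,g(\psi_l)+q^\rho_l\,g(\psi_l\cup\{a_{j_{k+1}}\}),
\]
with $q^\pi_l\in[0,1]$ (resp.\ $q^\rho_l$) the probability, conditional on $E^\pi_{\psi_l}$ (resp.\ $E^\rho_{\psi_l}$), that $a_{j_k}$ (resp.\ $a_{j_{k+1}}$) is explored. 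Item~\ref{item:prop alphas alpha pi} then follows from (a)--(b) alone: for $l\le k-2$ the arm $a_{j_{l+1}}$ precedes both $a_{j_k}$ and $a_{j_{k+1}}$ in $\sigr_{\pi^*}$, hence (Assumption~\ref{assumption:dominance}) stochastically dominates both, so
\[
\alpha^\pi_l\le\max\{g(\psi_l),\,g(\psi_l\cup\{a_{j_k}\})\}\le g(\psi_{l+1})\le\min\{g(\psi_{l+1}),\,g(\psi_{l+1}\cup\{a_{j_k}\})\}\le\alpha^\pi_{l+1},
\]
and the identical chain with $a_{j_{k+1}}$ replacing $a_{j_k}$ gives $\alpha^\rho_{l+1}\ge\alpha^\rho_l$.

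For Item~\ref{item:prop alphas rho pi}, subtracting the two displayed formulas gives $\alpha^\pi_l-\alpha^\rho_l=q^\pi_l(g(\psi_l\cup\{a_{j_k}\})-g(\psi_l))-q^\rho_l(g(\psi_l\cup\{a_{j_{k+1}}\})-g(\psi_l))$, and since (b) gives $g(\psi_l\cup\{a_{j_k}\})\ge g(\psi_l\cup\{a_{j_{k+1}}\})\ge g(\psi_l)$ it suffices to prove $q^\pi_l\ge q^\rho_l$. I would write $q^\pi_l=\beta^\pi_k/(\beta^\pi_k+\beta^\pi_0)$, where $\beta^\pi_k,\beta^\pi_0$ are the probabilities of reaching the two terminal states under $\pi$, condition on the first coin flip at $U$, and then track the ordered policy $\pi^*$ by off-path analysis as in the proofs of Propositions~\ref{prop:coef c}--\ref{prop:coef d}. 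This gives $\beta^\pi_0=\bl p_{i',j_k}(a_{i'})\,D$ and $\beta^\pi_k=\bl p_{i',j_k}(a_{j_k})\,C+\bl p_{i',j_k}(a_{i'})\,E$, where $D$ is the probability that, started from $(\ug\setminus\{a_{i'}\},\ul)$, $\pi^*$ terminates with explored $\below$-set exactly $\psi_l$; $C$ is the probability that, started from the full $\ug$ with $a_{j_k}$ removed, $\pi^*$ terminates with explored $\below$-set exactly $\psi_l$; and $E\ge 0$ is nonzero only in the boundary case $l=k-1$ (the run that, after the prefix $a_{j_1},\dots,a_{j_{k-1}}$, additionally explores $a_{j_k}$ through $\pi^*$). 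The analogous quantities for $\rho$ satisfy $D'=D$ (the two runs obey the same dynamics on $\{a_{j_1},\dots,a_{j_{l+1}}\}$ together with $\ug$), $E'=0$ (since $a_{j_k}$ precedes $a_{j_{k+1}}$, $\rho$'s special arm gets no ``second chance''), and $C'\le C$: after the common prefix the leftover arms of $\ug$ must all fail against $a_{j_{k+1}}$ for $C$ and against $a_{j_k}$ for $C'$, and each failure probability $\bl p_{i_p,j}(a_{i_p})=|\mu(a_j)|/(\mu(a_{i_p})+|\mu(a_j)|)$ is larger for $j=j_{k+1}$ because $|\mu(a_{j_{k+1}})|\ge|\mu(a_{j_k})|$ (again Assumption~\ref{assumption:dominance}). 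The same closed form also gives $\bl p_{i',j_k}(a_{j_k})\ge\bl p_{i',j_{k+1}}(a_{j_{k+1}})$ and $\bl p_{i',j_k}(a_{i'})\le\bl p_{i',j_{k+1}}(a_{i'})$. Since $x\mapsto x/(x+cD)$ is non-decreasing in $x\ge 0$ and non-increasing in $c\ge 0$, combining $\bl p_{i',j_k}(a_{j_k})\,C+\bl p_{i',j_k}(a_{i'})\,E\ge\bl p_{i',j_{k+1}}(a_{j_{k+1}})\,C'$ with $\bl p_{i',j_k}(a_{i'})\le\bl p_{i',j_{k+1}}(a_{i'})$ gives $q^\pi_l\ge q^\rho_l$, which finishes the proof.

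The routine ingredients here are the two monotonicity statements about $g$ and the algebra with the closed form of $\bl p_{i,j}$. The hard part is the path-probability bookkeeping behind $q^\pi_l\ge q^\rho_l$, and within it the boundary index $l=k-1$: there $a_{j_k}$ (but not $a_{j_{k+1}}$) may also be discovered after the common prefix, so the run trees of $\pi$ and $\rho$ are no longer mirror images and one has to isolate the extra term $E$ and verify it can only help.
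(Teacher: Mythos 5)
Your proof is correct and follows essentially the same route as the paper's: Part 2 via the sandwich $\alpha_l \le \E[R(\ul\setminus(\psi_l\cup\{a_{j_\cdot}\}))] \le \E[R(\ul\setminus\psi_{l+1})] \le \alpha_{l+1}$ using monotonicity of the terminal reward under inclusion and stochastic dominance, and Part 1 via comparing the two terminal-state reaching probabilities and invoking the monotonicity of $x\mapsto x/(x+c)$ together with $\bl p_{i',j_k}(a_{i'})\le \bl p_{i',j_{k+1}}(a_{i'})$ and $\bl p_{i',j_k}(a_{j_k})\ge \bl p_{i',j_{k+1}}(a_{j_{k+1}})$. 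If anything, your write-up is more explicit than the paper where it says ``using similar arguments'': you spell out the $C,C',D,E$ bookkeeping and the boundary case $l=k-1$ in which $\pi$'s tree has an extra path to the terminal state, which the paper leaves implicit.
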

In fact, this is the only place in the proof of Theorem \ref{thm:holy grail} where we rely on Assumption \ref{assumption:dominance}. Equipped with Proposition \ref{prop: monotonicity in thm}, we can make the final argument. For every $r,r\in \{1\dots,k-1\}$ let
\[
f(r) \defeq \left( \Pr(E^\pi_{r})-\Pr(E^\rho_{r}) \right)\alpha^\pi_{r-1}.
\]
In addition, let
\[
g(r)\defeq \sum_{l=0}^{r}\left(\Pr(E^\pi_{l})-\Pr(E^\pi_{l+1})\right) \alpha^\pi_l-\left(\Pr(E^\rho_{l})-\Pr(E^\rho_{l+1})\right) \alpha^\rho_l
\]
We shall show that for every $r,r\in \{0,\dots,k-2\}$ it holds that
\begin{align}\label{eq:thm last argument}
W(\pi,U)-W(\rho,U)\geq f(r+1)+g(r).
\end{align}
For $r=k-2$, we have
{
\begin{align*}
\textnormal{Eq. (\ref{eq: thm step 1 good})}&=\left(\Pr(E^\pi_{k-1})-\Pr(E^\pi_{k})\right) \alpha^\pi_{k-1}-\left(\Pr(E^\rho_{k-1})-\Pr(E^\rho_{k})\right) \alpha^\rho_{k-1}+g(k-2)\\
&\stackrel{\substack{E^\pi_{k},E^\rho_{k}\\\textnormal{are empty}}}{=}\Pr(E^\pi_{k-1})\alpha^\pi_{k-1}-\Pr(E^\rho_{k-1}) \alpha^\rho_{k-1}+g(k-2)\\
&\stackrel{\textnormal{Prop. \ref{prop: monotonicity in thm}.\ref{item:prop alphas rho pi}}}{\geq}\left(\Pr(E^\pi_{k-1})-\Pr(E^\rho_{k-1}) \right)\alpha^\pi_{k-1}+g(k-2)\\
&\stackrel{\textnormal{Prop. \ref{prop: monotonicity in thm}.\ref{item:prop alphas alpha pi}}}{\geq}\left(\Pr(E^\pi_{k-1})-\Pr(E^\rho_{k-1}) \right)\alpha^\pi_{k-2}+g(k-2)\\
&=f(k-1)+g(k-2).
\end{align*}}%
Assume Inequality \refeq{eq:thm last argument} holds for $r+1$. Then, for $r$ we have 
{\thinmuskip=.2mu
\medmuskip=0mu plus .2mu minus .2mu
\thickmuskip=1mu plus 1mu
\begin{align*}
f(r+1)+g(r) &= \left( \Pr(E^\pi_{r+1})-\Pr(E^\rho_{r+1}) \right)\alpha^\pi_{r}+g(r-1)\\
&\qquad \qquad +\left(\Pr(E^\pi_{r})-\Pr(E^\pi_{r+1})\right) \alpha^\pi_r-\underbrace{\left(\Pr(E^\rho_{r})-\Pr(E^\rho_{r+1})\right)}_{\geq 0,\textnormal{ Eq. \refeq{eq:e pi with Z}}}  \alpha^\rho_r\\
&\stackrel{\textnormal{Prop. \ref{prop: monotonicity in thm}.\ref{item:prop alphas rho pi}}}{\geq} \left( \Pr(E^\pi_{r+1})-\Pr(E^\rho_{r+1}) \right)\alpha^\pi_{r}+g(r-1)\\
&\qquad \qquad +\left(\Pr(E^\pi_{r})-\Pr(E^\pi_{r+1})\right) \alpha^\pi_r-\left(\Pr(E^\rho_{r})-\Pr(E^\rho_{r+1})\right) \alpha^\pi_r\\
&=\Pr(E^\pi_{r})\alpha^\pi_r-\Pr(E^\rho_{r}) \alpha^\pi_r +g(r-1)\\
&\stackrel{\textnormal{Prop. \ref{prop: monotonicity in thm}.\ref{item:prop alphas alpha pi}}}{\geq}\Pr(E^\pi_{r})\alpha^\pi_{r-1}-\Pr(E^\rho_{r}) \alpha^\pi_{r-1} +g(r-1)\\
&= f(r)+g(r-1).
\end{align*}}%
Ultimately, by setting $r=0$ in Inequality \refeq{eq:thm last argument},
\begin{align*}
W(\pi,U)-W(\rho,U)&\geq f(1)+g(0)\\
&=\left( \Pr(E^\pi_{1})-\Pr(E^\rho_{1}) \right)\alpha^\pi_{0}\\
&\qquad \qquad +\left(\Pr(E^\pi_{0})-\Pr(E^\pi_{1})\right) \alpha^\pi_0-\left(\Pr(E^\rho_{0})-\Pr(E^\rho_{1})\right) \alpha^\rho_0\\
&\geq\Pr(E^\pi_{0}) \alpha^\pi_0-\Pr(E^\rho_{0}) \alpha^\pi_0\\
&=0.
\end{align*}
This concludes the first step of the theorem.
\paragraph{Step 2}  In this step, we show that for every $k$, $1\leq k < K_1$ it holds that $W^*_{{i_k},{j_1}}(s) = W^*_{{i_{k+1}},{j_1}}(s)$. Define an ordered, $\mP$-valid policy $\pi^*$ by $\sigl_\pi = (a_{i_1},a_{i_2},\dots,a_{i_{K_1}})$,  namely, $\sigl_{\pi^*}$ ranks the elements of $\ug$ according to the stochastic order, and $\sigr_{\pi^*}=(a_{j_1},a_{j_2},\dots,a_{j_{K_2}})$. Due to the inductive assumption, for every state $s$ with $\abs{s}<\abs{U}$, $W^*(s) = W(\pi^*,s)$. Next, we define the policies $\pi,\rho$ explicitly, as follows:
\[
\pi(s) = 
\begin{cases}
\bl p_{{i_k},{j_1}} & \textnormal{if $s=(\ug,\ul)$} \\
\pi^*(s)& \textnormal{otherwise} 
\end{cases},\quad 
\rho(s) = 
\begin{cases}
\bl p_{{i_{k+1}},{j_{1}}} & \textnormal{if $s=(\ug,\ul)$} \\
\pi^*(s)& \textnormal{otherwise}
\end{cases}.
\]
As in the previous step, the inductive step suggests that showing $W(\pi,s)=W(\rho,s)$ is suffice. However, unlike the previous step, here the set of reachable terminal state is the same for $\pi$ and $\rho$; hence, this equality is almost immediate due to the Equivalence lemma. Let $\Psi'\defeq \prefix(a_{j_1},a_{j_2}\dots, a_{j_{K_2-1}})$ be the set of (possibly empty) prefixes of the arms in $\ul \setminus \{a_{j_{K_2}}\}$ according to $\pi^*$. Observe that we can factorize $W(\pi,s)$ as follows:
\begin{align}\label{eq:w for pi with prob}
W(\pi,s) &= Q(\pi,U)\cdot R(\emptyset)+ \sum_{\psi \in \Psi}\Pr(\pathto{s}{(\ul \setminus \psi)})R(\ul \setminus \psi).
\end{align}
The next Claim \ref{claim:thm:step 2 claim} suggests we can replace probabilities with functions of $Q$.
\begin{claim}\label{claim:thm:step 2 claim}
For every $\psi \in \Psi$, it holds that 
\[
\Pr(\pathto{s}{(\ul \setminus \psi)})=Q(\pi,\ug, \psi)-Q(\pi,\ug,\psi \cup\{a_{j_{max(\psi)+1}} \}).
\]
\end{claim}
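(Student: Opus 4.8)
The plan is to express the left-hand side --- the probability that $\pi$ terminates with below-arm set $\ul\setminus\psi$ --- as a telescoping difference of ``explored-everything'' probabilities, and then invoke the Equivalence lemma (Lemma~\ref{lemma:equivalence}) to rewrite those differences in terms of the $Q$-values appearing in the statement. The structural fact that makes this possible is that $\pi$ explores the arms of $\ul$ in a single fixed order $a_{j_1},a_{j_2},\dots$: the policy $\pi^*$ is right-ordered with $\sigr_{\pi^*}=(a_{j_1},\dots,a_{j_{K_2}})$, and the one state where $\pi$ deviates from $\pi^*$, namely $U$, is assigned $\bl p_{i_k,j_1}$, which also targets $a_{j_1}$.

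First I would fix the event algebra of the run of $\pi$ started at $U$. For $l=1,\dots,K_2$ let $B_l$ be the event that the below arm $a_{j_l}$ is eventually explored, and let $B_0$ be the whole sample space. Since one cannot explore $a_{j_{l+1}}$ before $a_{j_l}$ (right-orderedness), we obtain a nested chain $B_0\supseteq B_1\supseteq\dots\supseteq B_{K_2}$. Since every step of the GMDP deletes one arm the run always terminates, and a terminal state has an empty $\above$-part (Subsection~\ref{subsec:aux GMDP}); hence every terminated run has explored all of $\ug$, so the terminal state is determined by the set of explored below arms, which is always a prefix, and equals $\ul\setminus\psi$ exactly when that prefix is $\psi$. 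Writing $m=max(\psi)$, this gives $\{\pathto{U}{(\ul\setminus\psi)}\}=B_m\setminus B_{m+1}$, and hence, since $B_{m+1}\subseteq B_m$,
\[
\Pr\bigl(\pathto{U}{(\ul\setminus\psi)}\bigr)=\Pr(B_m)-\Pr(B_{m+1}),
\]
where when $m=K_2-1$ the event $B_{m+1}$ is precisely the reach-$\emptyset$ event, consistent with $Q(\pi,U)=Q(\pi,\ug,\ul)$.

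The substantive step is the identity $\Pr(B_m)=Q(\pi,\ug,\psi)$ for the prefix $\psi=\{a_{j_1},\dots,a_{j_m}\}$, which I would prove by a coupling. Let $\rho'$ be the (clearly $\mP$-valid) policy that plays $\bl p_{i_k,j_1}$ at the state $(\ug,\psi)$ and agrees with $\pi^*$ at every other state; this is well defined because $\psi\subsetneq\ul$, so $(\ug,\psi)\neq U$. Run $\pi$ from $U$ and $\rho'$ from $(\ug,\psi)$ on a common sequence of coin flips. Until the first moment at which all of $a_{j_1},\dots,a_{j_m}$ have been explored, the two trajectories take identical portfolios --- the same $\bl p_{i,j_l}$ on the same depleting reservoir of $\above$-arms --- because before that moment the $U$-run never touches $a_{j_{m+1}},\dots,a_{j_{K_2}}$ and the $(\ug,\psi)$-run has no other below arms. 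At that moment either both runs have exhausted $\ug$ first (so $B_m$ fails and the $\rho'$-run has not reached $\emptyset$), or both runs have emptied their $\ul$-part (so $B_m$ holds, and the $\rho'$-run then explores its remaining $\above$-arms deterministically via the $\bl p_{i,i}$ actions and reaches $\emptyset$). Hence $\Pr(B_m)=Q(\rho',\ug,\psi)$, and Lemma~\ref{lemma:equivalence} gives $Q(\rho',\ug,\psi)=Q(\pi,\ug,\psi)$. Applying the same identity with $m+1$ in place of $m$, noting $\psi\cup\{a_{j_{m+1}}\}=\psi\cup\{a_{j_{max(\psi)+1}}\}$, and substituting into the display proves the claim.

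I expect the coupling bookkeeping to be the only real obstacle: $\pi$ spends the above arm $a_{i_k}$ at its very first move, which is out of step with $\pi^*$'s left-order $(a_{i_1},\dots)$, so $\pi$ is not itself an ordered policy and $\Pr(B_m)$ is not literally a product of Bernoulli successes in a fixed order. The resolution is to not demand an order at all: route the mismatch through the auxiliary $\mP$-valid policy $\rho'$ and let Lemma~\ref{lemma:equivalence}, whose whole point is that $Q$ is indifferent to which $\above$-arm is consumed first, absorb it. Beyond that, only two trivial edge cases need a remark: $\psi=\emptyset$ (then $m=0$, $B_0$ is the whole sample space, and $Q(\pi,\ug,\emptyset)=1$) and $m=K_2-1$ (then $B_{m+1}$ is the reach-$\emptyset$ event and $Q(\pi,\ug,\psi\cup\{a_{j_{m+1}}\})=Q(\pi,U)$); both are immediately consistent with the general identity.
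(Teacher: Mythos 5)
Your proof is correct and follows essentially the route the paper intends: the paper omits this proof entirely, deferring to the tree-factorization arguments of Claim~\ref{claim:f are equal} and Proposition~\ref{prop:coef c}, which likewise identify the termination probability as a difference of two ``explore this whole prefix'' probabilities evaluated at off-path starting states and then invoke the Equivalence lemma. Your nested-events-plus-coupling packaging is a more explicit rendering of the same idea, and the detour through the auxiliary policy $\rho'$ correctly handles the one detail the paper's cross-reference glosses over, namely that $\pi$ spends $a_{i_k}$ at $U$ out of step with $\sigl_{\pi^*}$ and so is not itself an ordered policy.
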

By applying the Equivalence lemma on the statement of  Claim \ref{claim:thm:step 2 claim}, we obtain that for every $\psi \in \Psi$
\[
\Pr(\pathto{s}{(\ul \setminus \psi)})=\Pr(\pathtorho{s}{(\ul \setminus \psi)});
\]
hence, we can rewrite Equation \refeq{eq:w for pi with prob} as 
\begin{align*}
W(\pi,s) &= Q(\rho,U)\cdot R(\emptyset)+ \sum_{\psi \in \Psi}\Pr(\pathtorho{s}{(\ul \setminus \psi)})R(\ul \setminus \psi)\\
&=W(\rho,s).
\end{align*}
This concludes the second step of the theorem.
\paragraph{Step 3 (final)} We are ready to prove the theorem. Fix arbitrary $a_{\tilde i}$ and $a_{\tilde j}$ such that $a_{\tilde i} \in \ug$ and  $a_{\tilde j} \in \ul$. By the previous steps, we know that
\[
W^*_{{i_1},{j_1}}(U)\stackrel{\textnormal{Step 2}}{=}W^*_{{\tilde i},{j_1}}(U)\stackrel{\textnormal{Step 1}}{\geq}W^*_{{\tilde i},{\tilde j}}(U).
\]
This ends the proof of Theorem \ref{thm:holy grail}.
\end{proofof}

\section{Statements for Theorem \ref{thm:holy grail}}\label{sec:for theorem}
\begin{proposition}\label{prop:W case of one}
Let $U$ be a state such that $\abs{\above(U)}\geq 2$ and $\abs{\below(U)} =1$. 
It holds that $W(\pi^*,U)=W^*(U)$.
\end{proposition}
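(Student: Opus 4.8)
\begin{proofof}{Proposition \ref{prop:W case of one}}
The plan is to prove the stronger fact that, when $\abs{\below(U)}=1$, \emph{every} $\mP$-valid policy attains the same value at $U$; combined with Proposition~\ref{prop:optimal p valid} (which supplies a $\mP$-valid policy attaining $W^*$), this gives $W(\pi^*,U)=W^*(U)$ for free. As in the base cases of Lemma~\ref{lemma:equivalence}, Assumption~\ref{assumption:dominance} will not be needed here.

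Write $\below(U)=\{a_j\}$ and $\above(U)=\{a_{i_1},\dots,a_{i_k}\}$ with $k\ge 2$. The first step is to identify the terminal states reachable from $U$ under an arbitrary $\mP$-valid policy $\pi$. As long as $a_j$ is unexplored, the running state $s$ satisfies $\below(s)=\{a_j\}$, so $\pi$ must play some $\bl p_{i',j}$ with $a_{i'}\in\above(s)$; the first time $a_j$ is realized, the state has $\below(s)=\emptyset$ and $\above(s)\neq\emptyset$, so $\pi$ plays the deterministic portfolios $\bl p_{i',i'}$ and deterministically exhausts the remaining arms of $\above$, ending at $\emptyset$. Hence, starting from $U$, one reaches the terminal state $\emptyset$ exactly when $a_j$ is eventually realized, and otherwise one realizes all of $\above(U)$ first and stops at the terminal state $\{a_j\}$. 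By the definition of $Q$ this gives $\Pr(\pathto{U}{\emptyset})=Q(\pi,U)$ and $\Pr(\pathto{U}{\{a_j\}})=1-Q(\pi,U)$.

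Plugging this into representation~\refeq{eq:w with terminal} yields $W(\pi,U)=Q(\pi,U)\cdot R(\emptyset)+(1-Q(\pi,U))\cdot R(\{a_j\})$, where $R(\emptyset)$ and $R(\{a_j\})$ are functions of the reward realizations alone (the GMDP trajectory depends only on Nature's coins), hence independent of $\pi$. Invoking the Equivalence Lemma~\ref{lemma:equivalence} (for this regime it is precisely Proposition~\ref{prop:case of one}), $Q(\pi,U)=Q(\rho,U)$ for any two $\mP$-valid policies $\pi,\rho$, so $W(\cdot,U)$ is constant over all $\mP$-valid policies. Taking the $\mP$-valid policy $\rho^*$ from Proposition~\ref{prop:optimal p valid}, which is optimal at every state so that $W(\rho^*,U)=W^*(U)$, we conclude $W(\pi^*,U)=W(\rho^*,U)=W^*(U)$.

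The only point requiring genuine care is the reachability claim in the second paragraph (that $\emptyset$ and $\{a_j\}$ are the \emph{only} reachable terminal states, with the stated probabilities); the rest is bookkeeping. I do not expect a real obstacle here: the difficulty in Theorem~\ref{thm:holy grail} is concentrated in the inductive step and in the companion base case $\abs{\above(U)}=1,\ \abs{\below(U)}\ge 2$ (Proposition~\ref{prop:W case of one strong}), where the stochastic order on $\below$ must actually be exploited.
\end{proofof}
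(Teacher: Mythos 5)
Your proof is correct and follows essentially the same route as the paper's: both arguments reduce the claim to the policy-invariance of $Q$ (the Equivalence Lemma, here in the guise of Proposition~\ref{prop:case of one}) together with the observation that the only reachable terminal states are $\emptyset$ and $\{a_j\}$, whose rewards do not depend on the policy. The paper phrases the second ingredient by conditioning on the realizations of $\above(U)$ to write $W^*(U)=Q^*(U)\cdot\max\{0,X_{a_j}\}$, whereas you keep both terminal rewards explicit, but this is a presentational difference only.
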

\begin{proofof}{Proposition \ref{prop:W case of one}}
Denote $\below(U)=\{a_{j}\}$. We can assume w.l.o.g. that the realization of all arms in $\above(U)$ are non-positive, as otherwise every $\mP$-valid policy will explore all the arms; thus, $W^*(U)=Q^*(U)\cdot \max\{0, X_{a_{j}}  \}$. Finally, the Equivalence lemma suggests that   $Q^*(U)$ is policy invariant; hence, $W(\pi,U)=W^*(U)$ holds for any $\mP$-valid policy $\pi$, and in particular for $\pi=\pi^*$.
\end{proofof}

\begin{proposition}\label{prop:W case of one strong}
Let $U$ be a state such that $\abs{\above(U)}=1$ and $\abs{\below(U)} \geq 2$. It holds that $W(\pi^*,U)=W^*(U)$.
\end{proposition}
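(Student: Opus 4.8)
The plan is to reduce to comparing \emph{right-ordered} $\mP$-valid policies and then run an exchange argument on the order in which the arms of $\below(U)$ are attempted. Write $\above(U)=\{a_{i_1}\}$. Since $\above(U)$ is a singleton, in every non-terminal on-path state a $\mP$-valid policy must play $\bl p_{i_1,j}$ for some $a_j$ in the $\below$-part of that state (and, once $\below$ is exhausted, the deterministic $\bl p_{i_1,i_1}$, which leads to the terminal state $\emptyset$). Hence the on-path behavior started from $U$ is a chain: attempt $a_{\tau(1)}$, then $a_{\tau(2)}$ if the attempt succeeded, and so on, for some permutation $\tau$ of $\below(U)$; off-path choices do not affect $W(\pi,U)$, so for each $\tau$ the value $W(\pi_\tau,U)$ of any $\mP$-valid policy following $\tau$ on-path is well defined. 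By Proposition~\ref{prop:optimal p valid} it therefore suffices to show that the permutation $\tau^\star$ induced by $\pi^*$ (the arms of $\below(U)$ sorted by decreasing $\mu$, equivalently by the stochastic-dominance order of Assumption~\ref{assumption:dominance}) maximizes $W(\pi_\tau,U)$ over all permutations $\tau$.

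First I would fix the reward bookkeeping. Because every $\mP$-valid policy reaches a terminal state only after exploring all of $\above$, the arms of $(A\setminus U)\cup\{a_{i_1}\}$ are certainly explored from $U$; if any of them has positive realized reward then every reachable terminal state has reward $\max_{a\in A}X(a)$, so $W(\pi,U)$ is policy-independent and we are done. So assume all these arms have non-positive reward. Since transition probabilities depend only on the means, the realized path is independent of the realizations, and enumerating the terminal states reachable under $\pi_\tau$ (indexed by how many arms of $\below(U)$ were explored before the first failed attempt) gives
\[
W(\pi_\tau,U)=\E\!\left[\Big(\max_{a\in\below(U)}X(a)\Big)^{\!+}\cdot \prod_{l=1}^{m^{\star}(\tau)} p^{\tau}_l\right],\qquad p^{\tau}_l:=\bl p_{i_1,\tau(l)}(a_{\tau(l)})=\frac{\mu(a_{i_1})}{\mu(a_{i_1})-\mu(a_{\tau(l)})},
\]
where $m^{\star}(\tau):=\min\{l:X(a_{\tau(l)})>0\}$, the product contributes $0$ when no such $l$ exists, and the expectation is over the transitions and the realizations of the arms of $\below(U)$. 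Intuitively, the policy collects the maximal realized reward exactly when it manages to explore, without a failure, up to the first arm in its order that has positive reward.

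The core is an exchange step: if in a permutation $\tau$ positions $t$ and $t+1$ hold arms $a_u,a_v$ with $a_v$ stochastically dominating $a_u$ (equivalently $\abs{\mu(a_v)}\le\abs{\mu(a_u)}$, so $p_v\ge p_u$, where $p_x:=\mu(a_{i_1})/(\mu(a_{i_1})+\abs{\mu(a_x)})$), then the permutation $\tau'$ obtained by swapping them satisfies $W(\pi_{\tau'},U)\ge W(\pi_\tau,U)$. I would prove this by conditioning on all realizations except $X(a_u)$ and $X(a_v)$ and factoring out the common weight $\prod_{l<t}p^{\tau}_l$ of the shared prefix. On the event that some arm in positions $<t$ is positive, the two policies contribute identically. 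On the complement, a four-way case split on the signs of $X(a_u),X(a_v)$ shows that, writing $V$ for the maximum of $(X(a))^{+}$ over the arms in positions $>t+1$, the difference $\mathrm{contrib}_{\tau'}-\mathrm{contrib}_{\tau}$ equals $(p_v-p_u)\max(X(a_u),X(a_v),V)\ge0$ when both are positive; $0$ when both are non-positive; $p_v(1-p_u)\max(X(a_v),V)\ge0$ when $X(a_u)\le0<X(a_v)$; and $-\,p_u(1-p_v)\max(X(a_u),V)\le0$ in the single adverse case $X(a_v)\le0<X(a_u)$. Taking expectations over the \emph{independent} $X(a_u),X(a_v)$, the adverse term equals $p_u(1-p_v)\,\Pr(X(a_v)\le0)\,\E[\max(X(a_u),V)\ind_{\{X(a_u)>0\}}]$, which is dominated factor-by-factor by the $\{X(a_u)\le0<X(a_v)\}$ term $p_v(1-p_u)\,\Pr(X(a_u)\le0)\,\E[\max(X(a_v),V)\ind_{\{X(a_v)>0\}}]$: indeed $p_v(1-p_u)\ge p_u(1-p_v)$ since $p_v\ge p_u$; $\Pr(X(a_u)\le0)\ge\Pr(X(a_v)\le0)$ by stochastic dominance; and $\E[\max(X(a_v),V)\ind_{\{X(a_v)>0\}}]\ge\E[\max(X(a_u),V)\ind_{\{X(a_u)>0\}}]$ because $x\mapsto\max(x,V)\ind_{\{x>0\}}$ is non-decreasing and $X(a_v)$ stochastically dominates $X(a_u)$. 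Since the remaining cases are nonnegative, the overall difference is $\ge0$.

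Finally, $\tau^\star$ is the unique permutation (up to ties among arms with identical distributions, where the order is immaterial) with no such adjacent inversion, so bubble-sorting any $\tau$ into $\tau^\star$ by adjacent swaps weakly increases $W$ at each step; therefore $W(\pi^*,U)=\max_\tau W(\pi_\tau,U)=W^*(U)$. The main obstacle is the exchange step — specifically, recognizing that the unique adverse case is cancelled, factor by factor, by its mirror-image favorable case, which is exactly why one needs \emph{both} consequences of Assumption~\ref{assumption:dominance} ($\Pr(X(a_v)\le0)\le\Pr(X(a_u)\le0)$ and the monotone-function inequality) together with $p_v\ge p_u$; keeping the conditioning and the casework organized so that this comparison is visible is the other delicate point.
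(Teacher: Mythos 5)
Your proof is correct. It is worth comparing it to the paper's route: the paper disposes of this proposition in one line, as a special case of a more general Gittins-index-style result (Proposition~\ref{prop:index with ugeq one}), which shows---also by an adjacent-swap argument on the right-order---that when $\above(U)$ is a singleton the optimal order on $\below(U)$ is by decreasing $f^*(a_l)=\Pr(X_{a_l}>0)\,\E\left[\max_{a_j\in U}X_{a_j}\mid X_{a_l}>0\right]/\abs{\mu(a_l)}$, for arbitrary (not necessarily stochastically ordered) reward distributions. Both arguments therefore share the same skeleton: reduce to chain policies indexed by permutations of $\below(U)$ (handling the policy-independent event that a certainly-explored arm is positive separately), and bubble-sort to the target order via adjacent transpositions. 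The difference is where Assumption~\ref{assumption:dominance} enters. The paper's swap computation rearranges algebraically into the index inequality, and the reduction to the present proposition then tacitly requires verifying that under stochastic dominance the decreasing-$\mu$ order refines the decreasing-$f^*$ order; your swap computation specializes to the dominance assumption directly, and your three-factor comparison of the single adverse sign-case against its mirror image ($p_v(1-p_u)\geq p_u(1-p_v)$, $\Pr(X_{a_u}\leq 0)\geq\Pr(X_{a_v}\leq 0)$, and the monotone-function inequality for $x\mapsto\max(x,V)\ind_{\{x>0\}}$) is precisely the content of that implicit verification, made explicit. The paper's route buys a more general statement (which it also uses to conjecture the optimal policy absent Assumption~\ref{assumption:dominance}); yours is self-contained for the case at hand and isolates exactly which consequences of first-order dominance are needed. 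One cosmetic remark: your displayed formula for $W(\pi_\tau,U)$ is really the conditional value given that the certainly-explored arms are non-positive; since the complementary event contributes a policy-independent term, this does not affect the argument, but it should be stated.
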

\begin{proofof}{Proposition \ref{prop:W case of one strong}}
This statement is a special case of Proposition \ref{prop:index with ugeq one} for instances satisfying Assumption \ref{assumption:dominance}.
\end{proofof}
\begin{proposition}\label{prop:index with ugeq one}
Let $U$ be a state such that $\abs{\above(U)}=1$ and $\abs{\below(U)} \geq 2$. Let $f^*$ be a real-valued function, $f^*:\below(U)\rightarrow \mathbb R$, such that for every $a_l \in \below(U)$,
\[
f^*(a_{l})=\frac{\Pr(X_{a_l} > 0 )\E(\max_{a_j \in {U} } X_{a_j}\mid X_{a_l}>0 )}{\abs{\mu(a_l)}}.
\]  
Denote by $\pi_{f^*}$ the right-ordered policy that orders $\below(U)$ according to decreasing order of $f^*$. Then, $W(\pi_{f^*},U)=W^*(U)$. 
\end{proposition}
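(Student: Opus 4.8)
The plan is to reduce to right‑ordered policies, obtain a closed form for their reward, run an adjacent‑swap (exchange) argument, and close using Assumption~\ref{assumption:dominance} via a monotonicity lemma. Write $\above(U)=\{a_i\}$ and $\below(U)=\{a_{j_1},\dots,a_{j_m}\}$, $m\ge 2$. Since $a_i$ is the unique positive‑mean arm, any $\mathcal{P}$‑valid policy started at $U$ keeps $a_i$ available until $a_i$ is realized, at which point the state becomes terminal; hence the reachable on‑path states form a chain and every stationary $\mathcal{P}$‑valid policy is on‑path equivalent to a right‑ordered one, encoded by a permutation $\sigma=(a_{\ell_1},\dots,a_{\ell_m})$ of $\below(U)$, so $W^*(U)=\max_\sigma W(\pi_\sigma,U)$. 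I first dispatch the easy case: if $X(a_i)>0$ then every $\mathcal{P}$‑valid policy realizes a positive reward with probability one and therefore attains $W^*(U)$; from here on I condition on $X(a_i)\le 0$, and (as usual) assume no arm outside $U$ realizes a positive value, since otherwise $W^*(U)=\max_a X(a)$ trivially.

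Next I would derive a closed form. For $a_{\ell_s}\in\below(U)$ let $q_{\ell_s}=\bl p_{i,\ell_s}(a_{\ell_s})=\mu(a_i)/(\mu(a_i)+|\mu(a_{\ell_s})|)$ and $\bar q_{\ell_s}=1-q_{\ell_s}=\bl p_{i,\ell_s}(a_i)$; let $B^\sigma_t=\{a_i,a_{\ell_1},\dots,a_{\ell_t}\}$ be the explored set when $a_i$ is realized right after $a_{\ell_1},\dots,a_{\ell_t}$ have been explored; and define the monotone set function $\Theta(B):=\E\big[\max_{a\in A}X(a)\cdot\ind[\exists a'\in B:\,X(a')>0]\big]$. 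Since the terminal reward of the GMDP equals $\max_A X$ exactly when some explored arm is positive (and $0$ otherwise), and since the reach‑probabilities of terminal states are products of $q$'s and a single $\bar q$ (precisely the order‑independence of Lemma~\ref{lemma:equivalence}, here visible from associativity of a product), one gets
\[
W(\pi_\sigma,U)=\sum_{t=0}^{m-1}\Big(\prod_{s=1}^{t}q_{\ell_s}\Big)\bar q_{\ell_{t+1}}\,\Theta(B^\sigma_t)\;+\;\Big(\prod_{s=1}^{m}q_{\ell_s}\Big)\,\Theta(U).
\]
Let $\sigma'$ swap $a:=a_{\ell_t}$ and $b:=a_{\ell_{t+1}}$; all summands agree except those indexed $t-1,t$, and writing $\Theta_0=\Theta(B_{t-1}),\,\Theta_a=\Theta(B_{t-1}\cup\{a\}),\,\Theta_b=\Theta(B_{t-1}\cup\{b\})$, $\rho_0=\prod_{s<t}q_{\ell_s}$, and using $\bar q_a-\bar q_b=q_b\bar q_a-q_a\bar q_b$ together with $q_a\bar q_b=\kappa|\mu(b)|$, $q_b\bar q_a=\kappa|\mu(a)|$ for $\kappa=\mu(a_i)/[(\mu(a_i)+|\mu(a)|)(\mu(a_i)+|\mu(b)|)]>0$, a short calculation gives
\[
W(\pi_\sigma,U)-W(\pi_{\sigma'},U)=\rho_0\,\kappa\big[\,|\mu(b)|(\Theta_a-\Theta_0)-|\mu(a)|(\Theta_b-\Theta_0)\,\big],
\]
so $a$ before $b$ is weakly better iff $\frac{\Theta_a-\Theta_0}{|\mu(a)|}\ge\frac{\Theta_b-\Theta_0}{|\mu(b)|}$. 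Expanding, $\Theta(B\cup\{x\})-\Theta(B)=\Pr(\max_{a'\in B}X(a')\le0)\cdot\E[\ind[X(x)>0]\,M]$ with the \emph{same} $M=\max\{X(a),X(b),\max_{y\in\below(U)\setminus B,\ y\ne a,b}X(y)\}$ for both $x=a$ and $x=b$ (on the event defining the difference the members of $B$, including $a_i$, are $\le0$, so $\max_U X$ collapses to this $M$); thus the comparison is between $\E[\ind[X(a)>0]M]/|\mu(a)|$ and $\E[\ind[X(b)>0]M]/|\mu(b)|$.

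Finally, under Assumption~\ref{assumption:dominance} the below arms are totally ordered by first‑order stochastic dominance, and I would close with the \textbf{monotonicity lemma}: if $X\succeq Y$ are independent with $\E[X],\E[Y]<0$ and $Z$ is independent of $(X,Y)$, then $\frac{\E[\ind[X>0]\max\{X,Y,Z\}]}{|\E[X]|}\ge\frac{\E[\ind[Y>0]\max\{X,Y,Z\}]}{|\E[Y]|}$. Applied with $Z$ the relevant suffix‑maximum of $\below(U)$ this shows each local index $\frac{\Theta(B\cup\{x\})-\Theta(B)}{|\mu(x)|}$ is, for every admissible $B$, non‑decreasing along the stochastic order; applied with $Z=\max_{a\in U\setminus\{x,y\}}X(a)$ it shows $f^*$ is non‑decreasing along the stochastic order (indeed $\frac{f^*(x)|\mu(x)|}{1}=\E[\ind[X(x)>0]\max_U X]$ uses the same $M$ as $f^*(y)|\mu(y)|$). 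Hence $f^*$, all the local indices, and the stochastic order induce the same total preorder on $\below(U)$; so from any optimal $\sigma$, repeatedly swapping adjacent pairs that are out of $f^*$‑order (bubble sort) never decreases $W(\pi_\sigma,U)$ and terminates at $\pi_{f^*}$, giving $W(\pi_{f^*},U)=W^*(U)$.

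I expect the monotonicity lemma to be the main obstacle. Although both $\E[\ind[X>0]\max\{X,Y,Z\}]$ and $|\E[X]|$ vary monotonically along the stochastic order, in the cross‑comparison the ``context'' seen by $Y$ (the max over $\{X,Z\}$) differs from that seen by $X$ (the max over $\{Y,Z\}$)—the other arm enters the max—so a naive monotone coupling does not finish. The plan is to use the representation $\E[\ind[X>0]\max\{u,X\}]=u\Pr(X>0)+\E[(X-u)^+]$ for $u\ge0$ (and $\E[X^+]$ for $u<0$), which is non‑decreasing in $u$ and increasing in $X$ under stochastic dominance, combine it with the identity $|\E[X]|=\E[X^-]-\E[X^+]$ and a comonotone coupling of $X\succeq Y$, and carry out the estimate first in the two‑point case (below arms supported on $\{x^-,x^+\}$), where on $\{X>0\}$ the max collapses to the constant $x^+$ and the inequality reduces to monotonicity of $\Pr(X>0)x^+/|\mu(X)|$ in $\Pr(X>0)$, then bootstrap to the general distribution.
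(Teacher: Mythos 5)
Your reduction to right-ordered policies, the product closed form for $W(\pi_\sigma,U)$, and the adjacent-swap computation are all sound and match the paper's proof: the paper likewise compares a policy $\pi$ that explores $a_l$ immediately before $a_r$ with the swapped policy $\pi'$, cancels the common factors $C_1,C_2$, and reduces the comparison to $\E[\ind_{l}M]/\abs{\mu(a_l)}\geq\E[\ind_{r}M]/\abs{\mu(a_r)}$ with the same $M$ on both sides --- exactly your $\abs{\mu(b)}(\Theta_a-\Theta_0)\geq\abs{\mu(a)}(\Theta_b-\Theta_0)$. Where you diverge is the closing step. The paper stops there: it reads the resulting ratio inequality directly as the comparison of the indices $f^*(a_l)$ and $f^*(a_r)$ and derives a contradiction with $f^*(a_l)<f^*(a_r)$, with no appeal to Assumption~\ref{assumption:dominance} whatsoever. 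This matters because the proposition is stated for arbitrary distributions and is invoked in the conclusion precisely as the candidate optimal policy when Assumption~\ref{assumption:dominance} is dropped (Proposition~\ref{prop:W case of one strong} is then its special case under the assumption). By routing the conclusion through stochastic dominance you prove at best that weaker special case, not the proposition as stated.

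The more serious issue is that even the special case is not closed: your entire final step rests on the ``monotonicity lemma'' ($X\succeq Y$ with negative means and independent $Z$ implies $\E[\ind_{X>0}\max\{X,Y,Z\}]/\abs{\E[X]}\geq\E[\ind_{Y>0}\max\{X,Y,Z\}]/\abs{\E[Y]}$), which you explicitly defer --- you give a plan (comonotone coupling, two-point case, ``bootstrap to the general distribution'') rather than a proof, and the bootstrap step is not an argument. Without that lemma the chain of preorder identifications (stochastic order $\Rightarrow$ every local index $\Rightarrow$ $f^*$) is unestablished and the bubble sort cannot be run. To be fair, the concern that motivated your detour is real: the context set in the local index arising from the swap ($U_<\setminus\ul'$, which shrinks as the order progresses) is not literally the set $U$ appearing in the definition of $f^*$, and the paper's own proof passes over this identification silently. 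But the cure you propose is harder than the disease: the paper's route needs only the single ratio inequality already delivered by the swap, whereas yours requires a genuinely nontrivial cross-comparison inequality under first-order stochastic dominance that you have not supplied. As submitted, the proof is incomplete.
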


\begin{proofof}{Proposition \ref{prop:index with ugeq one}}
Denote $\above(U)=\ug=\{a_{i}\}$  and $\below(U)=\ul=\{a_{j_1},\dots a_{j_k}\}$ for $k=\abs{\ul}$. Let $\pi$ be any right-ordered policy with the matching $\sigr_\pi$, such that $\pi\neq \pi_{f^*}$. Assume that there are indices $r,l$, for $1\leq r,l \leq k$, such that $\sigr_\pi(a_{l})<\sigr_\pi(a_{r})$ yet $f(a_{l})<f(a_{{r}})$, for arms $a_l,a_r\in \ul$. Moreover, if such a pair $(l,r)$ exists,  assume w.l.o.g. that $\sigr_\pi$ orders them consequentially, i.e., for every arm $a\in U$ such that $a\notin \{ a_{l}, a_{r} \}$,  either $\sigr_\pi(a)< \sigr_\pi(a_{l})$ or $\sigr_\pi(a)> \sigr_\pi(a_{r})$.

Denote by $\pi'$ the right-ordered policy that swaps $a_{l}$ and $a_{r}$. If we show that $\pi'$ yields a better reward than $\pi$, we could swap the order of $\pi$ one pair at a time, thereby showing that $\pi_{f^*}$ is indeed optimal. To simplify notation, we denote by $\ind_{a}$ the event that $X_a>0$ for arm $a\in U$.

Let $\ul'$ be the set of all arms in $\ul$ such that $\ul'=\{a\in \ul\mid \sigr_\pi(a)< \sigr_\pi(a_{l})  \}$. We divide the analysis into two cases: in case $X_{a_{i}} >0$ or $\max_{a\in \ul'} X_a > 0$, both $\pi,\pi'$ obtain the same reward. Otherwise, assume that $X_{a_{i}} \leq 0$ and $\max_{a\in \ul'} X_a \leq 0$ ; hence, in case arm $a_{i}$ is selected, the policy reaches a terminal state with a reward of 0. The reward of $\pi$ is given by 
{\small
\thinmuskip=.2mu
\medmuskip=0mu plus .2mu minus .2mu
\thickmuskip=1mu plus 1mu
\begin{align*}
W(\pi,U)=C_1 \left( \bl p_{i,l}(l)\E(\ind_l \max_{a_j \in {U_< \setminus \ul'}} X_{a_j} )+\bl p_{i,l}(l)\bl p_{i,r}(r)\E((1-\ind_l)\ind_r \max_{a_j \in {U_< \setminus \ul'}} X_{a_j})+\bl p_{i,l}(l)\bl p_{i,r}(r)C_2 \right),
\end{align*}}%
where $C_1$ and $C_2$ are constants that depend on $\sigr_\pi$. Similarly, the reward of $\pi'$ is given by 
{\small
\thinmuskip=.2mu
\medmuskip=0mu plus .2mu minus .2mu
\thickmuskip=1mu plus 1mu
\begin{align*}
W(\pi',U) = C_1\left( \bl p_{i,r}(r)\E(\ind_r \max_{a_j \in {U_< \setminus \ul'}} X_{a_j} )+\bl p_{i,r}(r)\bl p_{i,l}(l)\E((1-\ind_r)\ind_l \max_{a_j \in {U_< \setminus \ul'}} X_{a_j})+\bl p_{i,r}(r)\bl p_{i,l}(l)C_2 \right),
\end{align*}}%
where $C_1$ and $C_2$ are the same constants.
If $W(\pi,U)\geq W(\pi',U)$, then
\begin{align*}
& \bl p_{i,l}(l)\E(\ind_l \max_{a_j \in {U_< \setminus \ul'}} X_{a_j} )+\bl p_{i,l}(l)\bl p_{i,r}(r)\E((1-\ind_l)\ind_r \max_{a_j \in {U_< \setminus \ul'}} X_{a_j}) \\
&\qquad \geq \bl p_{i,r}(r)\E(\ind_r \max_{a_j \in {U_< \setminus \ul'}} X_{a_j} )+\bl p_{i,r}(r)\bl p_{i,l}(l)\E((1-\ind_r)\ind_l \max_{a_j \in {U_< \setminus \ul'}} X_{a_j}),
\end{align*}
implying that
\begin{align*}
&\bl p_{i,l}(l)\E(\ind_l \max_{a_j \in  {U_< \setminus \ul'}} X_{a_j} )(1-\bl p_{i,r}(r)) \geq \bl p_{i,r}(r)\E(\ind_r \max_{a_j \in  {U_< \setminus \ul'}} X_{a_j} )(1-\bl p_{i,l}(l)).
\end{align*}
Stated otherwise,
\begin{align*}
\frac{\bl p_{i,l}(l)\E(\ind_l \max_{a_j \in {U_< \setminus \ul'}} X_{a_j} )}{ 1-\bl p_{i,l}(l)} > \frac{\bl p_{i,r}(r)\E(\ind_r \max_{a_j \in {U_< \setminus \ul'}} X_{a_j} )}{1-\bl p_{i,r}(r) }.
\end{align*}
Finally, due to the definitions of $\bl p_{i,r},\ind_l$ and $\bl p_{i,l}, \ind_r$,
\begin{align*}
\frac{\Pr(X_{a_l} > 0 )\E(\max_{a_j \in {U_< \setminus \ul'} } X_{a_j}\mid X_{a_l}>0 )}{\abs{\mu(a_l)}} \geq \frac{\Pr(X_{a_l} > 0 )\E( \max_{a_j \in {U_< \setminus \ul'}} X_{a_j} \mid X_{a_r} >0 )}{\abs{\mu(a_r)}},
\end{align*}
which contradicts our assumption that $f(a_{l})<f(a_{r})$.
\end{proofof}

\begin{proofof}{Claim \ref{claim:f are equal}}
Notice that we can represent $f^\pi_Z$ as 
{\thinmuskip=.2mu
\medmuskip=0mu plus .2mu minus .2mu
\thickmuskip=1mu plus 1mu
\[
Q(\pi,\ug\setminus Z \cup \{a_{i(Z)}\}, \ul\setminus \{a_{j_1},a_{j_2},\dots,a_{j_{k}},a_{j_{k+1}} \})-Q(\pi,\ug\setminus Z, \ul\setminus \{a_{j_1},a_{j_2},\dots,a_{j_{k}},a_{j_{k+1}} \}),
\]}%
where $a_{i(Z)}$ is the minimal element in $Z$ according to $\sigl_\pi$. This process is similar in spirit to the proof of Proposition \ref{prop:coef c} and is hence omitted. Then, we can mirror the same arguments for $f^\rho_Z$. Finally, the Equivalence lemma suggests that the two representations are equal.
\end{proofof}

\begin{proofof}{Proposition \ref{prop: monotonicity in thm}}
We address the two parts separately below.
\paragraph{Part \ref{item:prop alphas rho pi}} 
Notice that the terminal state $(\ul\setminus \psi)$ is reachable from the left sub-tree of $\pi$ and $\rho$ solely (see Figure \ref{fig:helping for thm two} for illustration). Due to the construction of $\pi$ and $\rho$,
\begin{align}\label{eq:pi to rho}
\Pr(\pathto{s}{(\ul\setminus \psi)})&= \bl p_{{i_1},{j_k}}(a_{i_1})\Pr(\pathto{s\setminus \{a_{i_1}\}}{(\ul\setminus \psi)})\nonumber\\
&=\frac{-\mu(a_{j_{k}})}{- \mu(a_{j_{k}})+ \mu(a_{i_{1}})} \Pr(\pathto{s\setminus \{a_{i_1}\}}{(\ul\setminus \psi)})\nonumber\\
&=\frac{-\mu(a_{j_{k}})}{- \mu(a_{j_{k}})+ \mu(a_{i_{1}})} \Pr(\pathtorho{s\setminus \{a_{i_1}\}}{(\ul\setminus \psi)})\nonumber\\
&\leq\frac{-\mu(a_{j_{k+1}})}{- \mu(a_{j_{k+1}})+ \mu(a_{i_{1}})} \Pr(\pathtorho{s\setminus \{a_{i_1}\}}{(\ul\setminus \psi)})\nonumber\\
&\leq \bl p_{{i_1},{j_{k+1}}}(a_{i_1})\Pr(\pathtorho{s\setminus \{a_{i_1}\}}{(\ul\setminus \psi)})\nonumber \\
& = \Pr(\pathtorho{s}{(\ul\setminus \psi)}),
\end{align}
since $\mu(a_{j_{k+1}})\geq \mu(a_{j_{k}})$, and due to monotonicity of $f(x)=\frac{x}{x+c}$ for positive $c$. Using similar arguments, 
\begin{align}\label{eq: pi with rho again}
\Pr(\pathto{s}{(\ul\setminus (\psi \cup \{a_{j_k}\})}) \geq \Pr(\pathtorho{s}{(\ul\setminus (\psi \cup \{a_{j_{k+1}}\})}).
\end{align}
Now, observe that
\begin{align}\label{eq: with Ez}
\Pr(\pathto{s}{(\ul\setminus \psi)}\mid E^\pi_\psi)
&=\frac{\Pr(\pathto{s}{(\ul\setminus \psi)})}{\Pr(\pathto{s}{(\ul\setminus \psi)}) +\Pr(\pathto{s}{(\ul\setminus (\psi \cup \{a_{j_k}\})})} \nonumber\\
&\stackrel{\textnormal{Eq. }(\ref{eq:pi to rho}),(\ref{eq: pi with rho again})}{\leq} \frac{\Pr(\pathtorho{s}{(\ul\setminus \psi)})}{\Pr(\pathtorho{s}{(\ul\setminus \psi)}) +\Pr(\pathtorho{s}{(\ul\setminus (\psi \cup \{a_{j_{k+1}}\})})} \nonumber\\
&=\Pr(\pathtorho{s}{(\ul\setminus \psi)}\mid E^\rho_\psi),
\end{align}
where the second to last step follows again from monotonicity of $f(x)=\frac{x}{x+c}$ for positive $c$. Further, due to monotonicity of the reward function $R$,
\[
R(\ul\setminus (\psi\cup \{a_{j_k}\})) \geq R(\ul\setminus \psi), \qquad R(\ul\setminus (\psi\cup \{a_{j_{k+1}}\})) \geq R(\ul\setminus \psi).
\]
In addition, due to Assumption \ref{assumption:dominance}, $R(\ul\setminus (\psi\cup \{a_{j_k}\})) \geq  R(\ul\setminus (\psi\cup \{a_{j_{k+1}}\}))$. Wrapping up,
{\thinmuskip=.2mu
\medmuskip=0mu plus .2mu minus .2mu
\thickmuskip=1mu plus 1mu
\begin{align*}
\alpha^\pi_\psi&=\Pr(\pathto{s}{(\ul\setminus \psi)}\mid E^\pi_\psi)R(\ul\setminus \psi)+(1-\Pr(\pathto{s}{(\ul\setminus \psi)}\mid E^\pi_\psi))R(\ul\setminus (\psi \cup \{a_{j_k}\}))\\
&\geq \Pr(\pathto{s}{(\ul\setminus \psi)}\mid E^\pi_\psi)R(\ul\setminus \psi)+(1-\Pr(\pathto{s}{(\ul\setminus \psi)}\mid E^\pi_\psi))R(\ul\setminus (\psi \cup \{a_{j_{k+1}}\}))\\
&\stackrel{\textnormal{Eq. }(\ref{eq: with Ez})}{\geq} \Pr(\pathtorho{s}{(\ul\setminus \psi)}\mid E^\rho_\psi)R(\ul\setminus \psi)+(1-\Pr(\pathtorho{s}{(\ul\setminus \psi)}\mid E^\rho_\psi))R(\ul\setminus (\psi \cup \{a_{j_{k+1}}\}))\\
&=\alpha^\rho_\psi.
\end{align*}}%
This completes the proof of the first part.

\paragraph{Part \ref{item:prop alphas alpha pi}} We prove the claim for $\alpha^\rho_{l+1} \geq \alpha^\rho_{l}$, and the other part is symmetrical. Let $\psi=\psi(l)$ such that $\max(\psi)=l$. Notice that the reward function $R$ is a set function, and is, by definition monotonically decreasing; hence, $R(\ul\setminus \psi) \leq R(\ul\setminus (\psi \cup \{a_{j_{k+1}}\}))$. Consequently,
{\thinmuskip=.2mu
\medmuskip=0mu plus .2mu minus .2mu
\thickmuskip=1mu plus 1mu
\begin{align*}
\alpha^\rho_\psi&=\Pr(\pathtorho{s}{(\ul\setminus \psi)}\mid E^\rho_\psi)R(\ul\setminus \psi)+\Pr(\pathtorho{s}{(\ul\setminus (\psi \cup \{a_{j_{k+1}}\})}\mid E^\rho_\psi)R(\ul\setminus (\psi \cup \{a_{j_{k+1}}\}))\nonumber\\
&\leq R(\ul\setminus (\psi \cup \{a_{j_{k+1}}\})).
\end{align*}}%
Further, let $\psi'$ such that $\max(\psi')=l+1$, namely $\psi'=\psi\cup\{a_{j_{l+1}} \}$. Using the same properties of $R$, we have that $R(\ul\setminus \psi') \leq R(\ul\setminus (\psi' \cup \{a_{j_{k+1}}\}))$; thus,
{\thinmuskip=.2mu
\medmuskip=0mu plus .2mu minus .2mu
\thickmuskip=1mu plus 1mu
\begin{align*}
\alpha^\rho_{\psi'}&=\Pr(\pathtorho{s}{(\ul\setminus \psi')}\mid E^\rho_{\psi'})R(\ul\setminus \psi')+\Pr(\pathtorho{s}{(\ul\setminus (\psi' \cup \{a_{j_{k+1}}\})}\mid E^\rho_{\psi'})R(\ul\setminus (\psi' \cup \{a_{j_{k+1}}\}))\nonumber\\
&\geq R(\ul\setminus \psi' ).
\end{align*}}%
Next, let $V_\psi$ denote the event that $(X_{a})_{a\in \psi}$ attain value below $\alpha$. Observe that 
{\thinmuskip=.2mu
\medmuskip=0mu plus .2mu minus .2mu
\thickmuskip=1mu plus 1mu
\begin{align*}
R(\ul\setminus (\psi \cup \{a_{j_{k+1}}\})) &=\Pr(V_\psi)R(\ul\setminus (\psi \cup \{a_{j_{k+1}}\}))+(1-\Pr(V_\psi))R(\ul\setminus (\psi \cup \{a_{j_{k+1}}\}))\nonumber\\
&=\Pr(V_\psi)\max\{\alpha,X_{a_{j_{k+1}}}\}+(1-\Pr(V_\psi))\max\{a_{j_1},\dots ,a_{j_l},a_{j_{k+1}}  \}\nonumber\\
&\leq \Pr(V_\psi)\max\{\alpha,X_{a_{j_{l+1}}}\}+(1-\Pr(V_\psi))\max\{a_{j_1},\dots ,a_{j_l},a_{j_{l+1}}  \}\\
&=R(\ul\setminus \psi' ) ,
\end{align*}}%
where the second to last inequality is due to Assumption \ref{assumption:dominance} and independence of $(X_{a_i})_{i=1}^K$. Ultimately, 
\[
\alpha^\rho_{l+1}=\alpha^\rho_{\psi'} \geq R(\ul\setminus \psi' ) \geq R(\ul\setminus (\psi \cup \{a_{j_{k+1}}\})) \geq \alpha^\rho_{\psi}=\alpha^\rho_{l}.
\]
This completes the proof of the second part.
\end{proofof}

\begin{proofof}{Claim \ref{claim:thm:step 2 claim}}
The proof goes along the lines of Claim \ref{claim:f are equal}, and is hence omitted.
\end{proofof}

}

\section{Proofs of Observations and Propositions from Sections \ref{sec:infinite} and \ref{sec:finite}}\label{sec:appendix main body}
\begin{proofof}{Observation \ref{obs:eventually will explore}}
Let $x(a_i)>0$ for some $i\in [K]$, let $j$ be an index of unexplored arm, and let $\mI$ be the information of the algorithm. We overload the notation $\bl p_{i,j}$ to acknowledge the realized value $x(a_i)$; that is,
\begin{align*}
\bl p_{i,j}(a) =
\begin{cases}
\frac{-\mu(a_j)}{x(a_i)-\mu(a_j)} & \textnormal{if } a=a_i\\
\frac{x(a_i)}{x(a_i)-\mu(a_j)} & \textnormal{if } a=a_j\\
0 & \textnormal{otherwise}
\end{cases}.
\end{align*}
Notice that 
\begin{align*}
\sum_{a\in A}\bl p(a)\E\left[X(a)\mid \mI\right] &= \bl  p_{i,j}(a_i)x(a_i)  + \bl  p_{i,j}(a_j)\mu(a_j) \\
&= x(a_i)\cdot \frac{-\mu(a_j)}{x(a_i)-\mu(a_j)} + \mu(a_j)\cdot \frac{x(a_i)}{x(a_i)-\mu(a_j)} = 0;  
\end{align*}
hence, $\bl p_{i,j}$ is safe w.r.t. to $\mI$. After playing $\bl p_{i,j}$, either $a_i$ was realized or $a_j$. In the former, the information remains the same, and we can repeat this experiment again. The probability of $a_j$ realizing is positive and constant, and hence after finite time we will eventually realize it. Once we do, the number of unexplored armed decreases by one. We can follow this process until all arms are explored.
\end{proofof}

\begin{proofof}{Observation~\ref{obs: U leq W*}}
The proof of this observation relies on constructing a policy $\pi$ that simulates $\ALG$. Since by definition $W(\pi,A) \leq  W^*(A)$, it is enough to show that $\lim_{T \rightarrow \infty }\mU_T(\ALG) \leq W(\pi,A)$. In every round, $\pi$ plays precisely what $\ALG$ plays, and if the realized arm was already explored by $\ALG$, $\pi$ ignores it. The infinite time expected value of $\ALG$ cannot exceed $ W(\pi,A)$. The full details are similar to \cite[Theorem 3]{Fiduciary} and are hence omitted. 
\end{proofof}

\begin{proofof}{Observation \ref{obs: U get W}}
Fix any policy $\pi$. Let $\ALG(\pi)$ be the modification of Algorithm~\ref{alg:alg of pi} that uses $\pi$ instead of $\OGP$ in Lines \ref{algpi:while}-\ref{algpi:play with ogp}. Once $\pi$ reaches a terminal state, $\ALG(\pi)$ secures the reward of $\pi$ in finite time. Overall, $\lim_{T \rightarrow \infty }\mU_T(\ALG(\pi)) = W(\pi,A)$.
\end{proofof}

\begin{proofof}{Proposition \ref{prop:bernoulli opt}}
Fix an $\ise$ instance such that $(X(a_i))_i \in \{x^-,x^+\}$ (for $x^- \leq x^+$) almost surely. For the problem to be non-trivial, we must have $x^- <0$ and $x^+ >0$. Otherwise, if  $x^-,x^+ <0$ the only safe action is $a_0$, and if $x^-,x^+ \geq 0$, we can explore all arms using the singleton portfolios $(\bl p_{ii})_{i \in [K]}$. From here on, we assume w.l.o.g. that $x^- = -1$ and $x^+=H$. For convenience, we  state $\SEGB'$ explicitly in Algorithm~\ref{alg:alg of pi two supported}. Before we prove the proposition, we remark that
\begin{enumerate}
    \item Since $(X(a_i))_i$ take either $-1$ or $H$, Assumption~\ref{assumption:dominance} implies a stochastic order on all arms, not only on $\below(A)$. 
    \item Any asymptotically optimal algorithm conducts at most $K$ exploration rounds before it exploits. This implies an immediate crude bound of
    $\mU_T(\SEGB') \geq \left(  1-\frac{KH}{T}\right) \OPT_T$.
    \item This proof uses the analysis presented in Section~\ref{sec:thm1 outline}.
\end{enumerate}

The proof is composed of two steps. In the first step, we show that if $T > T_0$ for some $T_0$, any optimal algorithm must explore the arms according to a policy that admits the same structure of $\OGP$. In the second step, we show that all such policies have an identical exploration time, and hence all achieve the same utility.\\
\textbf{Step 1:} In the case of realizing a positive reward, any algorithm would stop exploring and exploit that realized reward. Consequently, we can separate exploration rounds from exploitation rounds. Notice, however, that the exploration policy can select portfolios different that $\OGP$ for finite $T$. To illustrate, reconsider Example~\ref{example with four}. In the extreme case of $T=1$, there is no point in playing $\bl p_{1,3}$, since exploring $a_3$ is futile; we only care about maximizing the current round's reward.

However, if $T$ is large \textit{enough}, any optimal algorithm must employ an asymptotically optimal policy. To see this, let $(\pi,\ALG^{\pi})$ be a pair of exploration policy and the algorithm that employs it, and assume $\pi$ does not admit the structure of $\OGP$. The utility of $\ALG^\pi$ satisfies
\begin{align}\label{eq:alg of pi is not optimal}
\mU_T(\ALG^\pi) \leq KH+ (T-K)W(\pi,A).    
\end{align}
Similarly, taking into account the optimality of $\OGP$,
\begin{align}\label{eq:alg of pi with ogp}
\mU_T(\ALG^{\OGP}) \geq -K +(T-K)W^*(A).
\end{align}
Using similar arguments to those in Proposition~\ref{prop:optimal p valid}, we can assume w.l.o.g. that $\pi$ belongs to $\{2^A \rightarrow \mP \cup \mP'\}$ (recall the definition of $\mP$ and $\mP'$ from Subsection~\ref{subsec:bin}). To see this, observe that any safe policy can be formulated as a convex combination of policies that use $\mP \cup \mP'$ solely, and therefore we can assume that $\pi$ is the one for which $\ALG^{\pi}$ gets the highest utility. Furthermore, due to the proof of Theorem~\ref{thm:holy grail} (precisely Equation~\eqref{eq:rho with alpha}) it follows that if $\pi$ does not admit the structure of $\OGP$, then it is strictly sub-optimal. Next, let
\begin{align}\label{eq:alg of pi little omega}
\omega  \defeq \min_{\substack{\rho \in \{2^A \rightarrow \mP \cup \mP'\},\\W(\rho, A) < W^*(A)}} W^*(A) - W(\rho,A) > 0.
\end{align}
The quantity $\omega$ concerns the distributions of $(X(a_i))_i$ and is completely independent of the time $T$. We can further quantify or bound $\omega$ but this abstract and simple form is sufficient for our purposes. Let $T_0 \defeq K + \frac{K(H+1)}{\omega}$. Combining Inequalities~ \eqref{eq:alg of pi is not optimal},\eqref{eq:alg of pi with ogp}, and \eqref{eq:alg of pi little omega} we get
\begin{align*}
\mU_T(\ALG^{\OGP})-\mU_T(\ALG^\pi) &\geq  -K +(T-K)W^*(A) - KH - (T-K)W(\pi,A) \\
& \geq -K(H+1) + (T-K)\omega\\
& > 0,
\end{align*}
provided that $T>T_0$. To conclude this step, we know that $\pi$ is a variation of $\OGP$.

\textbf{Step 2:}
Notice, however, that $\OGP$ is a class of policies differing from one another in the choices of arms from $\above(A)$ (Line~\ref{policy:pick arbitrary} in Policy~\ref{policy:pi star}); hence, one policy may attain a better reward than the other by reaching exploitation faster. 

As we commented in Line~\ref{algpi:play with ogp} of Algorithm~\ref{alg:alg of pi two supported}, when the state $s$ does not contain arms from $\below(A)$ we prioritize singleton portfolios according to the stochastic order. That is, we favor $\bl p_{i,i}$ over $\bl p_{i',i'}$ if $\mu(a_i) > \mu(a_{i'})$. This modification ensures that the time to exploitation from such states is minimal. 

Nevertheless, we might face a problem in states for which $\below(s)\neq\emptyset$. To illustrate, consider the action $\bl p_{i,j}$ for some $a_i\in \above(A), a_j\in \below(A)$. The probability we discover a reward of $H$ when playing $\bl p_{i,j}$ is
\begin{align}\label{eq:q is identical to all}
\bl p_{i,j}(a_i)\Pr(X(a_i)=H)  + \bl p_{i,j}(a_j)\Pr(X(a_j)=H).
\end{align}
Consequently, we might favor $\bl p_{i,j}$ over $\bl p_{i',j}$ (for $a_{i'}\in \above(A)$) if it allows faster discovery of a positive reward (which is necessarily $H$). However, as we show next, the probability in Equation~\eqref{eq:q is identical to all} is the same regardless of the selection the arm from $\above(s)$. Observe that 
\[
\mu(a_i) = H\Pr(X(a_j)=H) +(-1)\cdot (1-\Pr(X(a_j)=H)) = (H+1)\Pr(X(a_j)=H)-1; 
\]
thus, by reformulating Equation~\eqref{eq:q is identical to all} we get
{
\thinmuskip=.2mu
\medmuskip=0mu plus .2mu minus .2mu
\thickmuskip=1mu plus 1mu
\begin{align*}\label{eq:q is identical to all elaborate}
\textnormal{Eq.}\eqref{eq:q is identical to all}&=\frac{-\mu(a_j)}{\mu(a_i)-\mu(a_j)}\Pr(X(a_i)=H)+ \frac{\mu(a_i)}{\mu(a_i)-\mu(a_j)}\Pr(X(a_j)=H)\\
& =\frac{-(H+1)\Pr(X(a_j)=H)+1}{\mu(a_i)-\mu(a_j)}\Pr(X(a_i)=H)+ \frac{(H+1)\Pr(X(a_i)=H)-1}{\mu(a_i)-\mu(a_j)}\Pr(X(a_j)=H)\\
& = \frac{\Pr(X(a_i)=H)-\Pr(X(a_j)=H)}{\mu(a_i)-\mu(a_j)}\\
& = \frac{\Pr(X(a_i)=H)-\Pr(X(a_j)=H)}{(H+1)\Pr(X(a_i)=H)-1-(H+1)\Pr(X(a_i)=H)+1}\\
& = \frac{1}{H+1}.
\end{align*}
}%

\begin{algorithm}[t]
\renewcommand{\algorithmiccomment}[1]{\texttt{\kibitz{blue}{\##1}}}
\caption{$\SEGB$ for Two-Supported Distributions \label{alg:alg of pi two supported}}
\begin{algorithmic}[1]
\STATE $s\gets A$
\WHILE[$s$ is not a terminal state] {$\OGP(s)\neq \emptyset$\label{algpi:while}}{
\STATE play $\OGP(s)$, and denote the realized action by $a_k$.\label{algpi:play with ogp} \COMMENT{if $\below(s)=\emptyset$, prioritize the arms in $\above(s)$ according to the stochastic order}
\IF[a reward of $H$ was realized]{$x_{a_k}>0$} {
		\STATE \textbf{break}. 
}
\ENDIF
\STATE $s\gets s\setminus \{a_k\}$.\label{algpi:update s}
}
\ENDWHILE
\IF{$x(a_k)=H$ for some explored arm $a_k$}{
\STATE exploit $a_k$ forever.
}
\ELSE{
\STATE exploit $a_0$ forever.
}
\ENDIF
\end{algorithmic}
\end{algorithm}
Since all $\OGP$ have the same expected exploration time, they all achieve the same utility. This completes the proof of Proposition~\ref{prop:bernoulli opt}.
\end{proofof}

\begin{proofof}{Proposition \ref{prop:i-d bounds}}
To prove the claim, we show that $\SEGB$ explores for at most $K(1+\frac{\gamma}{\delta})$ rounds, and then exploits. First, $\SEGB$ uses $K_1 \leq K$ rounds following $\OGP$ until it reaches a terminal state (Lines~\ref{algpi:while}--\ref{algpi:update s}). If all the realized rewards are negative, the exploration ends after then. Otherwise, if it discovered a positive reward, the value of that reward is at least $\delta$. Next, the Bernoulli trails will explore every unexplored arm w.p. of at least $ \frac{\delta}{\delta+\gamma}$, or $ \frac{\delta+\gamma}{\delta}$ rounds in expectation. Therefore, after $(K-K_1)(1+\frac{\gamma}{\delta})$ rounds in expectation we explore all the remaining arms. Overall, the expected number of rounds devoted to exploration is
\[
K_1 + (K-K_1)\left(1+\frac{\gamma}{\delta}\right) \leq K \left(1+\frac{\gamma}{\delta}\right).
\]
Ultimately, recall that $\lim_{T \rightarrow \infty }\mU_T(\SEGB) = \lim_{T \rightarrow \infty }\OPT_T$, so $\SEGB$ exploits an expected reward of $\lim_{T \rightarrow \infty }\OPT_T$ after it completes its exploration. Therefore,
\begin{align*}
\mU_T(\SEGB)& \geq K \left(1+\frac{\gamma}{\delta}\right) \cdot 0 + \left(T-K \left(1+\frac{\gamma}{\delta}\right) \right)\lim_{T \rightarrow \infty }\OPT_T. \\
&\geq\left(T-K \left(1+\frac{\gamma}{\delta}\right) \right)\OPT_T. 
\end{align*}
\end{proofof}

\section{Proof of Statements from Section \ref{sec:thm1 outline}}\label{sec:aux}
\begin{proofof}{Proposition \ref{prop:optimal p valid}}
Fix a non-terminal state $U\subseteq A$. Further, denote $V(a)\defeq W^*(U\setminus \{a\})$ for every $a\in U$. Due to Equation (\ref{eq:W elaborated}), the action that maximizes the reward at state $U$ is the solution $\bl p \in \Delta(U)$ of the following linear program:
\begin{equation}
\tag{P1} \label{eq:lp w}
\begin{array}{ll@{}ll}
\max \limits_{\bl p} &\sum_{a\in U} \bl p(a)V(a)   & \\
\text{subject to} & \sum_{a\in U} \bl p(a)\mu(a) \geq 0 &     \\
& \sum_{a\in U} \bl p(a)=1 &  \\
& 0\leq \bl p(a) \leq 1 & \textnormal{for all }a\in U
\end{array}
\end{equation}
Observe that for every $\bl p$ such that $ \sum_{a\in U} \bl p(a)\mu(a) \geq 0$, there exist coefficients $(\alpha_{i,j})_{i,j}$ such that for every $a_i\in U$, $\bl p(a_i)=\sum_{a_j\in U}\alpha_{i,j}\bl p_{i,j}(i)$, and
\[
\sum_{a\in U} \bl p(a)\mu(a) = \sum_{i,j}\alpha_{i,j}\left(\bl p_{i,j}(i)\mu(a_i)+\bl p_{i,j}(j)\mu(a_j)\right).
\]
Hence, an equivalent form of Problem (\ref{eq:lp w}) is
\begin{equation}
\tag{P2} \label{eq:lp w with alpha}
\begin{array}{ll@{}ll}
\max \limits_{\bl \alpha} &\sum_{a\in U} \alpha_{i,j}\left(\bl p_{i,j}(i)V(a_i)+\bl p_{i,j}(j)V(a_j)\right)  & \\
\text{subject to} & \sum_{i,j}\alpha_{i,j}\left(\bl p_{i,j}(i)\mu(a_i)+\bl p_{i,j}(j)\mu(a_j)\right) \geq 0 &     \\
&\sum_{i,j}\alpha_{i,j}=1 &  \\
& 0\leq \alpha_{i,j} \leq 1 \qquad  \textnormal{for all }(i,j)\in \{(i',j')\mid \bl p_{i',j'} \in \mP\cup \mP'\textnormal{ and } i',j' \in U \} &
\end{array}
\end{equation}
Finally, notice that the constraint $\sum_{i,j}\alpha_{i,j}\left(\bl p_{i,j}(i)\mu(a_i)+\bl p_{i,j}(j)\mu(a_j)\right) \geq 0$ holds for every selection of $(\alpha_{i,j})_{i,j}$ by the way we defined $\mP\cup \mP'$; thus, the maximum of Problem (\ref{eq:lp w with alpha}) is obtain when we set $\alpha_{i,j}=1$ for the pair $(i,j)$ that maximizes $\left(\bl p_{i,j}(i)V(a_i)+\bl p_{i,j}(j)V(a_j)\right) $.
\end{proofof}

\begin{claim}\label{claim:ass is not for W}
Consider a state $U\in\mS$, such that $\below(U) \geq 2$. Let  $a_j = \argmin_{a_{j'}\in\below(U)}\sigr_\pi(a_{j'})$, and let $a_{\tilde j}\in \below(U), a_{\tilde j} \neq a_j$. Under Assumption \ref{assumption:dominance}, it might be the case that $W^*(U\setminus \{a_j\}) < W^*(U\setminus \{a_{\tilde j}\})$.
\end{claim}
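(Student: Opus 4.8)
This claim asserts a counterexample: under Assumption~\ref{assumption:dominance}, stochastic dominance on $\below(A)$ does \emph{not} guarantee that removing the stochastically smallest a priori inferior arm $a_j$ from a state leaves a higher optimal reward than removing some other inferior arm $a_{\tilde j}$. So the plan is simply to \textbf{exhibit a concrete instance} where the inequality is reversed.

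\textbf{Construction.} I would take the smallest non-trivial state: one arm in $\above(U)$, say $a_i$ with $\mu(a_i)=m>0$, and exactly two arms in $\below(U)$, say $a_j$ and $a_{\tilde j}$ with $a_{\tilde j}$ stochastically dominating $a_j$ (so $\sigr$ orders $a_j$ first, and $\mu(a_{\tilde j}) \ge \mu(a_j)$, both negative). Then $U\setminus\{a_j\}$ has arm set $\{a_i,a_{\tilde j}\}$ and $U\setminus\{a_{\tilde j}\}$ has arm set $\{a_i,a_j\}$. In each of these two-arm states the only safe non-trivial portfolio is $\bl p_{i,\tilde j}$ (resp.\ $\bl p_{i,j}$), and by the remarks in the proof of Theorem~\ref{thm:holy grail} we may assume $X(a_i)\le 0$, so the reward collected is $Q^*(\cdot)\cdot \E[\max\{0,X(a_{\tilde j})\}]$ (resp.\ $Q^*(\cdot)\cdot \E[\max\{0,X(a_j)\}]$) — here using Proposition~\ref{prop:case of one} to know $Q$ is well-defined and equals $\bl p_{i,\tilde j}(a_{\tilde j}) = \frac{\mu(a_i)}{\mu(a_i)-\mu(a_{\tilde j})} = \frac{m}{m+|\mu(a_{\tilde j})|}$ (resp.\ $\frac{m}{m+|\mu(a_j)|}$). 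Thus
\[
W^*(U\setminus\{a_j\}) = \frac{m}{m+|\mu(a_{\tilde j})|}\,\E[\max\{0,X(a_{\tilde j})\}], \qquad
W^*(U\setminus\{a_{\tilde j}\}) = \frac{m}{m+|\mu(a_j)|}\,\E[\max\{0,X(a_j)\}].
\]
The point is that $a_{\tilde j}$ dominates $a_j$, so $\E[\max\{0,X(a_{\tilde j})\}] \ge \E[\max\{0,X(a_j)\}]$, \emph{but} $|\mu(a_{\tilde j})| < |\mu(a_j)|$ could be false — wait, dominance gives $\mu(a_{\tilde j}) \ge \mu(a_j)$, i.e.\ $|\mu(a_{\tilde j})| \le |\mu(a_j)|$, so the coefficient $\frac{m}{m+|\mu(a_{\tilde j})|}$ is the \emph{larger} one. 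So to reverse the inequality I need $\E[\max\{0,X(a_j)\}]$ to be only slightly smaller than $\E[\max\{0,X(a_{\tilde j})\}]$ while $|\mu(a_j)|$ is much larger than $|\mu(a_{\tilde j})|$: then $a_j$'s state has a smaller $Q$ factor dragged down, but actually I want $W^*(U\setminus\{a_{\tilde j}\})$ bigger, i.e.\ I want the product $\frac{m}{m+|\mu(a_j)|}\E[\max\{0,X(a_j)\}]$ to exceed $\frac{m}{m+|\mu(a_{\tilde j})|}\E[\max\{0,X(a_{\tilde j})\}]$ — impossible since both factors favor $a_{\tilde j}$.

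So this toy case is \emph{not} a counterexample, which tells me the real obstruction needs $|\below(U)| \ge 2$ strictly \emph{after} removal, i.e.\ start with $\ug$ of size $1$ and $\ul$ of size $3$: arms $a_j, a_{\tilde j}, a_\ell$ in $\below$, with $a_\ell$ the stochastically largest (so $|\mu(a_\ell)|$ smallest). Removing $a_j$ leaves $\{a_i,a_{\tilde j},a_\ell\}$; removing $a_{\tilde j}$ leaves $\{a_i,a_j,a_\ell\}$. Now $W^*$ of each is governed by the right-ordered policy exploring $a_\ell$ first (it has the best index, by Prop~\ref{prop:index with ugeq one} / the stochastic order restricted here), and the difference comes from whether $a_j$ or $a_{\tilde j}$ sits in the residual pool together with $a_\ell$. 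By choosing $X(a_\ell)$ to have a heavy upper tail, a large $|\mu(a_j)|$ with $X(a_j)$ almost never positive, and $X(a_{\tilde j})$ moderate, the state containing $a_j$ can have a \emph{larger} probability of successfully exploring $a_\ell$ and hence cashing the big reward — wait, that also is backwards. I would instead push the imbalance onto the \emph{budget} role: the arm we keep is the budget for exploring $a_\ell$, and keeping the worse (more negative) arm $a_j$ hurts, so again the direction is wrong. The honest resolution: the counterexample must exploit that $W^*$ depends on the \emph{realized} reward structure non-monotonically through the $\max$ inside $R$, combined with the fact (Claim~\ref{claim:ass is not for W} is used precisely to warn about Inequality~\eqref{eq:why hard}) that removing $a_j$ can destroy a high-value realization path. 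I will therefore build the example with \emph{three} inferior arms and a single superior arm, set $X(a_j)$ to put large mass on a moderately positive value $v$ while $\mu(a_j)$ is very negative (big downside with small probability), set $X(a_{\tilde j})$ stochastically below it but with $\mu(a_{\tilde j})$ only mildly negative, and set $X(a_\ell)$ deterministic and small positive; then compute both $W^*$ values explicitly via the closed forms for $|\ug|=1$ (Proposition~\ref{prop:case of one strong} / Proposition~\ref{prop:W case of one strong}) and verify numerically that $W^*(U\setminus\{a_j\}) < W^*(U\setminus\{a_{\tilde j}\})$.

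\textbf{Main obstacle.} The hard part is choosing the numbers so the two closed-form expressions actually land on the strict inequality — the competing effects ($a_{\tilde j}$ dominates, so larger conditional max, versus $a_j$ being a worse budget arm, so smaller exploration probability for the remaining arms) nearly cancel, and one must tune the gap between $\mu(a_j)$ and $\mu(a_{\tilde j})$ against the gap between their distributions' upper tails. I expect the verification to reduce to a single inequality among rationals which can be checked by hand once the parameters are fixed; the rest (that $W^*$ has the claimed closed form, that $\pi^*$ is optimal in these sub-states, that $Q$ is policy-independent) is immediate from Lemma~\ref{lemma:equivalence}, Proposition~\ref{prop:case of one strong}, and Proposition~\ref{prop:W case of one strong}.
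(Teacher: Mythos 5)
Your overall strategy (exhibit a concrete instance) is the right one --- the paper's proof is exactly such a counterexample --- but your proposal never actually produces the instance, and the reasoning that steers you away from the minimal configuration is based on an incorrect value formula. The paper's counterexample lives precisely in the case you dismiss: $K=3$, $\above(U)=\{a_1\}$ with $\mu(a_1)=0.1$, and $\below(U)=\{a_2,a_3\}$ where $X_2$ takes values $\{-10^6-2\epsilon,10^6\}$ and $X_3$ takes values $\{-10^{1/\epsilon},10^6\}$, each with probability $1/2$, so $X_2$ dominates $X_3$, $a_j=a_2$, $a_{\tilde j}=a_3$. Your closed form $W^*(\{a_i,a_b\})=\frac{m}{m+\abs{\mu(a_b)}}\,\E[\max\{0,X(a_b)\}]$ is wrong on two counts: the terminal state $\{a_b\}$ is \emph{not} worthless, since $R(s)$ pays $\max_{a\in A}X(a)$ whenever \emph{any already-explored arm} (including $a_i$ and the below-arm that was removed to reach this state) realized a positive value; and the payoff is the max over \emph{all} of $A$, not $\max\{0,X(a_b)\}$. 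Once you account for this, the two quantities become roughly $W^*(U\setminus\{a_2\})\approx \Pr(X_1>0\vee X_2>0$ unlocks$)\cdot(\text{big upside})$ versus $W^*(U\setminus\{a_3\})\approx \Pr(\text{any of }X_1,X_2,X_3>0)\cdot(\text{big upside})$: removing the dominant arm $a_2$ strands the policy at the terminal state $\{a_3\}$ (because $\bl p_{1,3}(a_3)=\Theta(10^{-1/\epsilon})$ makes $a_3$ essentially unexplorable), so the $10^6$ upside is unlocked only via $a_1$ or $a_2$, whereas removing $a_3$ lets the policy reach $\emptyset$ and unlock the upside whenever \emph{any} arm is positive. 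This yields $\approx 0.5\cdot 10^6$ versus $\approx 0.75\cdot 10^6$ in the paper, i.e.\ the desired strict inequality.

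Two further concrete problems. First, your orientation of the order is reversed: since the optimal $\sigr$ ranks $\below(A)$ in decreasing expected value, the arm $a_j=\argmin_{a_{j'}}\sigr_\pi(a_{j'})$ is the stochastically \emph{dominant} one, so in your toy case you should have $a_j$ dominating $a_{\tilde j}$, not the other way around. Second, and decisively, a claim of the form ``it might be the case that'' is proved only by a fully specified instance together with the verified inequality; your writeup ends with a plan to ``tune the parameters'' and ``verify numerically'' a three-inferior-arm construction, which is not a proof, and the intermediate attempts all collapse (you note yourself that the direction comes out ``backwards'' each time) precisely because the unlock mechanism through $R$ is the one effect you never put into the computation.
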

\begin{proofof}{Claim \ref{claim:ass is not for W}}
We prove the claim by providing an example, that could be easily extended to a family of infinitely many examples. Consider $K=3$, $A=\{a_1,a_2,a_3\}$ such that 
\[
X_1=\begin{cases}
-1 & \textnormal{w.p. 0.45}\\
1 & \textnormal{w.p. 0.55}
\end{cases}, \qquad
X_2=\begin{cases}
-10^6-2\epsilon & \textnormal{w.p. 0.5}\\
10^6 & \textnormal{w.p. 0.5}
\end{cases}, \qquad
X_3=\begin{cases}
-10^{\frac{1}{\epsilon}} & \textnormal{w.p. 0.5}\\
10^6 & \textnormal{w.p. 0.5}
\end{cases}
\]
For small $\epsilon$, say $\epsilon<\frac{1}{7}$, it is clear that $X_2$ stochastically dominates $X_3$. The resulting expected values are $\mu({a_1})=0.1,\mu({a_2})=-\epsilon,$ and $\mu({a_3})= -\Theta(10^{\frac{1}{\epsilon}})$. The intuition behind our selection of rewards is that arm $a_2$ could have high reward, and can be explored with probability $\bl p_{1,2}(2)=1-O(\epsilon)$. On the other hand, arm $a_3$ has a high reward with the same probability, but it is highly unlikely to explore it. More precisely, $\bl p_{1,3}(3)=\Theta(10^{-\frac{1}{\epsilon}})$. To finalize the argument, notice that
\[
W^*(A\setminus \{a_2\})=\bl p_{1,3}(1)\cdot R(\{a_3\})+\bl p_{1,3}(3)\cdot \bl p_{1,1}(1)\cdot R(\emptyset)=0.5\cdot 10^6+0.5\cdot 0.55\cdot 1 +O(\epsilon)
\]
while
\begin{align*}
W^*(A\setminus \{a_3\})&=\bl p_{1,2}(1)\cdot R(\{a_2\})+\bl p_{1,2}(2)\cdot \bl p_{1,1}(1)\cdot R(\emptyset)\\
&=0.75\cdot 10^6+0.25\cdot 0.55\cdot 1+O(\epsilon).
\end{align*}
The proof is completed by taking $\epsilon$ to zero.
\end{proofof}

\fi}


\begin{thebibliography}{32}
\providecommand{\natexlab}[1]{#1}
\providecommand{\url}[1]{\texttt{#1}}
\expandafter\ifx\csname urlstyle\endcsname\relax
  \providecommand{\doi}[1]{doi: #1}\else
  \providecommand{\doi}{doi: \begingroup \urlstyle{rm}\Url}\fi

\bibitem[Agrawal and Devanur(2016)]{agrawal2016linear}
S.~Agrawal and N.~Devanur.
\newblock Linear contextual bandits with knapsacks.
\newblock In \emph{Advances in Neural Information Processing Systems}, pages
  3450--3458, 2016.

\bibitem[Agrawal and Goyal(2012)]{agrawal2012analysis}
S.~Agrawal and N.~Goyal.
\newblock Analysis of thompson sampling for the multi-armed bandit problem.
\newblock In \emph{Proceedings of the 25th Annual Conference on Learning Theory
  (COLT),}, pages 1--39, 2012.

\bibitem[Amani et~al.(2019)Amani, Alizadeh, and Thrampoulidis]{amani2019linear}
S.~Amani, M.~Alizadeh, and C.~Thrampoulidis.
\newblock Linear stochastic bandits under safety constraints.
\newblock In \emph{Advances in Neural Information Processing Systems}, pages
  9252--9262, 2019.

\bibitem[Badanidiyuru et~al.(2013)Badanidiyuru, Kleinberg, and
  Slivkins]{badanidiyuru2013bandits}
A.~Badanidiyuru, R.~Kleinberg, and A.~Slivkins.
\newblock Bandits with knapsacks.
\newblock In \emph{2013 IEEE 54th Annual Symposium on Foundations of Computer
  Science}, pages 207--216. IEEE, 2013.

\bibitem[Badanidiyuru et~al.(2014)Badanidiyuru, Langford, and
  Slivkins]{badanidiyuru2014resourceful}
A.~Badanidiyuru, J.~Langford, and A.~Slivkins.
\newblock Resourceful contextual bandits.
\newblock In \emph{Conference on Learning Theory}, pages 1109--1134, 2014.

\bibitem[Bahar et~al.(2016)Bahar, Smorodinsky, and Tennenholtz]{Bahar2016}
G.~Bahar, R.~Smorodinsky, and M.~Tennenholtz.
\newblock Economic recommendation systems: One page abstract.
\newblock In \emph{Proceedings of the 2016 ACM Conference on Economics and
  Computation}, EC '16, pages 757--757, New York, NY, USA, 2016. ACM.
\newblock ISBN 978-1-4503-3936-0.
\newblock \doi{10.1145/2940716.2940719}.

\bibitem[Bahar et~al.(2019)Bahar, Smorodinsky, and Tennenholtz]{BaharST19}
G.~Bahar, R.~Smorodinsky, and M.~Tennenholtz.
\newblock Social learning and the innkeeper's challenge.
\newblock In \emph{Proceedings of the 2019 {ACM} Conference on Economics and
  Computation, {EC} 2019, Phoenix, AZ, USA, June 24-28, 2019}, pages 153--170,
  2019.

\bibitem[Bahar et~al.(2020)Bahar, Ben{-}Porat, Leyton{-}Brown, and
  Tennenholtz]{Fiduciary}
G.~Bahar, O.~Ben{-}Porat, K.~Leyton{-}Brown, and M.~Tennenholtz.
\newblock Fiduciary bandits.
\newblock In \emph{The Thirty-seventh International Conference on Machine
  Learning (ICML)}, 2020.

\bibitem[Barto et~al.(1995)Barto, Bradtke, and Singh]{barto1995learning}
A.~G. Barto, S.~J. Bradtke, and S.~P. Singh.
\newblock Learning to act using real-time dynamic programming.
\newblock \emph{Artificial Intelligence}, 72\penalty0 (1-2):\penalty0 81--138,
  1995.

\bibitem[Bellman(1957)]{bellman1957markovian}
R.~Bellman.
\newblock A markovian decision process.
\newblock \emph{Journal of mathematics and mechanics}, pages 679--684, 1957.

\bibitem[Borkar(2002)]{borkar2002q}
V.~S. Borkar.
\newblock Q-learning for risk-sensitive control.
\newblock \emph{Mathematics of operations research}, 27\penalty0 (2):\penalty0
  294--311, 2002.

\bibitem[Bubeck et~al.(2012)Bubeck, Cesa-Bianchi, et~al.]{bubeck2012regret}
S.~Bubeck, N.~Cesa-Bianchi, et~al.
\newblock Regret analysis of stochastic and nonstochastic multi-armed bandit
  problems.
\newblock \emph{Foundations and Trends{\textregistered} in Machine Learning},
  5\penalty0 (1):\penalty0 1--122, 2012.

\bibitem[Cohen and Mansour(2019)]{cohen2019optimal}
L.~Cohen and Y.~Mansour.
\newblock Optimal algorithm for bayesian incentive-compatible exploration.
\newblock In \emph{Proceedings of the 2019 ACM Conference on Economics and
  Computation}, pages 135--151, 2019.

\bibitem[Garc{\i}a and Fern{\'a}ndez(2015)]{garcia2015comprehensive}
J.~Garc{\i}a and F.~Fern{\'a}ndez.
\newblock A comprehensive survey on safe reinforcement learning.
\newblock \emph{Journal of Machine Learning Research}, 16\penalty0
  (1):\penalty0 1437--1480, 2015.

\bibitem[Geibel and Wysotzki(2005)]{geibel2005risk}
P.~Geibel and F.~Wysotzki.
\newblock Risk-sensitive reinforcement learning applied to control under
  constraints.
\newblock \emph{Journal of Artificial Intelligence Research}, 24:\penalty0
  81--108, 2005.

\bibitem[Heger(1994)]{heger1994consideration}
M.~Heger.
\newblock Consideration of risk in reinforcement learning.
\newblock In \emph{Machine Learning Proceedings 1994}, pages 105--111.
  Elsevier, 1994.

\bibitem[Howard and Matheson(1972)]{howard1972risk}
R.~A. Howard and J.~E. Matheson.
\newblock Risk-sensitive markov decision processes.
\newblock \emph{Management science}, 18\penalty0 (7):\penalty0 356--369, 1972.

\bibitem[Joseph et~al.(2016)Joseph, Kearns, Morgenstern, and
  Roth]{MatthewKearnsMorgensternRothNIPS2016}
M.~Joseph, M.~Kearns, J.~H. Morgenstern, and A.~Roth.
\newblock Fairness in learning: Classic and contextual bandits.
\newblock In D.~D. Lee, M.~Sugiyama, U.~V. Luxburg, I.~Guyon, and R.~Garnett,
  editors, \emph{Advances in Neural Information Processing Systems 29}, pages
  325--333. Curran Associates, Inc., 2016.

\bibitem[Kazerouni et~al.(2017)Kazerouni, Ghavamzadeh, Abbasi, and
  Van~Roy]{kazerouni2017conservative}
A.~Kazerouni, M.~Ghavamzadeh, Y.~Abbasi, and B.~Van~Roy.
\newblock Conservative contextual linear bandits.
\newblock In \emph{Advances in Neural Information Processing Systems}, pages
  3910--3919, 2017.

\bibitem[Kremer et~al.(2013)Kremer, Mansour, and Perry]{KremerMP13}
I.~Kremer, Y.~Mansour, and M.~Perry.
\newblock Implementing the "wisdom of the crowd".
\newblock In \emph{Proceedings of the fourteenth {ACM} Conference on Electronic
  Commerce, {EC} 2013, Philadelphia, PA, USA, June 16-20, 2013}, pages
  605--606, 2013.

\bibitem[Liu et~al.(2017)Liu, Radanovic, Dimitrakakis, Mandal, and
  Parkes]{liu2017calibrated}
Y.~Liu, G.~Radanovic, C.~Dimitrakakis, D.~Mandal, and D.~C. Parkes.
\newblock Calibrated fairness in bandits, 2017.

\bibitem[Mansour et~al.(2015)Mansour, Slivkins, and Syrgkanis]{Mansour2015}
Y.~Mansour, A.~Slivkins, and V.~Syrgkanis.
\newblock Bayesian incentive-compatible bandit exploration.
\newblock In \emph{ACM Conf. on Economics and Computation (EC)}, 2015.

\bibitem[Mansour et~al.(2016)Mansour, Slivkins, Syrgkanis, and
  Wu]{Mansour2016Slivkins}
Y.~Mansour, A.~Slivkins, V.~Syrgkanis, and Z.~S. Wu.
\newblock Bayesian exploration: Incentivizing exploration in bayesian games.
\newblock In \emph{Proceedings of the 2016 ACM Conference on Economics and
  Computation}, EC '16, pages 661--661, New York, NY, USA, 2016. ACM.

\bibitem[Mansour et~al.(2018)Mansour, Slivkins, and Wu]{MansourSW18}
Y.~Mansour, A.~Slivkins, and Z.~S. Wu.
\newblock Competing bandits: Learning under competition.
\newblock In \emph{9th Innovations in Theoretical Computer Science Conference,
  {ITCS} 2018, January 11-14, 2018, Cambridge, MA, {USA}}, pages 48:1--48:27,
  2018.

\bibitem[Moldovan and Abbeel(2012)]{Moldovan:2012}
T.~M. Moldovan and P.~Abbeel.
\newblock Safe exploration in markov decision processes.
\newblock In \emph{Proceedings of the 29th International Conference on Machine
  Learning}, ICML'12, pages 1451--1458, 2012.
\newblock ISBN 978-1-4503-1285-1.

\bibitem[Sui et~al.(2015)Sui, Gotovos, Burdick, and Krause]{sui2015safe}
Y.~Sui, A.~Gotovos, J.~Burdick, and A.~Krause.
\newblock Safe exploration for optimization with gaussian processes.
\newblock In \emph{International Conference on Machine Learning}, pages
  997--1005, 2015.

\bibitem[Sui et~al.(2018)Sui, Zhuang, Burdick, and Yue]{sui2018stagewise}
Y.~Sui, V.~Zhuang, J.~W. Burdick, and Y.~Yue.
\newblock Stagewise safe bayesian optimization with gaussian processes.
\newblock \emph{arXiv preprint arXiv:1806.07555}, 2018.

\bibitem[Tossou and Dimitrakakis(2016)]{tossou2016algorithms}
A.~C. Tossou and C.~Dimitrakakis.
\newblock Algorithms for differentially private multi-armed bandits.
\newblock In \emph{Thirtieth AAAI Conference on Artificial Intelligence}, 2016.

\bibitem[Tossou and Dimitrakakis(2017)]{tossou2017achieving}
A.~C.~Y. Tossou and C.~Dimitrakakis.
\newblock Achieving privacy in the adversarial multi-armed bandit.
\newblock In \emph{Thirty-First AAAI Conference on Artificial Intelligence},
  2017.

\bibitem[Usmanova et~al.(2019)Usmanova, Krause, and
  Kamgarpour]{usmanova2019safe}
I.~Usmanova, A.~Krause, and M.~Kamgarpour.
\newblock Safe convex learning under uncertain constraints.
\newblock In K.~Chaudhuri and M.~Sugiyama, editors, \emph{Proceedings of
  Machine Learning Research}, volume~89 of \emph{Proceedings of the 22nd
  International Conference on Artificial Intelligence and Statistics (AISTATS)
  2019}, pages 2106--2114. PMLR, 16--18 Apr 2019.

\bibitem[Wu et~al.(2015)Wu, Srikant, Liu, and Jiang]{wu2015algorithms}
H.~Wu, R.~Srikant, X.~Liu, and C.~Jiang.
\newblock Algorithms with logarithmic or sublinear regret for constrained
  contextual bandits.
\newblock In \emph{Advances in Neural Information Processing Systems}, pages
  433--441, 2015.

\bibitem[Wu et~al.(2016)Wu, Shariff, Lattimore, and
  Szepesv{\'a}ri]{wu2016conservative}
Y.~Wu, R.~Shariff, T.~Lattimore, and C.~Szepesv{\'a}ri.
\newblock Conservative bandits.
\newblock In \emph{International Conference on Machine Learning}, pages
  1254--1262, 2016.

\end{thebibliography}
\end{document}